\def\focs{0}
\newcommand{\ignore}[1]{}
\newtheorem{theorem}{Theorem}
\newtheorem{lemma}[theorem]{Lemma}
\newtheorem{claim}[theorem]{Claim}
\newtheorem{proposition}[theorem]{Proposition}
\newtheorem{corollary}[theorem]{Corollary}
\newtheorem{definition}{Definition}
\newtheorem{assumption}{Assumption}
\theoremstyle{definition}
 \newcommand{\EX}{\hbox{\bf E}}
\newcommand{\pr}{{\rm Pr}}
 \newcommand{\var}{\hbox{\bf Var}}
\renewcommand{\th}{^\textrm{th}}
\newcommand{\wht}{\widehat{t}}
\def\eqdef{~\triangleq~}
\def\eps{\epsilon}
\def\bar{\overline}
\def\ceil#1{\lceil {#1} \rceil}
\newenvironment{proofof}[1]{\smallskip\noindent{\bf Proof of #1:}}%
        {\hspace*{\fill}$\Box$\par}
\newcommand{\om}{\overline{m}}
\newcommand{\ot}{\overline{t}}
\newcommand{\lt}{\widetilde{t}}
\newcommand{\ce}{c_H}
\newcommand{\tr}{{t}}
\newcommand{\Tr}{{T}}
\newcommand{\cE}{{\cal E}}
\newcommand{\Sec}[1]{\hyperref[sec:#1]{\S\ref*{sec:#1}}} 
\newcommand{\Eqn}[1]{\hyperref[eq:#1]{(\ref*{eq:#1})}} 
\newcommand{\Fig}[1]{\hyperref[fig:#1]{Fig.\,\ref*{fig:#1}}} 
\newcommand{\Tab}[1]{\hyperref[tab:#1]{Tab.\,\ref*{tab:#1}}} 
\newcommand{\Thm}[1]{\hyperref[thm:#1]{Theorem\,\ref*{thm:#1}}} 
\newcommand{\Lem}[1]{\hyperref[lem:#1]{Lemma\,\ref*{lem:#1}}} 
\newcommand{\Prop}[1]{\hyperref[prop:#1]{Prop.~\ref*{prop:#1}}} 
\newcommand{\Cor}[1]{\hyperref[cor:#1]{Corollary~\ref*{cor:#1}}} 
\newcommand{\Def}[1]{\hyperref[def:#1]{Definition~\ref*{def:#1}}} 
\newcommand{\Alg}[1]{\hyperref[alg:#1]{Alg.~\ref*{alg:#1}}} 
\newcommand{\Ex}[1]{\hyperref[ex:#1]{Ex.~\ref*{ex:#1}}} 
\newcommand{\Clm}[1]{\hyperref[clm:#1]{Claim~\ref*{clm:#1}}} 
\newcommand{\Assum}[1]{\hyperref[assump:#1]{Assumption~\ref*{assump:#1}}} 
\def\poly{{\rm poly}}
\newcommand{\polylog}{\poly(\log n, 1/\eps)}
\newcommand{\wt}{\hbox{wt}}
\newcommand{\heav}{{\tt Heavy}}
\newcommand{\triest}{{\tt Estimate-with-advice}}
\newcommand{\triestt}{{\tt Estimate}}
\newcommand\numberthis{\addtocounter{equation}{1}\tag{\theequation}}
\newcommand{\mA}{\mathcal{A}}
\newcommand{\mG}{\mathcal{G}}
\newcommand{\mD}{\mathcal{D}}
\newcommand{\oW}{\overline{W}}
\newcommand{\tmD}{\widetilde{\mathcal{D}}}
\newcommand{\tP}{{\widetilde{P}}}
\newcommand{\ora}[1]{\overrightarrow{#1}}
\newcommand{\aux}{\mathcal{A}_{\pi,(u,v)}}
\newcommand{\lf}{\left\lfloor}
\newcommand{\rf}{\right\rfloor}
\newcommand{\sq}{\langle u,v,u',v'\rangle}
\newcommand{\altSq}{\langle x,y,x',y'\rangle}
\newcommand{\sect}[2][]{\section#1{#2}}
\newcommand{\subsect}[2][]{\subsection#1{#2}}
\newcommand{\subsubsect}[2][]{\subsubsection#1{#2}}
\renewcommand{\Pr}{\mathrm{Pr}}
\def\nofigures{0}
\title{Counting Triangles in Sub-linear Time}
\date{}
\tikzset{dotted pattern/.style args={#1}{
		postaction=decorate,
		decoration={
			markings,
			mark=between positions 0.25 and 0.75 step 0.25 with {
				\fill[radius=#1] (0,0) circle;
			}
		}
	},
	dotted pattern/.default={1pt},
}
	\tikzstyle{client}=[draw, circle, minimum size=2ex, inner sep=0,fill=black] 
	\tikzstyle{server}=[draw, circle, minimum size=2ex, inner sep=0,fill=black]
	\tikzstyle{cycle}=[draw,circle,inner sep=0,minimum size=2ex,fill=black]
	\tikzstyle{small cycle}=[draw,circle,inner sep=0,minimum size=.5ex,fill=black]
	\newcommand\DrawCompleteBipartite{
		\foreach \i in {1,...,3} {
			\node[client] (l\i) at (0,\i) {} ;
			\node[server] (r\i) at (2,\i) {} ;
		}
		\foreach \i in {5,...,10} {
			\node[client] (l\i) at (0,\i) {} ;
			\node[server] (r\i) at (2,\i) {} ;
		}
		
		\path[dotted pattern=1.0pt] (l3) -- (l5);
		\path[dotted pattern=1.0pt] (r3) -- (r5);
		
		\foreach \i in {1,2,3,5,6,7,8,9,10} {
			\foreach \j in {1,2,3,5,6,7,8,9,10} {
				\ifthenelse {\i = \j}	{ } {
					\draw[ultra thin] (l\i) -- (r\j);
				};
			}
		}
		
		\ifthenelse{\matching=1}
		{ 
			\foreach \i/\j in {l1/l2, l5/l6, l7/l8} {
				\draw[ultra thick,blue] (\i) to[bend left=40] (\j); 
			}
			\foreach \i/\j in {r1/r2, r5/r6, r7/r8} {
				\draw[ultra thick,blue] (\i) to[bend right=40] (\j); 
			}
			
			\foreach \i in {1,2,5,6,7,8} {
				\draw[ultra thick,red] (l\i) -- (r\i);
	
			}
			\foreach \i in {3,9,10} {
				\draw[thick] (l\i) -- (r\i);
			}          
			
		} 
		{ 
			\foreach \i in {1,2,3,5,6,7,8,9,10} {
				\draw[thin] (l\i) -- (r\i);
			}          
		}; 
	}
\newcommand{\DrawTwoFamilies}{
	\begin{tikzpicture}[x=1cm,scale=0.35]
\tikzstyle{cycle}=[draw,circle,inner sep=0.1cm, fill=black]
\tikzstyle{client}=[draw,circle,inner sep=0.1cm, fill=black]
\tikzstyle{server}=[draw,circle,inner sep=0.1cm, fill=black]
\begin{scope}[shift={(-1.5,0)}]
	\begin{scope}[shift={(2.5,-1.5)},scale=1.5]
			\DrawIndependentSet
			\draw (3.3,4.5) -- (3.3,-2.5);
	\end{scope}
	\begin{scope}[shift={(1.5,-0.5)},scale=1.5]
			\DrawIndependentSet
	\end{scope}
	\begin{scope}[shift={(-0.5,-1.5)},scale=1.5]
			\DrawIndependentSet
	\end{scope}
	\draw [thick, decoration={ brace, raise=0.5cm,amplitude=10pt}, decorate] (-3.5,-4) -- (-3.5,4.7) node [midway, left=0.5cm+10pt] {$n$} ;
\end{scope}		
				
	\begin{scope}[shift={(16,0)},scale=0.8] 
		\draw [thick, decoration={ brace, raise=0.5cm,amplitude=10pt}, decorate] 
		(-5,-6) -- (-5,6) node [midway, left=0.5cm+10pt] {$\Delta^{1/3}$} ;
			\foreach \x in {1,...,8}{
				\pgfmathparse{(\x-1)*45+floor(\x/9)*22.5}
				\node[draw,circle,inner sep=0.1cm, fill=black] (N-\x) at (\pgfmathresult:5.4cm) {};
			}
			
			\foreach \x [count=\xi from 1] in {1,...,8}{
				\foreach \y in {\x,...,8}{
					\path (N-\xi) edge[-] (N-\y);
				}
			}

%
%
	\end{scope}
	\begin{scope}[shift={(25.5,-1.5)},scale=1.5]
		\DrawIndependentSet
			\begin{scope}[shift={(0.8,-0.5)}]
				\DrawIndependentSet
			\end{scope}
		\draw [thick, decoration={ brace, raise=0.5cm,amplitude=10pt}, decorate] (2.3,3.5) -- (2.3,-2) node [midway, right=0.5cm+10pt] {$n-\Delta^{1/3}$} ;
	\end{scope}

	\end{tikzpicture}
}
	\tikzstyle{witness}=[draw, circle, minimum size=1.ex, inner sep=0,fill=black] 			
	\tikzstyle{non_witness}=[draw, circle, minimum size=1.ex, inner sep=0,fill=black] 			
	\newcommand{\goodbadgraph}[3] 
	{
		\node [witness, label=left:{}] (origin) at #1 {};
		\node [non_witness, label=right:{}] (upper) at ($#1+(#2*1.5,1.5)$) {};
		\node [non_witness, label=right:{}] (mid) at ($#1+(#2*1.5,1)$) {};
		\node [non_witness, label=right:{}] (lower) at ($#1+(#2*1.5,-0.7)$) {};
		
		\draw (mid)--(origin);
		\draw (upper)--(origin);
		\draw (lower)--(origin);
					
		\ifthenelse{#2=1}
		{
			\tkzMarkAngle[size=0.5, dotted ](lower,origin,mid);
			\coordinate [label=right:$d_{#3}$] (label) at ($#1+(0.5,0)$);
			\coordinate [] (additional) at ($#1+(#2*1.5,1.5)$);
			\draw (additional)--(origin);
		}
		{
			\tkzMarkAngle[size=0.5, dotted,thick](mid,origin,lower);
			\coordinate [label=right:$d_{#3}$] (label) at ($#1+(-2.,0)$);
		};
}
	\newcommand{\DrawBadAndGoodBipartite} {
		\begin{scope}[]
		\draw [thick, decoration={ brace, raise=0.5cm,amplitude=10pt}, decorate] (0,-4) -- (0,.5) node [midway, left, xshift=-0.5cm-10pt,align=center] {witness \\ graphs} ;
		\goodbadgraph{(0,0)}{1}{w};
	    \goodbadgraph{(0,-3.4)}{1}{w};
		\path[dotted pattern] (0,0)--(0,-3.2);

%
		\draw [thick, decoration={ brace, raise=0.5cm,amplitude=10pt}, decorate] (1.5,1.5) -- (1.5,-4.55) node [midway, right=0.5cm+10pt, align=center] {non-witness\\ graphs} ;
		\end{scope}
	}
	\tikzstyle{client}=[draw, circle, minimum size=1ex, inner sep=0,fill=black] 
	\tikzstyle{server}=[draw, circle, minimum size=1ex, inner sep=0,fill=black] 
	\tikzstyle{cycle}=[draw,circle,radius=0.1,fill=black]
	\newcommand\DrawBaseGraph
	\tikzstyle{red match} = [color=red,ultra thick,dashed]
	\newcommand\DrawFix {
	\begin{scope}[shift={(-8,0)}]

		\DrawBaseGraph
		\node at (1.5,-1) {A witness graph $W$};
		\draw[red match] (l5)--(r3);
		\draw[red match] (l4)--(r5);
		\draw[red match] (l3)--(r4);
		\draw[red match] (l2)--(r2);
		\draw[red match] (l1)--(r1);
		\draw [ultra thick,label={[above]{fix}}] (3.8,2.7)--(6.2,2.7)
			node[above, midway,align=center] {edge in \\ $\aux$};
		 \end{scope}
		 	
		\begin{scope}[shift={(-1,0)}]
		\DrawBaseGraph
		\node at (1.5,-1) {$\oW$ -- A neighbor of $W$};
		\draw[red match] (l5)--(r3);
		\draw[red match] (l4)--(r5);
		\draw[red match] (l3)--(r4);
		\draw[red match] (l2)--(r1);
		\draw[red match] (l1)--(r2);
		\end{scope}		
	}
	\newcommand\DrawSmallkFixArg {
	
		\node[client, label=left:$u$] (l1) at (0,1) {} ;
		\node[server, label=right:$v$] (r1) at (2,1) {} ;
		\node[client, label=left:$u'$] (l2) at (0,2) {} ;
		\node[server, label=right:$v'$] (r2) at (2,2) {} ;
		
		\foreach \i in {3,...,6} {
			\node[client] (l\i) at (0,\i+1) {} ;
			\node[server] (r\i) at (2,\i+1) {} ;
		}
		
		\node[client, label=left:$x$] (l5) at (0,6) {} ;
		\node[server, label=right:$y$] (r5) at (2,6) {} ;
		\node[client, label=left:$x'$] (l6) at (0,7) {} ;
		\node[server, label=right:$y'$] (r6) at (2,7) {} ;
				
		\path[dotted pattern=1.0pt] (l2) -- (l3);
		\path[dotted pattern=1.0pt] (r2) -- (r3);
		
		\foreach \i in {1,...,6} {
			\foreach \j in {1,...,6} {
				\ifthenelse {\i = \j}	{ } {
					\draw[ultra thin] (l\i) -- (r\j);
				};
			}
		}
		\draw[ultra thin] (l5) -- (r5);
		\draw[ultra thin] (l6) -- (r6);
		\foreach \i/\j in {l1/l2, l3/l4} {
			\draw[ultra thick,blue] (\i) to[bend left=40] (\j); 
		}
		\foreach \i/\j in {r1/r2, r3/r4} {
			\draw[ultra thick,blue] (\i) to[bend right=40] (\j); 
		}
		\foreach \i/\j in {l1/r1, l2/r2, l3/r3, l4/r4} {
			\draw[ultra thick,red, dashed] (\i) to (\j); 
		}
		\begin{scope}[shift={(5,0)}]
			\node[client, label=left:$u$] (l1) at (0,1) {} ;
			\node[server, label=right:$v$] (r1) at (2,1) {} ;
			\node[client, label=left:$u'$] (l2) at (0,2) {} ;
			\node[server, label=right:$v'$] (r2) at (2,2) {} ;
			
			\foreach \i in {3,...,6} {
				\node[client] (l\i) at (0,\i+1) {} ;
				\node[server] (r\i) at (2,\i+1) {} ;
			}

			\node[client, label=left:$x$] (l5) at (0,6) {} ;
			\node[server, label=right:$y$] (r5) at (2,6) {} ;
			\node[client, label=left:$x'$] (l6) at (0,7) {} ;
			\node[server, label=right:$y'$] (r6) at (2,7) {} ;
			
			\path[dotted pattern=1.0pt] (l2) -- (l3);
			\path[dotted pattern=1.0pt] (r2) -- (r3);
			
			\foreach \i in {1,...,6} {
				\foreach \j in {1,...,6} {
					\ifthenelse {\i = \j}	{ } {
						\draw[ultra thin] (l\i) -- (r\j);
					};
				}
			}
			\draw[ultra thin] (l1) -- (r1);
			\draw[ultra thin] (l2) -- (r2);
			
			\foreach \i/\j in {l5/l6, l3/l4} {
				\draw[ultra thick,blue] (\i) to[bend left=40] (\j); 
			}
			\foreach \i/\j in {r5/r6, r3/r4} {
				\draw[ultra thick,blue] (\i) to[bend right=40] (\j); 
			}
			\foreach \i/\j in {l5/r5, l6/r6, l3/r3, l4/r4} {
				\draw[ultra thick,red, dashed] (\i) to (\j); 
			}
		\end{scope}
		\draw [<->, ultra thick,label={[above]{fix}}] (2.8,4)--(4.2,4);
		\node at (3.5,4.4) {A switch};
	
	}
		\tikzstyle{client}=[draw, circle, minimum size=2ex, inner sep=0,fill=black] 
		\tikzstyle{server}=[draw, circle, minimum size=2ex, inner sep=0,fill=black]
		\tikzstyle{cycle}=[draw,circle,inner sep=0,minimum size=2ex,fill=black]
		\tikzstyle{small cycle}=[draw,circle,inner sep=0,minimum size=.5ex,fill=black]
		\newcommand\DrawDoubleBipartite{
			\foreach \i/\j/\z in {1/2/1, 5/6/2, 9/10/3} {
				\node[client, label=u\z] (l\i) at (0, \i) {} ;
				\node[server, label=v\z] (r\j) at (2, 1 + \j) {} ;
			}
			\path[dotted pattern=1.0pt] (l1) -- (l5);
			\path[dotted pattern=1.0pt] (r2) -- (r6);
			
			\foreach \i in {1,5,9} {
				\foreach \j in {2,6,10} {
					\ifthenelse {\i = \j}	{ } {
						\draw[ultra thin, gray] (l\i) -- (r\j);
					};
				}
			}
			\foreach \i/\j in {1/2, 5/6,  9/10} {
				\draw[ultra thick,black] (l\i) -- (r\j);			
			}
			\foreach \i/\j in {2/1, 6/5, 10/9} {
				\draw[white] (l\i) -- (r\j);
			}
			\ifthenelse{\matching=1}
			{ 
				\foreach \i/\j in {5/10, 9/6} {
					\draw[ultra thick, dotted, blue] (l\i) to (r\j); 
				}				
			} 
			{
				\foreach \i/\j in {5/10, 9/6} {
					\draw[black] (l\i) to (r\j); 
				}
			}
		}
		\newcommand\DrawDoubleBipartiteRight{
			\foreach \i/\j/\z in {1/2/1, 5/6/2, 9/10/3} {
				\node[client, label=u'\z] (ll\i) at (6, \i) {} ;
				\node[server,label=v'\z] (rr\j) at (8, 1 + \j) {} ;
			}
			\path[dotted pattern=1.0pt] (ll1) -- (ll5);
			\path[dotted pattern=1.0pt] (rr2) -- (rr6);
			
			\foreach \i in {1,5,9} {
				\foreach \j in {2,6,10} {
					\ifthenelse {\i = \j}	{ } {
						\draw[ultra thin, gray] (ll\i) -- (rr\j);
					};
				}
			}
			\foreach \i/\j in {1/2,  5/6,  9/10} {
				\draw[ultra thick,black] (ll\i) -- (rr\j);			}
			\ifthenelse{\matching=1}
			{ 
				\foreach \i/\j in {5/10, 9/6} {
					\draw[ultra thick,dotted, blue] (ll\i) to (rr\j); 
				}				
			} 
			{
				\foreach \i/\j in {5/10, 9/6} {
					\draw[black] (ll\i) to (rr\j); 
				}
			}
		}
 	        \tikzstyle{small node}=[draw, circle, minimum size=1ex, inner sep=0, fill=black] 
			\newcommand{\darkercolor}[3]{
				 \colorlet{#3}{#1!#2!black}
			}
			\newcommand{\lightercolor}[3]{
			    \colorlet{#3}{#1!#2!white}
			}
		\newcommand\DrawNodes[4]{ 
             \foreach \i in {1,...,3} {
             	\node[#4] (#3_\i) at (#1,\i + #2 ) {} ;
             }
             \foreach \i in {1,...,3} {
               	\node[#4] (#3_\i) at (#1,\i + #2 ) {} ;
              }
		}			
		\newcommand\DrawComplete[3]{ 
			\foreach \i in {1,...,3} {
				\foreach \j in {1,...,3} {
					\draw[#3] (#1_\i) to (#2_\j) {} ;
				}
			}
		}	
		\newcommand\DrawRedBlueGroup[3] { 
			\foreach \side/\x/\y in {l/0/0,r/2/0, ll/5/0, rr/7/0}	{
				\DrawNodes{\x }{\y+#1}{#2_\side}{small node}
			}
			\foreach \side/\x/\y in {minus/-1/0,plus/8/0}	{
				\DrawNodes{\x }{\y+#1}{#2_\side}{}
			}
			\DrawComplete{#2_l}{#2_r}{dashed, red, , ultra thin, #3}
			\DrawComplete{#2_ll}{#2_rr}{dashed, red,ultra thin , #3}
			\DrawComplete{#2_r}{#2_ll}{blue, #3}
			\DrawComplete{#2_minus}{#2_l}{blue, #3}
			\DrawComplete{#2_plus}{#2_rr}{blue, #3}
		}
		\newcommand\SendEdges[3] { 
		  \foreach \i in {1,...,3} {
		    \draw[ultra thick, #3] (#1) -- (#2_\i);	
		  }		
		}
		\tikzstyle{coiled}=[decoration={coil, aspect=0,,segment length=15pt}]
		\newcommand\DrawDoubleBipartiteMain {
	 		\DrawRedBlueGroup{-3}{g_1}{ultra thin}
	 		\DrawRedBlueGroup{0}{g_2}{ultra thin}
	 		\DrawRedBlueGroup{4.5}{g_3}{ultra thin}	
	 		\DrawRedBlueGroup{7.5}{g_4}{ultra thin}	
	 		
             \node[above] at (0,11) {A};
             \node[above] at (2,11) {B};
             \node[above] at (5,11) {C};
             \node[above] at (7,11) {D};
           
           \path [dotted pattern] (3.5,3) -- (3.5,5.5);
           \path [dotted pattern] (-1,6.5) -- (-1.5,6.5);
           \path [dotted pattern] (8,6.5) -- (8.5,6.5);
   			
   			\foreach \i in {0,2,5,7} {
	   			\node[] (e0_\i) at (\i,3.2) {};
	   			\node[] (e1_\i) at (\i,3.8) {};
	   			\node[] (e2_\i) at (\i,4.4) {};
	   			\node[] (e3_\i) at (\i,5) {};
	   		}
   			\foreach \j in {0,...,3} {
	   			\foreach \l in {0,...,3} {
			   		\ifthenelse {\j=\l} {} {
						\draw[ultra thin, gray] (e\j_0) -- (e\l_2);
						\draw[ultra thin, gray] (e\j_5) -- (e\l_7);
					}
				}
   			}

			\draw[ultra thick, green,decorate, coiled] (g_1_r_2) node[above,black,xshift=4pt,yshift=2pt] {$b^*$} -- (g_2_rr_3) node[above,black,xshift=5pt] {$d^*$};		
			\draw[ultra thick, green, decorate, coiled]  (g_3_l_2)  node[above,black,yshift=1pt] {$a^*$} -- (g_4_ll_2) node[above,black,xshift=2pt] {$c^*$};

			\draw[ultra thick, violet,dashed] (g_1_r_2) -- (g_3_l_2);		
            \draw[ultra thick, violet, dashed] (g_2_rr_3) -- (g_4_ll_2);		
			\SendEdges{g_2_rr_3}{g_1_ll}{black}
			\SendEdges{g_1_r_2}{g_1_ll}{blue}
            \SendEdges{g_3_l_2}{g_4_r}{black}
			\SendEdges{g_4_ll_2}{g_4_r}{blue}
            \SendEdges{g_3_l_2}{g_3_minus}{blue}
            \SendEdges{g_3_plus_2}{g_3_rr}{blue}
            \SendEdges{g_4_ll_2}{g_3_rr}{black}
            
	          \SendEdges{g_2_rr_3}{g_2_plus}{blue}
	          \SendEdges{g_2_minus_3}{g_2_l}{blue}
	          \SendEdges{g_1_r_2}{g_2_l}{black}
			\node[] at (0.5,-2.5) {}; 
		}
	\newcommand{\DrawBaseFixSmallDelta}{		
	\node[small node, label=left:$a^*$] (a*) at (0,2.2) {};
		\node[small node, label=left:$b^*$] (b*) at (1,0) {};
		\node[small node, label={[label distance=0.01]175:$c^*$}] (c*) at (2,3.2) {};
		\node[small node, label=right:$d^*$] (d*) at (3,1) {};
		\node[small node, label=left:$a$] (a) at (0,5) {};
		\node[small node, label=right:$c$] (c) at (2,4.5) {};
		\draw[ultra thick, green, decorate, coiled] (b*) -- (d*);
}
		\newcommand{\DrawFixSmallDelta}{
			\DrawBaseFixSmallDelta
			\draw[ultra thick, green, decorate, coiled] (a*) -- (c*);		
			\draw[ultra thick, violet,dashed] (a*) -- (b*);
			\draw[ultra thick, violet, dashed] (c*) -- (d*);
			\draw[black] (a) to (b*);
			\draw[black] (c) to (d*);
			
			\draw [<->, ultra thick] (3.5,3)--(5.5,3) node[midway,above] {A switch};
			\begin{scope}[shift={(6.5,0)}]
			\DrawBaseFixSmallDelta
			\draw[ultra thick, green, decorate, coiled] (a) -- (c);		
			\draw[ultra thick, violet,dashed] (a) -- (b*);
			\draw[ultra thick, violet, dashed] (c) -- (d*);
			\draw[black] (a*) to (b*);
			\draw[black] (c*) to (d*);
			\end{scope}
		}
	\newcommand\DrawDoubleBipartiteRedBlue{
		\begin{tikzpicture}

		\foreach \i/\j/\z in {1/2/1, 5/6/2, 9/10/3} {
			\node[client, label=u\z] (l\i) at (0, \i) {} ;
			\node[server, label=v\z] (r\j) at (2, 1 + \j) {} ;
		}

		\path[dotted pattern=1.0pt] (l1) -- (l5);
		\path[dotted pattern=1.0pt] (r2) -- (r6);
		
		\foreach \i in {1,5,9} {
			\foreach \j in {2,6,10} {
				\ifthenelse {\i = \j}	{ } {
					\draw[ultra thin, gray] (l\i) -- (r\j);
				};
			}
		}
		\foreach \i/\j in {1/2, 5/6,  9/10} {
			\draw[ultra thick,black] (l\i) -- (r\j);			
		}
		\foreach \i/\j in {2/1, 6/5, 10/9} {
			\draw[white] (l\i) -- (r\j);
		}
		\ifthenelse{\matching=1}
		{ 
			\foreach \i/\j in {5/10, 9/6} {
				\draw[ultra thick, dotted, blue] (l\i) to (r\j); 
			}				
		} 
		{
			\foreach \i/\j in {5/10, 9/6} {
				\draw[black] (l\i) to (r\j); 
			}
		}
		\end{tikzpicture}
	}
	\tikzstyle{client}=[draw, circle, minimum size=2ex, inner sep=0,fill=black] 
	\tikzstyle{server}=[draw, circle, minimum size=2ex, inner sep=0,fill=black]
\newcommand\CompleteMatching{
	\foreach \i in {1,2} {
		\node[client] (l\i) at (0,\i) {} ;
		\node[server] (r\i) at (2,\i) {} ;
	}
	\foreach \i in {3,...,5} {
		\node[client] (l\i) at (0,\i+2) {} ;
		\node[server] (r\i) at (2,\i+2) {} ;
	}
	
	\path[dotted pattern=1.0pt] (l2) -- (l3);
	\path[dotted pattern=1.0pt] (r2) -- (r3);
	
	\foreach \i in {1,...,5} {
		\foreach \j in {1,2,3,4,5} {
			\ifthenelse {\i = \j}	{ } {
				\draw[ultra thin] (l\i) -- (r\j);
			};
		}
	}
	\foreach \i in {1,...,5} {
		\draw[ultra thin, gray] (l\i) -- (r\i);
	}	

	\node[] (l6) at (0,4) {};
	\node[] (r6) at (2,3.8) {};
	
	\foreach \i/\j in {l1/l6, l2/l4, l3/l5} {
		\draw[ultra thick,blue] (\i) to[bend left=40] (\j); 
	}
	\foreach \i/\j in {r1/r2, r6/r5, r3/r4} {
		\draw[ultra thick,blue] (\i) to[bend right=40] (\j); 
	}
	
	\foreach \i/\j in {l1/r2, l2/r1, l3/r4, l4/r5, l5/r3} {
		\draw[ultra thick, white] (\i) to (\j); 
		\draw[ultra thick,dashed, red] (\i) to (\j); 
	}
	\draw [thick, decoration={ brace, raise=0.5cm,amplitude=10pt}, decorate] (-0.3,0.8) -- (-0.3,7.2) node [midway, left=0.5cm+10pt] {$\sqrt m$} ;
			
	\begin{scope}[shift={(5.5,3)},xscale=0.6,yscale=1]
		\DrawIndependentSet
		\begin{scope}[shift={(0.8,-0.5)}]
			\DrawIndependentSet
		\end{scope}
	\end{scope}
	\draw [thick, decoration={ brace, raise=0.5cm,amplitude=10pt}, decorate] (7.1,6.5) -- (7.1,1) node [midway, right=0.5cm+10pt] {$n-2\sqrt m$} ;
}
\newcommand\DrawKleqSqrtM{
	\foreach \i in {1,2} {
		\node[client] (l\i) at (0,\i) {} ;
		\node[server] (r\i) at (2,\i) {} ;
	}
	\foreach \i in {3,...,6} {
		\node[client] (l\i) at (0,\i+1) {} ;
		\node[server] (r\i) at (2,\i+1) {} ;
	}
	
	\path[dotted pattern=1.0pt] (l2) -- (l3);
	\path[dotted pattern=1.0pt] (r2) -- (r3);
	
	\foreach \i in {1,...,6} {
		\foreach \j in {1,...,6} {
			\ifthenelse {\i = \j}	{ } {
				\draw[ultra thin] (l\i) -- (r\j);
			};
		}
	}
	\draw[gray, ultra thin] (l5) -- (r5);
	\draw[gray, ultra thin] (l6) -- (r6);
	\foreach \i/\j in {l1/l2, l3/l4} {
		\draw[ultra thick,blue] (\i) to[bend left=40] (\j); 
	}
	\foreach \i/\j in {r1/r2, r3/r4} {
		\draw[ultra thick,blue] (\i) to[bend right=40] (\j); 
	}
	\foreach \i/\j in {l1/r1, l2/r2, l3/r3, l4/r4} {
		\draw[ultra thick,red, dashed] (\i) to (\j); 
	}
	\draw [thick, decoration={ brace, raise=0.5cm,amplitude=10pt}, decorate] (0,1) -- (0,7) node [midway, left=0.5cm+10pt] {$\sqrt m$} ;
	\draw [thick, decoration={ brace, raise=0.5cm,amplitude=4pt}, decorate] (2,5.2) -- (2,3.8) node [midway, right=0.5cm+4pt, align=left] {The $i\th$ \\ matched square};
}
\newcommand{\DrawIndependentSet}{
	\node [ cycle] at (-1,-1) {};
	\node [ cycle] at (1,-1) {};
	\node [ cycle] at (2,1) {};
	\node [ cycle] at (-2,1) {};
	\node [ cycle] at (1,3) {};
	\node [ cycle] at (-1,3) {};
	\node [ cycle] at (0,1) {};	
}
\def\withcolors{0}
\newcommand{\dchange}[1]{{{#1}}}
\newcommand{\ddchange}[1]{{\color{cyan}{#1}}}
\newcommand{\tchange}[1]{{\color{purple}{#1}}}
\newcommand{\dddchange}[1]{{\color{cyan}{#1}}}
\newcommand{\dchange}[1]{{{#1}}}
\newcommand{\ddchange}[1]{{{#1}}}
\newcommand{\tchange}[1]{{{#1}}}
\newcommand{\dddchange}[1]{{#1}}
	\title{Approximately Counting Triangles in Sublinear Time
	}
	\author{Talya Eden\footnote{School of Computer Science, Tel Aviv University, {\tt talyaa01@gmail.com}} \and Amit Levi\footnote{School of Electrical Engineering, Tel Aviv University, {\tt amitlev3@post.tau.ac.il}}\and Dana Ron\footnote{School of Electrical Engineering, Tel Aviv University, {\tt danaron@tau.ac.il}.
This research was partially supported by the Israel Science Foundation grant No.~671/13 and
by a grant from the Blavatnik fund.} \and  C. Seshadhri \footnote{University of California, Santa Cruz, {\tt scomandu@ucsc.edu}}}
\begin{document}

\begin{titlepage}

\maketitle
\begin{abstract}
	We consider the problem of estimating the number of triangles in a graph. This problem has been extensively studied in both theory
    and practice, but all existing algorithms read the entire graph. In this work we design a {\em sublinear-time\/} algorithm for approximating the number of triangles in a graph, where the algorithm is given query access to the graph. The allowed queries are degree queries, vertex-pair queries and neighbor queries.
	
	We show that for any given approximation parameter $0<\epsilon<1$, the algorithm provides an estimate $\widehat{\tr}$ such that with high constant probability, $(1-\epsilon)\cdot \tr< \widehat{\tr}<(1+\epsilon)\cdot \tr$, where $t$ is the number of triangles in the graph $G$. The expected query complexity of the algorithm is $\!\left(\frac{n}{\tr^{1/3}} + \min\left\{m, \frac{m^{3/2}}{\tr}\right\}\right)\cdot {\rm poly}(\log n, 1/\epsilon)$, where $n$ is the number of vertices in the graph and $m$ is the number of edges,
	and the expected running time is
	$\!\left(\frac{n}{\tr^{1/3}} + \frac{m^{3/2}}{\tr}\right)\cdot {\rm poly}(\log n, 1/\epsilon)$. We also prove that
	$\Omega\!\left(\frac{n}{\tr^{1/3}} + \min\left\{m, \frac{m^{3/2}}{\tr}\right\}\right)$ queries are necessary,
	thus  establishing that the query complexity of this algorithm is optimal up to polylogarithmic factors in $n$ (and the dependence on $1/\epsilon$).
\end{abstract}



\end{titlepage}

\tableofcontents \thispagestyle{empty}

\newpage
\section{Introduction}

Counting the number of triangles in a graph is a fundamental algorithmic problem.
In the study of complex networks
and massive real-world graphs, triangle counting is a key operation in graph
analysis for bioinformatics, social networks, community analysis, and graph modeling~\cite{HoLe70,Co88,portes2000social,EcMo02,milo2002network,Burt04,becchetti2008efficient,foucault2010friend,BerryHLP11,SeKoPi11}.
In the theoretical computer science community,
the primary tool for counting the number of triangles is
fast matrix multiplication~\cite{itai1978finding,alon1997finding,BjPa+14}. On the more applied side, there is a plethora
of provable and practical algorithms that employ clever sampling methods for
approximate triangle counting~\cite{ChNi85,ScWa05,ScWa05-2,tsourakakis2008fast,tsourakakis2009doulion,avron2010counting,kolountzakis2012efficient,chu2011triangle,SuVa11,tsourakakis2011triangle,arifuzzaman2013patric,SePiKo13,TaPaTi13}. Triangle counting has also been a popular problem
in the streaming setting~\cite{bar2002reductions,jowhari2005new,buriol2006counting,AhGuMc12,KaMeSaSu12,JhSePi13,PaTaTi+13,TaPaTi13,AhDuNe+14}.

All these algorithms  read the entire graph, which may be time consuming when the graph is very large.
In this work, we focus on \emph{sublinear} algorithms for triangle counting.
We assume the following query access to the graph, which is standard
for sublinear algorithms that approximate graph parameters.
The algorithm can make: (1) Degree queries, in which the algorithm can query the degree $d_v$ of any vertex $v$. (2) Neighbor queries, in which the algorithm can query what vertex is the $i\th$ neighbor of a vertex $v$, for any $i \leq d_v$.
(3) Vertex-pair queries, in which
the algorithm can query for any pair of vertices $v$ and $u$ whether $(u,v)$ is an edge.

Gonen et al.~\cite{GRS11}, who studied the problem of approximating the number of stars in a graph in sublinear time,
also considered the problem of approximating the number of triangles in sublinear time.
They proved that there is no sublinear approximation algorithm for the number of triangles
when the algorithm is allowed to perform degree and neighbor queries (but not pair queries).
\footnote{To be precise,
they showed that there exist two families of graphs over $m= \Theta(n)$ edges, such that all graphs in
one family have $\Theta(n)$ triangles, all graphs in the other family have \textsf{no} triangles, but
in order to distinguish between a random graph in the first family and random graph in the second family,
it is necessary to perform $\Omega(n)$ degree and neighbor queries.}

They asked whether a sublinear algorithm exists when allowed vertex-pair queries in addition to degree and neighbor queries. 
We show that this is indeed the case.

	 \subsection{Results} \label{sect results}
Let $G$ be a graph with $n$ vertices, $m$ edges, and $t$ triangles.
	We describe an algorithm that, given an approximation parameter $0<\eps<1$ and query access to  $G$, outputs an estimate $\wht$, such that with high constant probability (over the randomness of the algorithm), $(1-\eps)\cdot\tr\leq \wht\leq (1+\eps)\cdot\tr$. The expected query complexity of the algorithm is

	\[\left(\frac{n}{\tr^{1/3}}+\min\left\{m,\frac{m^{3/2}}{\tr}\right\}\right)\cdot \polylog \;,\]
and its expected running time is
$\!\left(\frac{n}{\tr^{1/3}}+\frac{m^{3/2}}{\tr}\right)\cdot \polylog $.
We show that this result is almost optimal by 
proving that the number of queries performed by any multiplicative-approximation algorithm for the number of triangles in a graph is
	\[\Omega\left(\frac{n}{\tr^{1/3}}+\min\left\{m,\frac{m^{3/2}}{\tr}\right\}\right) \;.\]

\subsection{Overview of the algorithm} \label{sec:idea}

For the sake of clarity, we suppress any dependencies on the approximation parameter $\eps$
and on $\log n$ using the notation $O^*(\cdot)$.
\subsubsection{A simple oracle-based procedure for a $1/3$-estimate}
First, let us assume access to an oracle that,
given a vertex $v$, returns $t_v$,
the number of triangles that $v$ is incident to. Note that $t = \sum_v t_v/3$.
An unbiased estimate is obtained by sampling, uniformly at random, a (multi-)set $S$ of $s$  vertices,
and outputting
 $t_S =  \frac{1}{3}\cdot \frac{n}{s}\cdot \sum_{v\in S} t_v$.
Yet this estimate can have extremely large variance (consider the ``wheel'' graph where there
is one vertex with $t_v = \Theta(n)$ and all other $t_v$'s are constant).
Inspired by work on estimating the average degree~\cite{feige2006sums,GR08},
we can reduce the variance by simply ``cutting off'' the contribution of
vertices $v$ for which $t_v$ is above a certain threshold. Call such vertices {\em heavy}, and denote the remaining {\em light}.
If the threshold is set to $\Theta(\tr^{2/3}/\eps^{1/3})$, then the number of heavy vertices is $O((\eps t)^{1/3})$.
This implies that the total number of triangles in which all three endpoints are heavy
is $O(\eps t)$.

Hence, suppose we define $\widetilde{\tr}_v$ to
be $t_v$ if $t_v \leq \tr^{2/3}/\eps^{1/3}$ and $0$ otherwise, and consider
$\widetilde{\tr}_S = \frac{1}{3}\cdot \frac{n}{s}\cdot \sum_{v\in S} \widetilde{\tr}_v$.
We can argue that $\EX[\widetilde{\tr}_S] \in [(1/3 - \eps)t, t]$, since (roughly speaking) every
triangle that contains at least one light vertex is counted at least once.
 Since $\widetilde{\tr}_v$ ranges
 between $0$ and $\tr^{2/3}/\eps^{1/3}$, by applying the multiplicative Chernoff bound,
 a sample of size $s = O^*\!\left(\frac{n}{\tr^{1/3}}\right)$ 
 is sufficient to ensure that with high constant probability $\widetilde{\tr}_S$ is
 in the range $\left[\left(\frac{1}{3}-2\eps\right)\cdot\tr,(1+\eps)\cdot\tr\right]$.

\subsubsection{Assigning weights to triangles so as to improve the estimate}
To improve the approximation, we assign weights to triangles inversely proportional to the number of their light endpoints
(rather than assigning a uniform weight of $\frac{1}{3}$
as is done when defining $\widetilde{\tr}_S = \frac{n}{s}\cdot \sum_{v\in S} \frac{1}{3}\cdot \widetilde{\tr}_v$).
If for each light vertex $v$ we let $\wt(v)$
be the sum over the weights of all triangles that $v$ participates in
and for each heavy vertex $v$ we let $\wt(v) = \widetilde{\tr}_v = 0$, then the expected value
of $\frac{n}{s}\cdot \sum_{v\in S} \wt(v)$ is in $[(1-O(\eps))\cdot\tr,(1+O(\eps))\cdot\tr]$.

\medskip
To get rid of the fictitious oracle, we must resolve two issues.
The first issue is efficiently deciding whether a vertex is heavy or light, and the second is
approximating $\frac{n}{s}\cdot \sum_{v\in S} \wt(v)$,
 assuming we have a procedure
for deciding whether a vertex is heavy or light.
    We next discuss each of these two issues.
For convenience, we will assume that the algorithm already has constant factor estimates for $m$ and $t$.
This can be removed by approximating $m$ and performing a geometric search on $t$.


\subsubsection{Deciding whether a vertex is heavy}
\label{intro-heavy:subsubsec}
Let $v$ be a fixed vertex with degree  $d_v$.
Consider an edge $e$ incident to $v$, and let $u$ be the other endpoint of this edge.
Let
$\tr_e$ denote the number of triangles that $e$ belongs to.
Consider the  random variable $Y$ defined by selecting, uniformly at random, a neighbor $w$ of $u$, and setting $Y=d_u$
 if  $(v,w)$ is an edge (so that $(v,u,w)$ is a triangle) and $Y=0$ otherwise. Since the number of neighbors of $u$
that form a triangle with $v$ is $\tr_e$, the expected value of $Y$ is $\frac{\tr_e}{d_u}\cdot d_u = \tr_e$.
Now consider selecting (uniformly at random) several edges incident to $v$, denoted $e_1,\dots,e_r$, and for each edge $e_j$ selected, defining the corresponding random variable $Y_j$. Then the expected value of $\frac{1}{r}\sum_{j=1}^r Y_j$
is $\frac{1}{d_v}\cdot \sum_{e=(v,u)}t_{e} = \frac{2}{d_v}\cdot t_v$. If we multiply by $d_v/2$, then
we get an unbiased estimator for $t_v$, which in particular can indicate  whether $v$ is
heavy or light.

However, once again the difficulty is with the variance of this estimator and the implication on the complexity
of the resulting decision procedure.
To 
\dddchange{address these difficulties} we modify the procedure described above as follows.
First, if $d_v$ is above a certain threshold, then $v$ is also considered
heavy (where this threshold is of order $O\left(m/(\eps t)^{1/3}\right)$, so that the total number of
triangles in which all three endpoints are heavy remains $O(\eps \tr)$.
Second, observe that when trying to estimate the number of triangles that an edge $e_j= (v,x_j)$ participates in, we can either select a random neighbor $w$ of $v$ and check whether $(x_j,w) \in E$, or we can select a random
neighbor $w$ of $x_j$ and check whether $(v,w) \in E$. Since it is advantageous \dddchange{for the sake of
the complexity} to consider the endpoint that has
a smaller degree, we do the following.
Each time we select an edge $e_j = (v,x_j)$ incident to $v$, we let $u_j$ be the endpoint of $e_j$ that has smaller degree.
If $d_{u_j}$ is relatively large (larger than $\sqrt{m}$), then we select $k = \lceil d_{u_j}/\sqrt{m}\rceil$
neighbors of $u_j$ and let $Y_j$ equal $d_{u_j}$ times the fraction among these neighbors that close a triangle with $e_j$.
\dddchange{The setting of $k$ implies a bound on the variance of $Y_j$ (conditioned on the choice of $e_j$), which is
$\sqrt{m}$ times its expected value, $\tr_{e_j}$.
Third, in order to bound the variance due to the random choice of edges $e_j$ incident to $v$,
 we do the following. We assign each triangle that $v$ participates in to a unique edge incident to $v$ and modify the definition of $\tr_e$ to be the number of such triangles that are assigned to $e$.
The assignment is such that $\tr_e$ is always upper bounded by $O(\sqrt{m})$.
}
Finally, we perform a standard median selection
over $O(\log n)$ repetitions of the procedure.

Our analysis shows that it suffices to set $r$ (the number of random edges incident to $v$ that are selected) to be
$O^*\!\left(\frac{m^{3/2}}{\tr}\right)$ so as to ensure the correctness of the procedure
(with high probability). In the analysis of the expected query complexity and running time of the
procedure we have to take into account the number of iterations $k = \lceil d_{u_j}/\sqrt{m}\rceil$ for each
selected (lower degree endpoint) $u_j$ \dddchange{and argue that for every vertex $v$, the expected number
of these iterations is a constant.}

\subsubsection{Estimating $\sum_{v\in S} \wt(v)$}
\label{intro-alg:subsubsec}
Suppose we have a (multi-)set $S$ of vertices such that $\frac{n}{s}\cdot \sum_{v\in S} \wt(v)$ is indeed
in $[(1-O(\eps))\cdot\tr,(1+O(\eps))\cdot\tr]$ (which we know occurs with high probability if we select
 $s = O^*\!\left(\frac{n}{\tr^{1/3}}\right)$ 
 vertices uniformly at random).
 Consider the set of edges incident to vertices in $S$, where we view edges as directed, so that if there is an
 edge between $v$ and $v'$ that both belong to $S$, then $(v,v')$ and $(v',v)$ are considered as two different edges.
 We denote this set of edges by $E_S$, and their number by $d_S$, where
 $d_S = \sum_{v\in S} d_v$.
Suppose that for each edge $e=(v,x)$ we assign a weight $\wt(e)$, which is the sum of the weights of all triangles
 that $v$ participates in \dddchange{ and are assigned to $e$} 
  (where the weight of a triangle is as defined previously based on the number of
 light endpoints that it has). Then $\sum_{e\in E_S} \wt(e) = \sum_{v\in S} \wt(v)$.

 The next  idea is to sample {\em edges\/} in $E_S$ uniformly at random, and for each selected edge $e= (v,u)$
 to estimate $\wt(e)$. An important observation is that since we can query the degrees of all vertices in $S$,
 we can efficiently select uniform random edges in $E_S$ (as opposed to the more difficult task of
 selecting random edges from the entire graph). Similarly to what was described in the decision procedure
 for heavy vertices, given an edge $e \in E_S$ we let $u$ be its endpoint that has smaller degree. We then
 select $\lceil\sqrt{m}/d_u\rceil$ random neighbors of $u$ and for each check whether it closes a triangle with $e$. For each triangle found \dddchange{that is assigned to $e$}, we
 check how many heavy endpoints it has (using the aforementioned procedure for detecting heavy vertices) so as to compute the weight of the triangle. In this manner we can obtain
 random variables whose expected value is $\frac{1}{d_S} \sum_{v\in S} \wt(v)$, and whose variance
 is not too large (upper bounded by $\sqrt{m}$ times this expected value). We can now take an
 average over sufficiently many ($O^*\!\left(\frac{m^{3/2}}{\tr}\right)$) 
 such random variables
 and multiply by $d_S \cdot n$. By upper bounding the probability that $d_S$ is much larger than its expected
 value we can prove that the output of the algorithm is as desired. The expected query complexity and running time of the
 algorithm are shown to be $O^*\!\left(\frac{n}{\tr^{1/3}} + \frac{m^{3/2}}{\tr}\right)$.

 Finally we note that if $\tr < m^{1/2}$ so that $\frac{m^{3/2}}{\tr} > m$, then we can replace
 $\frac{m^{3/2}}{\tr}$ with $m$ in the upper bound on the query complexity
since we can store all queried edges so that no edge needs to be queried more than twice (once from each
endpoint).

\subsection{A high level discussion of the lower bound}

Proving that every multiplicative-approximation algorithm must perform $\Omega\!\left(\frac{n}{\tr^{1/3}}\right)$
queries is fairly straightforward, and
	our main focus is on proving that
	$\Omega\!\left(\min\left\{m,\frac{m^{3/2}}{\tr}\right\}\right)$ queries are necessary as well.
 In order to
 prove this claim we define, for every $n$, every $1\leq m\leq \binom{n}{2}$ and every $1 \leq \tr \leq \min\{\binom{n}{3},m^{3/2}\}$,
a graph $G_1$ and a family of graphs $\mG_2$ for which the following holds:
~(1)~The graph $G_1$ and all the graphs in $\mG_2$ have $n$ vertices and $m$ edges.
~(2)~In $G_1$ there are no triangles, while the number of triangles in each graph $G \in \mG_2$ is $\Theta(\tr)$. We prove that for values of $t$ such that $t \geq \sqrt m$, at least $\Omega\!\left(\frac{m^{3/2}}{\tr}\right)$ queries are required in order to distinguish with high constant probability between $G_1$ and a random graph in $\mG_2$. We then prove that for values of $t$ such that $t < \sqrt m$, at least $\Omega(m)$ queries are required
for this task. We give three different constructions for $G_1$ and $\mG_2$ depending on the value of $t$ as a function of $m$
(where two of the constructions are for subcases of the case that $t \geq \sqrt m$).
For further discussion of the lower bound, see Section~\ref{sec:lb}.
\ifnum\focs=1
Due to space constraints, in this extended abstract we only described one of these constructions and provide part of the details for the corresponding lower bound proof. All missing details can be found in the full version of this paper~\cite{ELRS}. 
\fi

\subsection{Related Work} \label{related work}
\subsubsection[Approximating graph parameters in sublinear time]{Approximating graph parameters in sublinear time}
We build on previous work on approximating the average degree of a graph and the number of stars~\cite{feige2006sums,GR08,GRS11}.
Feige \cite{feige2006sums} investigated the problem of estimating the average degree of a graph, denoted $\overline{d}$, when given query access to the degrees of the vertices.
By performing a careful variance analysis, Feige proved that $O\!\left(\sqrt {n/\overline{d}}/\eps\right)$ queries are sufficient  in order to obtain a $(\frac{1}{2}-\eps)$-approximation of $\overline{d}$.
He also proved that a better approximation ratio cannot be achieved in sublinear time using only degree queries. The same problem was considered by Goldreich and Ron~\cite{GR08}. Goldreich and Ron proved that a $(1+\eps)$-approximation can
be achieved with
$O\!\left(\sqrt{n/\sqrt{\overline{d}}}\right)\cdot \poly(\log n,1/\eps)$ queries, if neighbor queries are also allowed.


%

Building on these ideas, Gonen et al.~\cite{GRS11} considered the problem of approximating the number of $s$-stars in a graph.
Their algorithm only used neighbor and degree queries. A major difference between stars and triangles is that the former are non-induced subgraphs, while the latter are.
Additional work on sublinear algorithms for estimating other graph parameters include those for approximating the size of the minimum weight spanning tree \cite{DBLP:journals/siamcomp/ChazelleRT05, DBLP:journals/siamcomp/CzumajS09, DBLP:journals/siamcomp/CzumajEFMNRS05}, maximum matching \cite{nguyen2008constant, yoshida2009improved} and of the minimum vertex cover \cite{DBLP:journals/tcs/ParnasR07,nguyen2008constant,DBLP:journals/talg/MarkoR09, yoshida2009improved, hassidim2009local, onak2012near}.

\subsubsection{Triangle counting} \label{subsec:triangles}

Triangle counting has a rich history.
A classic result of Itai and Rodeh showed that triangles can be enumerated
in $O(m^{3/2})$ time, and a more elegant algorithm was given by Chiba and Nishizeki~\cite{ChNi85}.
The connections to matrix multiplication have been exploited
for faster theoretical algorithms~\cite{itai1978finding,alon1997finding,BjPa+14}.
In practice, there is a diverse body on work on counting triangles using different techniques, for different models.
There are serial algorithms based on eigenvalue methods~\cite{tsourakakis2008fast,avron2010counting},
graph sparsification~\cite{tsourakakis2009spectral,kolountzakis2012efficient,tsourakakis2011triangle,PaTs12}, and
sampling paths~\cite{ScWa05,SePiKo13}. Triangle counters have been given
for MapReduce~\cite{Co09,SuVa11,KoPiPlSe13}; external memory models~\cite{chu2011triangle}; distributed settings~\cite{arifuzzaman2013patric};
semi-streaming models~\cite{becchetti2008efficient,kolountzakis2012efficient}; one-pass streaming~\cite{bar2002reductions,jowhari2005new,buriol2006counting,AhGuMc12,KaMeSaSu12,JhSePi13,PaTaTi+13,TaPaTi13,AhDuNe+14}.
It is worth noting that across the board, all these algorithms required reading the entire graph.

Most relevant to our work are various sampling algorithms, that set up a random
variable whose expectation is directly related to the triangle count~\cite{ScWa05,kolountzakis2012efficient,
jowhari2005new,buriol2006counting,SePiKo13,JhSePi13,PaTaTi+13,TaPaTi13,AhDuNe+14}. Typically, this involves
sampling some set of vertices or edges to get a set of three vertices. The algorithm checks whether the sampled set induces a triangle,
and uses the probability of success to estimate the triangle count.
We follow the basic same philosophy. But it is significantly more challenging to
set up the ``right" random experiment, since we cannot read the entire graph.

%
\section{Preliminaries} \label{sec:prel}

%

	Let $G=(V,E)$ be a simple graph with $|V|=n$ vertices and $|E|=m$ edges. For a vertex $v\in V$, we denote by $d_v$ the degree of the vertex, by $\Gamma_v$ the set of $v$'s neighbors, and by $E_v$ the set of edges incident to $v$.  \tchange{We denote by $T_v$ the set of triangles incident to the vertex $v$, and let $t_v = |T_v|$.
	Similarly, the set of triangles in the graph $G$ is denoted by $\Tr$, and the number of triangles in the graph in denote by $t$.} We use $c, c_1, \ldots$ to denote sufficiently large constants.
	
We consider algorithms that can sample uniformly in $V$ and perform three types of queries:
	\begin{enumerate}
		\item Degree queries, in which the algorithm may query for the degree $d_v$ of any vertex $v$ of its choice.	
		\item Neighbor queries, in which the algorithm may query for the $i\th$ neighbor of any vertex $v$ of its choice. If $i>d_v$, then a special symbol (e.g. $\dagger$) is returned. No assumption is made on the order of the neighbors of any vertex.
		\item Pair queries, in which the algorithm may ask if there is an edge $(u,v)\in E$ between any pair of vertices $u$ and $v$.
	\end{enumerate}
	We sometimes use set notations for operations on multisets. We use the notation $O^*(\cdot)$ to
	suppress dependencies on the approximation parameter $\eps$ or on $\log n$.
	
\ifnum\focs=0
We use the following variant of the multiplicative Chernoff bound.
Let $\chi_1, \ldots,\chi_r$ be $r$ independent random variables, such that $\chi_i \in [0,B]$ for some $B>0$ and
$\EX[\chi_i]=b$ for every $1\leq i \leq r$. For every $\gamma\in(0,1]$ the following holds:
\[\Pr\left[\frac{1}{r}\sum\limits_{i=1}^{r}\chi_i > (1+\gamma)b\right] < \exp\left(-\frac{\gamma^2 \cdot b\cdot r}{3B}\right),\numberthis \label{u.b. chernoff}\]
and
	\[\Pr\left[\frac{1}{r}\sum\limits_{i=1}^{r}\chi_i < (1-\gamma)b\right] < \exp\left(-\frac{\gamma^2\cdot b \cdot r}{2B}\right).\numberthis \label{l.b. chernoff}\]

	We will also make an extensive use of Chebyshev's inequality:
	For a random variable $X$ and for $\gamma > 0$,
	$$\Pr \left[ \left| X-\EX[X]  \right| \geq \gamma \right] \leq \frac{\var[X]}{\gamma^2}\;.$$
	
	\fi
We fix a total order on vertices denoted by $\prec$ as follows: $u \prec v$ if $d_u < d_v$ or $d_u = d_v$ and $u < v$ (in terms of id number).
Given $u$ and $v$, two degree queries suffice to decide their ordering.

\begin{claim}\label{clm:prec} Fix any vertex $v$. The number of neighbors $w$ of $v$ such that $v \prec w$ is at most $\sqrt{2m}$.
\end{claim}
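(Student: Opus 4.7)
The plan is a short counting argument based on the definition of $\prec$. Let $k$ denote the number of neighbors $w$ of $v$ satisfying $v \prec w$, and call this set $N^+(v) = \{w_1,\dots,w_k\}$. By the definition of $\prec$, each $w_i \in N^+(v)$ has $d_{w_i} \ge d_v$. Also, since these $k$ vertices are all distinct neighbors of $v$, the degree of $v$ satisfies $d_v \ge k$.

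Next I would combine these two inequalities with the handshake identity $\sum_{u \in V} d_u = 2m$. Since $N^+(v) \subseteq V$, we have
\[
k \cdot d_v \;\le\; \sum_{i=1}^{k} d_{w_i} \;\le\; \sum_{u \in V} d_u \;=\; 2m.
\]
Combining with $d_v \ge k$ gives $k^2 \le k \cdot d_v \le 2m$, and hence $k \le \sqrt{2m}$, which is exactly the claim.

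There is no real obstacle here: the argument is just the standard observation that a vertex cannot have too many neighbors of degree at least its own, together with the trivial bound $d_v \ge k$. The tie-breaking rule in $\prec$ only makes the set $N^+(v)$ smaller than (or equal to) the set of neighbors of degree $\ge d_v$, so the same bound continues to apply and no case analysis is needed.
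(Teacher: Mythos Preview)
Your proof is correct and essentially identical to the paper's own argument: both use that each $w$ with $v \prec w$ satisfies $d_w \ge d_v \ge k$, sum over these $k$ vertices to get $k^2 \le \sum_{w} d_w \le 2m$, and conclude $k \le \sqrt{2m}$.
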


\begin{proof} Let $S = \{w | w \in \Gamma_v, v \prec w\}$. Naturally, $d_v \geq |S|$. By definition of $\prec$,
$\forall w \in S$, $d_w \geq d_v \geq |S|$. Thus, $\sum_{w \in S} d_w \geq |S|^2$ and $|S| \leq \sqrt{2m}$.
\end{proof}

\section{The Algorithm}\label{sec:alg}
We start by introducing the notions of heavy and light vertices and how they can be utilized in the context of
estimating the number of triangles. We then give a procedure for deciding (approximately) whether a vertex is heavy or light. Using this procedure we  give an algorithm for estimating the number of triangles based on the following assumption
(which is later removed).

\begin{assumption} \label{assump:estimates}
	Our initial algorithm takes as input estimates $\overline{\tr}$ and $\overline{m}$ on the number of edges and triangles in the graph respectively, such that
	\begin{enumerate}
		\item $t/4 \leq \ot \leq t$. \label{assume t}
		\item   $m/6 \leq \om $. \label{assume m}

\end{enumerate}
\end{assumption}

\Assum{estimates} can be easily removed by performing a geometric search on $\tr$ and using the algorithm from 	 \cite{feige2006sums} to approximate $m$, as explained precisely in the proof of \Thm{maintri}.

\tchange{	For every vertex $v$, we view the set of edges $E_v$ as directed edges {\em originating\/} from $v$. We then  \emph{associate} each triangle $(v,x,w)\in T_v$ with a unique edge $e \in E_v$, as defined next.
	\begin{definition} \label{associate}
	We say that a triangle $(v,x,w)\in T_v$ is \textsf{associated} with the directed edge $\ora{(v,x)}$ if $x \prec w$, and to $\ora{(v,w)}$ otherwise. For a directed edge $\ora{e}= \overrightarrow{(v,x)}$ we let $T_{\ora{e}}$ denote the set of triangles
	that 
it is associated with, that is, the set of triangles $(v,x,w)$ such that $x \prec w$.		
	\end{definition}
}
\dddchange{Since it will always be clear from the context from which vertex an edge we consider originates,}
\tchange{for the sake of succinctness, \dddchange{we drop the directed notation} and use the notation $T_e$. We let $t_e = |T_e|$, and for a fixed vertex $v$, we get $t_v=\sum\limits_{e \in E_v} t_e$.}

In all the follows we assume that $\eps<1/2$, and otherwise we run the algorithm with $\eps = 1/2$.
\subsection{Heavy and light vertices} \label{sec:heavy}

\begin{definition} \label{def:heavy}
	We say that a vertex $v$ is \textsf{heavy} if $d_v > \frac{2 \om}{(\eps \ot )^{1/3}}$	or if $\tr_v > \frac{2\ot^{2/3}}{\eps^{1/3}}$. If $v$ is such that $d_v \leq \frac{2 \om}{(\eps \ot )^{1/3}}$ and
$\tr_v \leq \frac{\ot^{2/3}}{2 \eps^{1/3}}$, then we say that $v$ is \textsf{light}.

We shall say that
a partition $(H,L)$ of $V$ is \textsf{appropriate} (with respect to $\om$ and $\ot$) if
every heavy vertex belongs to $H$ and
every light vertex belongs to $L$.
\end{definition}

Note that for an appropriate partition $(H,L)$ both $H$ and $L$ may contain vertices that are neither heavy nor light (but no light vertex belongs to $H$ and no heavy vertex belongs to $L$).

For a fixed partition $(H,L)$ we associate with
each triangle $\Delta$ a weight depending on the number of its endpoints that belong to $L$.
\begin{definition} For a triangle $\Delta$ we define its weight $\wt_{L}{(\Delta)}$ to be
	$$\wt_{L}(\Delta)=
	\begin{cases}
	0 &\mbox{ if no endpoints of $\Delta$ belong to $L$} \\
	\tchange{1/\ell} &\mbox{ if $\Delta$ has $\ell > 0$ endpoints that belong to $L$\;.}
	\end{cases}
$$
\end{definition}
Whenever it is clear for the context, we drop the subscript $L$ and use the
notation $\wt(\cdot)$ instead of $\wt_L(\cdot)$.

\begin{claim} \label{clm:bound_heavy}
If $(H,L)$ is appropriate and \Assum{estimates} holds, then
	the number of triangles with weight $0$ is at most $\ce \cdot \eps t$ for some constant $\ce$.
\end{claim}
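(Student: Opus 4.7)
A triangle has weight $0$ precisely when all three of its endpoints lie in $H$. The plan is to show that $H$ can be covered by a small number of vertices --- at most $O((\eps t)^{1/3})$ of them --- after which $\binom{|H|}{3}$ will immediately give the desired bound.

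\smallskip

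\noindent\textbf{First step: decompose $H$.} Since $(H,L)$ is appropriate, no light vertex lies in $H$. Hence every $v \in H$ fails the light condition in \Def{heavy}, which means it satisfies at least one of
\[
d_v \;>\; \frac{2\om}{(\eps\ot)^{1/3}} \qquad\text{or}\qquad t_v \;>\; \frac{\ot^{2/3}}{2\eps^{1/3}}.
\]
Call these sets $H_d$ and $H_t$ respectively, so that $H \subseteq H_d \cup H_t$.

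\smallskip

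\noindent\textbf{Second step: bound $|H_d|$ and $|H_t|$.} By double counting, $\sum_v d_v = 2m$ and $\sum_v t_v = 3t$. Thus
\[
|H_d| \;<\; \frac{2m}{2\om/(\eps\ot)^{1/3}} \;=\; \frac{m(\eps\ot)^{1/3}}{\om}, \qquad |H_t| \;<\; \frac{3t}{\ot^{2/3}/(2\eps^{1/3})} \;=\; \frac{6t\eps^{1/3}}{\ot^{2/3}}.
\]
Now I invoke \Assum{estimates}: $\om \geq m/6$ bounds the first quantity by $6(\eps\ot)^{1/3} \leq 6(\eps t)^{1/3}$ (using $\ot \leq t$), while $\ot \geq t/4$ bounds the second by $6\cdot 4^{2/3}\,(\eps t)^{1/3}$. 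So there is an absolute constant $c$ with $|H_d \cup H_t| \leq c\,(\eps t)^{1/3}$.

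\smallskip

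\noindent\textbf{Third step: conclude.} Every weight-$0$ triangle has all three endpoints in $H \subseteq H_d \cup H_t$, so the number of such triangles is at most
\[
\binom{|H_d \cup H_t|}{3} \;\leq\; \frac{c^3}{6}\cdot \eps t,
\]
which proves the claim with $\ce = c^3/6$.

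\smallskip

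\noindent\textbf{Main obstacle.} There is no deep obstacle here; the work is just in tracking constants through the asymmetric thresholds in \Def{heavy} (the factors of $2$ and $1/2$) and through the slack in \Assum{estimates}. The key conceptual point --- which justifies why the thresholds were chosen as they were --- is that setting the degree threshold proportional to $\om/(\eps\ot)^{1/3}$ and the triangle threshold proportional to $\ot^{2/3}/\eps^{1/3}$ makes both $|H_d|$ and $|H_t|$ scale as $(\eps t)^{1/3}$, which is exactly the cube root needed for the $\binom{|H|}{3}$ bound to come out to $O(\eps t)$.
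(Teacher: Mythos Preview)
Your proof is correct and follows the same approach as the paper: bound $|H|$ by $O((\eps t)^{1/3})$ via the degree-sum and triangle-sum identities, then apply $\binom{|H|}{3}$. One point worth noting: you are actually more careful than the paper here. The paper bounds the number of \emph{heavy} vertices (using the threshold $t_v > 2\ot^{2/3}/\eps^{1/3}$), but $H$ may also contain vertices that are neither heavy nor light. You correctly use the negation of the \emph{light} condition (threshold $t_v > \ot^{2/3}/(2\eps^{1/3})$), which is what is genuinely needed to cover all of $H$; this costs only a constant factor and makes the argument airtight.
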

\begin{proof}
	\tchange{\sloppy By \Assum{estimates}, the number of vertices $v$ such that $d_v$ is greater than $(2\om/(\eps \ot)^{1/3}$, is at most $2m/(2\om/(\eps \ot)^{1/3}) \leq 6(\eps t )^{1/3}$, and the number of vertices $v$ such that $t_v > 2\ot^{2/3}/\eps^{1/3}$ is at most $3t/(2\ot^{2/3}/\eps^{1/3}) \leq 6(\eps t)^{1/3}$.
	Therefore, there are at most
	${12(\eps t )^{1/3} \choose 3 }< 2000 \eps t $
	triangles with all three endpoints in $H$.
	Setting $\ce=2000 $ completes the proof.}
	\end{proof}


\begin{definition} \label{def:weight_vertex}
	For any set $T$ of triangles
	we define $\wt(T) = \sum\limits_{\Delta \in T} \wt(\Delta)$.
	For a vertex $v\in L$ we define \tchange{$\wt(v) = \sum\limits_{\Delta \in T_v} \wt(\Delta)$},
	and $\wt(v) = 0$ for $v \in H$.
\end{definition}

\begin{lemma} \label{lem:H}
For any partition $(H,L)$, $\sum\limits_{v \in L} \wt(v)  \leq t$. If $(H,L)$ is appropriate and \Assum{estimates} holds, then
$\sum\limits_{v \in L} \wt(v) \in [t(1-\ce \cdot \eps),t]$.
\end{lemma}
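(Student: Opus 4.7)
The plan is to rewrite the sum $\sum_{v \in L} \wt(v)$ by swapping the order of summation, so that instead of summing contributions vertex-by-vertex we sum contributions triangle-by-triangle. By the definition of $\wt(v)$ for $v \in L$, we have
\[
\sum_{v \in L} \wt(v) \;=\; \sum_{v \in L} \sum_{\Delta \in T_v} \wt(\Delta) \;=\; \sum_{\Delta \in \Tr} \wt(\Delta) \cdot |\{v \in L : v \in \Delta\}|.
\]
Now I would split the outer sum according to the number $\ell$ of endpoints of $\Delta$ that lie in $L$. If $\ell = 0$ then $\wt(\Delta) = 0$ and the contribution is $0$. If $\ell \geq 1$ then $\wt(\Delta) = 1/\ell$ and the inner count is exactly $\ell$, so the contribution of $\Delta$ is precisely $1$. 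Consequently,
\[
\sum_{v \in L} \wt(v) \;=\; |\{\Delta \in \Tr : \Delta \text{ has at least one endpoint in } L\}|.
\]

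The upper bound then follows immediately, since the right-hand side counts a subset of the $t$ triangles of $G$, giving $\sum_{v \in L} \wt(v) \leq t$. This part holds for \emph{any} partition $(H,L)$, with no appeal to appropriateness or \Assum{estimates}.

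For the lower bound, under the hypothesis that $(H,L)$ is appropriate and \Assum{estimates} holds, I would apply \Clm{bound_heavy}: the number of triangles with no endpoint in $L$ (equivalently, all three endpoints in $H$, which by the claim means triangles of weight $0$) is at most $\ce \cdot \eps t$. Subtracting from $t$ yields
\[
\sum_{v \in L} \wt(v) \;\geq\; t - \ce \cdot \eps t \;=\; t(1 - \ce \cdot \eps),
\]
completing the proof. There is no real obstacle here; the whole argument is a double-counting identity combined with the already-established bound from \Clm{bound_heavy}, so the main thing to be careful about is simply checking that each triangle with $\ell \geq 1$ contributes exactly $1$ (and not, say, $\ell$ or $1/\ell$) to the rewritten sum.
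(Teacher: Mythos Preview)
The proposal is correct and takes essentially the same approach as the paper: both interchange the order of summation to rewrite $\sum_{v\in L}\wt(v)$ as the number of triangles with at least one endpoint in $L$ (equivalently, $t - |\{\Delta : \wt(\Delta)=0\}|$), then apply \Clm{bound_heavy} for the lower bound. The paper phrases the double-counting via an explicit indicator $\chi(v,\Delta)$, but the argument is identical.
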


\begin{proof}
\tchange{
Let $\chi(v,\Delta)$ be an indicator variable such that $\chi(v,\Delta)=1$ if $\Delta$ contains the vertex $v$, and $\chi(v,\Delta)=0$ otherwise. Consider a triangle $\Delta$ that contains $\ell > 0$ light vertices.
	Then
	$$ \sum\limits_{v \in L} \chi(v,\Delta) = \ell =1/\wt(\Delta)\;.$$
	If $\ell = \wt(\Delta) = 0$, then the above expression equals $0$. By interchanging summations,
	$$ \sum\limits_{v \in L} \wt(v) = \sum\limits_{v \in L} \sum\limits_{\Delta \in T_v}\wt(\Delta) = \sum\limits_{\Delta \in T} \wt(\Delta) \sum\limits_{v \in L}  \chi(v,\Delta)
		= t - |\{\Delta \mid \wt(\Delta) = 0\}| .$$
	}
	Clearly for any partition $(H,L)$ the above expression is at most $t$. On the other hand,
		If $(H,L)$ is appropriate and \Assum{estimates} holds, then by \Clm{bound_heavy} we have that $|\{\Delta \mid \wt(\Delta) = 0\}| \leq \ce\cdot \eps t$, and the lemma follows.
\end{proof}

\begin{theorem} \label{thm:vert} Let $s = (c \log(n/\eps)/\eps^3) n/\ot^{1/3}$ where $c$ is a constant, and let $S$ be a sample of $s$  vertices $v_1, v_2,\ldots, v_s$ that are selected uniformly, independently at random. Then
$$\EX\left[\frac{1}{s}\sum\limits_{i=1}^{s} \wt(v_i)\right] \leq \frac{t}{n}\;.$$
Furthermore, if
 $(H,L)$ is appropriate and \Assum{estimates} holds, then
	 $$\EX\left[\frac{1}{s}\sum\limits_{i=1}^{s} \wt(v_i)\right] \in [t(1-\ce\cdot \eps)/n,t/n]$$ and for a
sufficiently large constant $c$,
$$\Pr\left[\frac{1}{s}\sum\limits_{i=1}^{s} \wt(v_i) < t(1-2\ce\cdot \eps)/n\right] < \eps^2/n\;.$$
\end{theorem}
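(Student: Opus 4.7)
\begin{proofof}{Theorem (sketch)}
The plan proceeds in three steps, corresponding to the three assertions of the theorem.

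\textbf{Step 1 (expected value).} Since the $v_i$'s are drawn i.i.d.\ uniformly from $V$ and each summand is identically distributed, linearity gives
\[
\EX\!\left[\frac{1}{s}\sum_{i=1}^s \wt(v_i)\right] \;=\; \EX[\wt(v_1)] \;=\; \frac{1}{n}\sum_{v\in V}\wt(v) \;=\; \frac{1}{n}\sum_{v\in L}\wt(v),
\]
where the last equality uses $\wt(v)=0$ for $v\in H$ by Definition~\ref{def:weight_vertex}. The first assertion then follows from the unconditional upper bound $\sum_{v\in L}\wt(v)\le t$ in Lemma~\ref{lem:H}, and when $(H,L)$ is appropriate and Assumption~\ref{assump:estimates} holds, the lower bound $\sum_{v\in L}\wt(v)\ge t(1-\ce\eps)$ from the same lemma yields the two-sided bound on the expectation.

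\textbf{Step 2 (bounding the range of each summand).} For the concentration part I will invoke the multiplicative Chernoff bound~(\ref{l.b. chernoff}), so I need a uniform upper bound $B$ on each $\wt(v_i)$. Vertices in $H$ contribute $0$, and any $v\in L$ is, by appropriateness, not heavy, so by Definition~\ref{def:heavy} it satisfies $\tr_v\le 2\ot^{2/3}/\eps^{1/3}$. Since $\wt(v)\le\tr_v$ (each triangle contributes weight at most $1$), and $\ot\le t$, we get
\[
\wt(v_i)\;\le\; B \;\eqdef\; \frac{2\,t^{2/3}}{\eps^{1/3}}.
\]

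\textbf{Step 3 (applying Chernoff).} Write $b=\EX[\wt(v_1)]\ge t(1-\ce\eps)/n$. To push the average below $t(1-2\ce\eps)/n$, I choose the multiplicative slack $\gamma$ so that $(1-\gamma)b \le t(1-2\ce\eps)/n$; since $\eps<1/2$, it suffices to take $\gamma = \Theta(\ce\eps)$. Substituting into (\ref{l.b. chernoff}),
\[
\Pr\!\left[\tfrac{1}{s}\sum_{i=1}^{s}\wt(v_i) < t(1-2\ce\eps)/n\right]
\;\le\; \exp\!\left(-\frac{\gamma^2\,b\,s}{2B}\right)
\;\le\; \exp\!\left(-\Omega\!\left(\eps^{3}\cdot\frac{s\,t^{1/3}}{n}\right)\right).
\]
Plugging in $s=(c\log(n/\eps)/\eps^3)\,n/\ot^{1/3}$ and using $\ot\ge t/4$ (so $t^{1/3}/\ot^{1/3}=\Theta(1)$), the exponent becomes $-\Omega(c\log(n/\eps))$, which for $c$ large enough is at most $-3\log(n/\eps)$, giving failure probability at most $(\eps/n)^{3}<\eps^2/n$.

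The main obstacle is a bookkeeping one: making sure the slack $\gamma$ is set so that $(1-\gamma)$ times the lower bound on $b$ is still at most $t(1-2\ce\eps)/n$, and that the constants in $B$ and $b$ combine so that the $t^{2/3}$ in $B$ and the $n/\ot^{1/3}$ in $s$ cancel appropriately to leave only the $\log(n/\eps)/\eps^{-3}$ factor that drives the tail. Once these constants are tracked, both the expectation claims and the concentration claim follow immediately.
\end{proofof}
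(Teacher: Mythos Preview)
Your approach matches the paper's: use Lemma~\ref{lem:H} for the expectation bounds, bound each summand by $\wt(v)\le t_v\le 2\ot^{2/3}/\eps^{1/3}$ for $v\in L$ (since an appropriate $L$ contains no heavy vertex), and apply the multiplicative Chernoff bound~(\ref{l.b. chernoff}). One small fix in Step~3: the inequality must go the other way, $(1-\gamma)b \ge t(1-2\ce\eps)/n$, so that the target event $\{Y<t(1-2\ce\eps)/n\}$ is contained in the Chernoff event $\{Y<(1-\gamma)b\}$; the paper just takes $\gamma=\eps$, which works because $\ce\ge 1$ forces $(1-\eps)(1-\ce\eps)\ge 1-2\ce\eps$.
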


\begin{proof}
\sloppy {Let $Y$ denote the random variable $Y= \frac{1}{s}\sum\limits_{i =1}^s \wt(v_i)$.
 By the first part of \Lem{H},  $\EX\left[\frac{1}{s}\sum\limits_{i=1}^{s} \wt(v_i)\right] \leq t/n$.
Now assume that $(H,L)$ is appropriate and \Assum{estimates} holds.
The claim regarding the expected value of $Y$ follows from
the second part of \Lem{H}, so it remains to prove the claim regarding the deviation from the expected value.
	Note that  $\wt(v) \leq t_v$ for every vertex $v$,
	which for  $v \in L$ is at most $\frac{2\ot^{2/3}}{\eps^{1/3}}$.
	By the multiplicative Chernoff bound and by Item~\ref{assume t} in \Assum{estimates},

	$$\Pr\left[Y < (1-\eps)\EX[Y]\right] < \exp\left(-\frac{\eps^2\EX[Y] s}{4t^{2/3}/\eps^{1/3}} \right) <
	\exp\left( - \frac{\eps ^2\cdot c\log(n/\eps)(n/\eps \ot^{1/3})\cdot t/(2n)}{4\ot^{2/3}/\eps^{1/3}}\right)
	< \frac{\eps^2}{n}\;,$$
where the last inequality holds for a sufficiently large constant $c$.}
\end{proof}

\subsection{A procedure for deciding whether a vertex is heavy}\label{subsec:heavy-proc}
In this subsection we provide a procedure for deciding (approximately) whether a given vertex $v$ is heavy or light.
\dddchange{Recall that a high-level description of the procedure appears in Subsection~\ref{intro-heavy:subsubsec} of the introduction.}

\bigskip
\fbox{
	\begin{minipage}{0.9\textwidth}
		{\bf \heav$(v)$}
		\smallskip
		\begin{compactenum}
			\item If $d_v > 2\om/(\eps \ot)^{1/3}$, output \textbf{heavy}.
			\item For $i = 1,2,\ldots,10\log n$:
			\begin{compactenum}
				\item For $j = 1,2,\ldots,s=\tchange{20}\om^{3/2}/\eps^2 \ot$: \label{step:a}
				\begin{compactenum}
					\item Select an edge $e \in E_v$ uniformly, independently  and at random, and let $u$ be \tchange{the smaller endpoint according to the order $\prec$.}
					\item For $k = 1,2,\ldots,r=\ceil{d_u/\sqrt{\om}}$:
					\begin{compactenum}
						\item Pick a neighbor $w$ of $u$ uniformly at random. Let $x$ denote the endpoint of $e$ that is not $v$.
						\item If $e$ and $w$ form a triangle \emph{and} $x \prec w$, set $Z_k = d_u$, else $Z_k = 0$.
					\end{compactenum}
					\item Set $Y_j = \frac{1}{r}\sum\limits_k Z_k$.
				\end{compactenum}
				\item Set $X_i = \frac{d_v}{s} \sum\limits_j Y_j$.
			\end{compactenum}
			\item If the median of the $X_i$ variables is greater than $\ot^{2/3}/\eps^{1/3}$, output \textbf{heavy}, else output \textbf{light}.
		\end{compactenum}
	\end{minipage}}

	\medskip
	We have three nested loops, with loop variables $i,j,k$ respectively. We refer to these
	as ``iteration $i$", ``iteration $j$", and ``iteration $k$".

	\begin{lemma} \label{lem:x} For any iteration $i$, $\pr[|X_i - t_v| > \eps \cdot \max(t_v, \ot d_v/\om)] < 1/4$.
	\end{lemma}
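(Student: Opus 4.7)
The plan is to prove the lemma by a conditional variance computation followed by Chebyshev's inequality, where the key leverage comes from the association rule: triangles associated to an edge $e=(v,x)$ are only those $(v,x,w)$ with $x \prec w$, so by \Clm{prec} we have the uniform bound $t_e \leq \sqrt{2m}$. Without this bound, the variance of the inner estimator would be too large.

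First I would compute $\EX[X_i]$. Fix an edge $e = (v,x)\in E_v$ and let $u$ be its smaller endpoint under $\prec$. Whether $u=v$ or $u=x$, the set of neighbors $w$ of $u$ for which $(v,x,w)$ is a triangle with $x \prec w$ has size exactly $t_e$. Hence $\Pr[Z_k = d_u \mid e] = t_e/d_u$, so $\EX[Z_k \mid e] = t_e$, which gives $\EX[Y_j \mid e] = t_e$, then $\EX[Y_j] = \frac{1}{d_v}\sum_{e\in E_v} t_e = t_v/d_v$, and finally $\EX[X_i] = t_v$.

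Next I would bound the variance. Since $Z_k \in \{0,d_u\}$, $\var[Z_k\mid e] \leq d_u t_e$. The choice $r = \lceil d_u/\sqrt{\om}\rceil$ ensures $d_u/r \leq \sqrt{\om}$ (considering separately $d_u \geq \sqrt{\om}$ and $d_u < \sqrt{\om}$), so
\[
\var[Y_j \mid e] \;\leq\; \frac{d_u t_e}{r} \;\leq\; \sqrt{\om}\, t_e.
\]
Combined with $\EX[Y_j\mid e]^2 = t_e^2 \leq \sqrt{2m}\cdot t_e$ via the crucial bound $t_e \leq \sqrt{2m}$ from \Clm{prec}, we get $\EX[Y_j^2 \mid e] \leq c_1 \sqrt{m}\, t_e$ for a constant $c_1$ (using $\om = \Theta(m)$ from \Assum{estimates}). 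Averaging over the uniform choice of $e\in E_v$ yields $\var[Y_j] \leq \EX[Y_j^2] \leq c_1 \sqrt{m}\, t_v/d_v$. Since the $Y_j$'s across $j$ are independent,
\[
\var[X_i] \;=\; \frac{d_v^2}{s^2}\cdot s\cdot \var[Y_j] \;\leq\; \frac{c_1 \sqrt{m}\, d_v t_v}{s}.
\]

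Plugging in $s = 20\, \om^{3/2}/(\eps^2 \ot)$ and $\om = \Theta(m)$, this becomes $\var[X_i] \leq c_2\, \eps^2 d_v t_v \ot/m$ for a constant $c_2$ that shrinks as the constant $20$ in $s$ grows. Finally, using $\max(a,b)^2 \geq ab$,
\[
\bigl(\eps\cdot \max(t_v,\, \ot d_v/\om)\bigr)^2 \;\geq\; \eps^2 \cdot t_v\cdot \frac{\ot d_v}{\om} \;\geq\; c_3\, \eps^2\, t_v \ot d_v/m,
\]
so Chebyshev's inequality gives $\Pr[|X_i - t_v| > \eps\max(t_v, \ot d_v/\om)] \leq c_2/c_3$. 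The constant $20$ in $s$ is chosen so that $c_2/c_3 \leq 1/4$. The main obstacle (and the only nontrivial content beyond bookkeeping) is recognizing that the association scheme is exactly what makes $t_e$ small enough to give the right variance bound on $Y_j$; without it, $t_e$ could be as large as the full degree and the claimed concentration would fail.
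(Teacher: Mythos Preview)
Your approach is essentially identical to the paper's: compute $\EX[Y_j\mid e]=t_e$, bound $\var[Y_j\mid e]\le\sqrt{\om}\,t_e$, use $t_e\le\sqrt{2m}$ from \Clm{prec} to control the second-moment term, average over $e$ and over $j$, and apply Chebyshev with the $\max(a,b)^2\ge ab$ trick. The key idea---that the association scheme forces $t_e\le\sqrt{2m}$ and this is what makes the variance manageable---is exactly right.

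One small slip: you invoke ``$\om=\Theta(m)$ from \Assum{estimates}'', but \Assum{estimates} only gives the one-sided bound $m\le 6\om$; there is no assumed upper bound on $\om$. This bites in your last displayed inequality, where $\eps^2 t_v\,\ot d_v/\om\ge c_3\,\eps^2 t_v\,\ot d_v/m$ would require $\om\le m/c_3$. The fix is easy and is what the paper does: never convert to $m$. Bound $\EX[Y_j^2\mid e]\le \sqrt{\om}\,t_e+\sqrt{2m}\,t_e\le 5\sqrt{\om}\,t_e$ (using $m\le 6\om$), so that $\var[X_i]\le \eps^2 d_v t_v\,\ot/(4\om)$, while the threshold squared is $\ge \eps^2 t_v\,\ot d_v/\om$; the $\om$'s cancel and Chebyshev gives exactly $1/4$.
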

	
	\begin{proof} \tchange{Recall that we associate each triangle $(v,x,w)\in T_v$ with $(v,x)$ if $x \prec w$
    and with $(v,w)$ otherwise, so that we have $t_v = \sum_{e \in E_v} t_e$.}
 For an edge $e = (v,x)$, $t_e$
    is upper bounded by the number of neighbors $w$ of $x$ such that $x \prec w$. By \Clm{prec}, $t_e \leq \sqrt{2m}$.

    Fix an iteration $j$ and let $e_j$ denote the edge chosen in the $j\th$ iteration and $u_j$
		denote its smaller degree endpoint. We use $\cE_j$ to denote the event of $e_j$ being chosen.
		Conditioned on the event $\cE_j$, the probability that $Z_k$ is non-zero is
		is $t_{e_j}/d_{u_j}$. Hence,
		\begin{align*}
		\EX[Z_k \mid \cE_j] = \frac{t_{e_j}}{d_{u_j}}\cdot d_{u_j} = t_{e_j},
		\end{align*}
		and
		\begin{align*}
		\var[Z_k \mid \cE_j] \leq \EX[Z^2_k \mid \cE_j] \leq d_{u_j} \cdot \EX[Z_k \mid \cE_j].
		\end{align*}
		By linearity of expectation,
		\begin{equation}
		\EX[Y_j \mid \cE_j] = \EX\left[\frac{1}{r}\sum\limits_{k=1}^{r}Z_k \mid \cE_j\right]=\frac{1}{r} \sum\limits_{k=1}^{r}\EX\left[Z_k \mid \cE_j\right]
		=\tchange{t_{e_j}}. \numberthis \label{eq:exp_conditioned_Yj}
		\end{equation}
		By the independence of the $Z_k$ variables,
		\begin{align*}
		\var[Y_j \mid \cE_j] &=\var\left[\frac{1}{r}\sum\limits_{k=1}^{r}Z_k \mid\; \cE_j \right] =\frac{1}{r^2}\sum\limits_{k=1}^{r} \var\left[Z_k \mid \cE_j\right] \leq \frac{1}{r^2}\sum\limits_{k=1}^{r}  d_{u_j}\cdot \EX\left[Z_k \mid \cE_j\right] \\
		&=\frac{d_{u_j}}{r^2}\cdot r\cdot t_{e_j} \leq \sqrt {\om} \cdot t_{e_j}. \numberthis \label{eq:var_conditioned_Yj}
		\end{align*}
		The conditioning can be removed to yield
		\begin{align*}
		\EX[Y_j] =\sum\limits_{e\in E_v}\frac{1}{d_v} \cdot \EX[Y_j \mid \cE_j] = \frac{1}{d_v}\cdot \sum\limits_{e \in E_v} \tchange{t_e = \frac{t_v}{d_v}. }\numberthis \label{eq:exp_Yj}
		\end{align*}
		By the law of total variance, the law of total expectation, the bounds $t_{e_j} \leq \sqrt{2m}$
and \tchange{$m \leq 6\om$}, \tchange{and by Equations~\eqref{eq:exp_conditioned_Yj} and \eqref{eq:var_conditioned_Yj}}:
		\begin{align*}
		\var[Y_j] &=\EX _{e_j} \left[\var \left[Y_j \mid \cE_j \right]\right] + \var _{e_j}\left[\EX \left[Y_j \mid \cE_j \right]\right] \\
&\leq \EX_{e_j}\left[\sqrt {\om} \cdot \EX \left[Y_j \mid \cE_j\right]\right] + \var_{e_j}[t_{e_j}] \\
&=\sqrt {\om} \cdot \EX\left[Y_j\right] + \EX_{e_j}[t^2_{e_j}] \\
&= \sqrt {\om} \cdot \EX\left[Y_j\right] + \frac{1}{d_v}\cdot\sum_{e_j \in E_v} t^2_{e_j}\\
&\leq \sqrt {\om} \cdot \EX\left[Y_j\right] + \sqrt{2m}\cdot \EX[Y_j] < \tchange{5}\sqrt{\om}\EX[Y_j]\;.
 \numberthis \label{eq:var_Yj}
		\end{align*}
		Let $\overline{Y} = \frac{1}{s}\sum\limits_j Y_j$. \tchange{By Equation~\eqref{eq:exp_Yj},} $\EX[\bar{Y}] = {t_v}/{d_v}$. \tchange{By Equation~\eqref{eq:var_Yj},}			

		\begin{align*}
		 \var[\bar{Y}]&=\var\left[\frac{1}{s}\sum\limits_{j=1}^{s}Y_j\right]=\frac{1}{s^2}\sum\limits_{j=1}^{s}\var[Y_j] < \frac{1}{s^2}\sum\limits_{j=1}^{s}5\sqrt {\om}\cdot \EX\left[Y_j\right] = \frac{5\sqrt {\om}}{s}\cdot \EX\left[\frac{1}{s}\sum\limits_{j=1}^{s}Y_j\right] \\
		&= \frac{5\sqrt {\om}}{s}\EX\left[\bar{Y}\right]
\;= \frac{5\sqrt{\om}\cdot(t_v/d_v)}{20 \cdot \om^{3/2}/\eps^2\ot} \;=\;
  \frac{\eps^2}{4} \cdot\frac{t_v}{d_v}\cdot\frac{\ot}{\om}\;. \numberthis \label{bound_var_y_bar}
		\end{align*}
		By Chebyshev's inequality and Equation \eqref{bound_var_y_bar},
		\begin{align*}
		\pr\left[\left\lvert \overline{Y} - \frac{t_v}{d_v} \right\rvert> \eps \cdot\max\left(\frac{t_v}{d_v}, \frac{\ot}{\om}\right)\right] & \leq
		\frac{\var[\overline{Y}]}{\eps^2 \max(t_v/d_v,\ot/\om)^2} <
		\frac{1}{4}\;.
		\end{align*}
		Since $X_i=d_v \cdot \bar{Y}$, we have that $\pr[|X_i - t_v| > \eps \cdot \max(t_v, \ot d_v/\om)] < 1/4$.
	\end{proof}
	

	\begin{lemma} \label{lem:heavy}
		For every vertex $v$, if $v$ is heavy, then a call to \heav($v$) returns \textbf{heavy} with probability at least $1-1/n^2$. If $v$ is light, then a call to \heav($v$) returns  \textbf{light} with probability at least $1-1/n^2$.
	\end{lemma}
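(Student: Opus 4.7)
The plan is to carry out a case analysis on why $v$ is heavy or light, combined with standard median amplification of the single-iteration guarantee given by \Lem{x}.

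First I would dispose of the trivial case. If $v$ is heavy because $d_v > 2\om/(\eps\ot)^{1/3}$, then Step~1 of \heav$(v)$ returns \textbf{heavy} deterministically, so the claim holds. For all remaining cases we may assume $d_v \leq 2\om/(\eps\ot)^{1/3}$; in particular, the procedure proceeds past Step~1 and the final decision is made in Step~3 by comparing the median of $X_1,\ldots,X_{10\log n}$ to the threshold $\tau \eqdef \ot^{2/3}/\eps^{1/3}$.

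Call an iteration $i$ \emph{good} if $|X_i - t_v| \leq \eps \cdot \max(t_v, \ot d_v/\om)$, which by \Lem{x} occurs with probability at least $3/4$. I would show that on any good iteration, $X_i$ lies on the correct side of $\tau$. If $v$ is heavy (so $t_v > 2\ot^{2/3}/\eps^{1/3} = 2\tau$), then from $d_v \leq 2\om/(\eps\ot)^{1/3}$ we get $\ot d_v/\om \leq 2\tau$ and hence $\eps\max(t_v, \ot d_v/\om) \leq \eps\, t_v + 2\eps^{2/3}\ot^{2/3}$; combining with $t_v > 2\tau$ and $\eps$ small enough gives $X_i \geq t_v - \eps\max(t_v,\ot d_v/\om) > \tau$. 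Symmetrically, if $v$ is light (so $t_v \leq \tau/2$), then $X_i \leq t_v + \eps\max(t_v,\ot d_v/\om) \leq \tau/2 + 2\eps^{2/3}\ot^{2/3} \leq \tau$. The only subtlety here is that the heavy/light thresholds differ by a factor of $4$ while the slack from Step~1 caps $\ot d_v/\om$ by $2\tau$, so both required inequalities hold whenever $\eps$ is smaller than an absolute constant (else one can rerun with $\eps$ a fixed smaller value, as the paper already does for $\eps \geq 1/2$).

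Finally I would boost the per-iteration $3/4$ success probability to $1 - 1/n^2$ via the median. Since the $10\log n$ iterations use fresh randomness, they are mutually independent, so by a standard multiplicative Chernoff bound the probability that fewer than half of them are good is at most $\exp(-\Omega(\log n)) \leq 1/n^2$ once the constant $10$ in the outer loop is fixed appropriately. Whenever at least half the iterations are good, the median $X_i$ lies on the correct side of $\tau$, and \heav$(v)$ returns the correct label. The main (mild) obstacle is purely the arithmetic of step two: pinning down that the factor-of-$4$ gap between the heavy and light thresholds on $t_v$ is large enough to absorb both the additive error $\eps\max(t_v, \ot d_v/\om)$ from \Lem{x} and the extra $\ot d_v/\om$ slack introduced by not catching $v$ in Step~1; all other ingredients are routine.
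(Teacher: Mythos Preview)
Your proposal is correct and follows essentially the same approach as the paper: dispose of the large-degree case, then use \Lem{x} to show each $X_i$ lands on the correct side of the threshold $\tau=\ot^{2/3}/\eps^{1/3}$ with probability $\geq 3/4$, and boost via the median over $10\log n$ independent iterations. The only cosmetic difference is that in the heavy case the paper observes directly that $t_v > 2\tau \geq \ot d_v/\om$ forces $\max(t_v,\ot d_v/\om)=t_v$, so the error term is simply $\eps t_v$ and $X_i \geq (1-\eps)t_v > \tau$ follows immediately from $\eps \leq 1/2$; your bound $\eps\max(t_v,\ot d_v/\om)\leq \eps t_v + 2\eps^{2/3}\ot^{2/3}$ is valid but unnecessarily loose there.
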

	
	\begin{proof} First consider a heavy vertex $v$. Clearly, if $d_v > 2\om/(\eps \ot)^{1/3}$, then $v$ is declared heavy. Therefore, assume that $t_v > 2\ot^{2/3}/\eps^{1/3}$ and $d_v \leq 2\om/(\eps \ot)^{1/3}$, so that $\ot d_v/\om \leq 2\ot^{2/3}/\eps^{1/3}$, \tchange{and hence $\max(\ot d_v/\om, t_v)=t_v$}. By \Lem{x}, and since $\eps<\leq 1/2$, for any iteration $i$, $\Pr\left[|X_i - t_v| > \eps t_v\right] < 1/4$. Therefore, $\Pr[X_i < \ot^{2/3}/\eps^{1/3}] < 1/4$, and by Chernoff, the probability that the median of the  $X_i$ variables (where $i = 1,\dots, 10 \log n$) will be greater than $\ot^{2/3}/\eps^{1/3}$ is at least $1-1/n^2$. Hence \heav$(v)$ outputs \textbf{heavy} with probability at least $1-1/n^2$.
		
		Now consider a light vertex $v$. Since $d_v \leq 2\om/(\eps \ot)^{1/3}$ and $t_v \leq \ot^{2/3}/2\eps^{1/3}$, it holds that $\ot d_v/\om \leq 2\ot^{2/3}/\eps^{1/3}$, implying that $\max(\ot d_v/\om, t_v)=t_v \leq 2\ot^{2/3}/\eps^{1/3}$. Therefore, by \Lem{x}, $\Pr[|X_i - t_v| > \eps (2\ot^{2/3}/\eps^{1/3})] < 1/4$, and the probability that the median will be less than $\ot^{2/3}/\eps^{1/3}$ is at least $1-1/n^2$. Hence $v$ is declared \textbf{light} with probability at least $1-1/n^2$.
	\end{proof}
	
	The following is a corollary of \Lem{heavy}.
	\begin{corollary} \label{cor:heavy}
Consider running \heav~for all the vertices in the graph. Let $H$ denote the set of vertices that are declared heavy
and let $L$ denote the set of vertices that are declared light.
Then, with probability at least
$1-1/n$, the partition $(H,L)$ is appropriate (as defined in Definition~\ref{def:heavy}).
	\end{corollary}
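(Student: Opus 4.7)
The plan is to deduce the corollary from \Lem{heavy} by a straightforward union bound, since the definition of an appropriate partition only constrains the strictly heavy vertices (to lie in $H$) and the strictly light vertices (to lie in $L$), leaving vertices that are neither free to be classified either way.

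First I would invoke \Lem{heavy}, which gives that for each individual vertex $v$, the procedure \heav$(v)$ misclassifies $v$ with probability at most $1/n^2$: specifically, if $v$ is heavy, it is declared \textbf{light} with probability at most $1/n^2$, and if $v$ is light, it is declared \textbf{heavy} with probability at most $1/n^2$. Note that the guarantee of \Lem{heavy} only says something in these two cases, which is exactly what we need.

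Next I would define, for each vertex $v \in V$, the ``bad event'' $\cB_v$ that either (i) $v$ is heavy and was declared light, or (ii) $v$ is light and was declared heavy. By \Lem{heavy}, $\Pr[\cB_v] \leq 1/n^2$ for every $v$, regardless of whether $v$ is heavy, light, or neither (in the ``neither'' case, $\cB_v$ is simply the empty event and the bound is trivial). A union bound over the at most $n$ vertices then yields
\[
\Pr\!\left[\bigcup_{v \in V} \cB_v\right] \;\leq\; n \cdot \frac{1}{n^2} \;=\; \frac{1}{n}\;.
\]

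Finally I would conclude that on the complement event, which occurs with probability at least $1 - 1/n$, no heavy vertex is declared light and no light vertex is declared heavy; equivalently, every heavy vertex lies in $H$ and every light vertex lies in $L$. This is precisely the definition of $(H,L)$ being appropriate, completing the proof. No step here is an obstacle: the whole argument is a direct union bound, with the only subtlety being the harmless observation that the ``in-between'' vertices carry no constraint in the definition of appropriateness.
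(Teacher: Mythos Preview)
Your proposal is correct and matches the paper's approach: the paper states the result as an immediate corollary of \Lem{heavy} without writing out a proof, and the union bound you give is exactly the intended one-line argument.
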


	
	We now turn to analyze the running time of \heav. The proof will be similar to the complexity analysis of the exact triangle counter of Chiba and Nishizeki~\cite{ChNi85}.
	
	\begin{lemma} \label{lem:xtime} If Item~\ref{assume m} in \Assum{estimates} holds, then for every vertex $v$ the expected running time of \heav$(v)$ is $O^*(\om^{3/2}/\ot)$.
	\end{lemma}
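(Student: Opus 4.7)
The plan is to track the cost of each nested loop and reduce everything to bounding the expected length of the innermost ($k$-)loop, namely $r = \lceil d_{u}/\sqrt{\om}\rceil$. The outer ($i$-)loop runs $O(\log n)$ times and the middle ($j$-)loop runs $s = O(\om^{3/2}/(\eps^2 \ot))$ times, each with $O(1)$ setup (sampling $e \in E_v$, identifying $u$, comparing two vertices under $\prec$) and with $O(1)$ cost per step of the $k$-loop. Hence the total expected running time is $O(\log n \cdot s \cdot \EX[r])$, and everything hinges on showing $\EX[r] = O(1)$, or equivalently $\EX[d_u] = O(\sqrt{\om})$, where the expectation is over the uniformly random choice of edge $e = (v,x) \in E_v$ and $u = \min_{\prec}(v,x)$.

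To establish this, I would decompose the sum $\EX[d_u] = \frac{1}{d_v}\sum_{x \in \Gamma_v} d_{\min_{\prec}(v,x)}$ by splitting $\Gamma_v$ into three buckets. \textbf{(i)} Neighbors $x$ with $v \prec x$: by \Clm{prec} there are at most $\sqrt{2m}$ of them, and each contributes $d_v$, for a total of at most $d_v\sqrt{2m}$. \textbf{(ii)} Neighbors $x$ with $x \prec v$ and $d_x \leq \sqrt{2m}$: there are at most $d_v$ such neighbors, and each contributes $d_x \leq \sqrt{2m}$, again for a total of at most $d_v\sqrt{2m}$. \textbf{(iii)} Neighbors $x$ with $x \prec v$ and $d_x > \sqrt{2m}$: since the total degree in $G$ is $2m$, at most $\sqrt{2m}$ vertices in the entire graph can have degree exceeding $\sqrt{2m}$, so there are at most $\sqrt{2m}$ such neighbors of $v$; each contributes $d_x \leq d_v$ (because $x \prec v$ forces $d_x \leq d_v$), for a total of at most $d_v\sqrt{2m}$. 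Summing the three buckets yields $\sum_{x \in \Gamma_v} d_{\min_{\prec}(v,x)} \leq 3d_v\sqrt{2m}$, so $\EX[d_u] \leq 3\sqrt{2m} = O(\sqrt{\om})$ by Item~\ref{assume m} of \Assum{estimates}.

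Combining, $\EX[r] \leq 1 + \EX[d_u]/\sqrt{\om} = O(1)$, so the total expected running time is $O(\log n \cdot s) = O((\log n/\eps^2) \cdot \om^{3/2}/\ot) = O^*(\om^{3/2}/\ot)$, as claimed. The main obstacle will be Bucket (iii): the high-degree neighbors oriented ``downward'' toward $v$ under $\prec$, where $u = x$ has degree comparable to $d_v$ and is genuinely expensive to sample from. Handling this bucket crucially relies on the arboricity-style observation that only $O(\sqrt{m})$ vertices can have degree exceeding $\sqrt{m}$; combined with the asymmetry enforced by $\prec$, this is precisely what makes the ``sample from the smaller endpoint'' trick pay off, mirroring the mechanism underlying the Chiba--Nishizeki bound referenced just before the lemma.
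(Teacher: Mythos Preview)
Your argument is correct, but it is more intricate than the paper's. The paper simply splits on whether $d_v \leq \sqrt{\om}$. If so, then for every edge $e=(v,x)\in E_v$ the smaller endpoint has degree at most $d_v \leq \sqrt{\om}$, so $r=1$ deterministically. If $d_v > \sqrt{\om}$, the paper bounds the expected per-sample cost by
\[
\frac{1}{d_v}\sum_{x\in\Gamma_v}\Big(1+\frac{\min(d_v,d_x)}{\sqrt{\om}}\Big)
\;\leq\; 1+\frac{1}{\sqrt{\om}\,d_v}\sum_{x\in\Gamma_v} d_x
\;\leq\; 1+\frac{2m}{\sqrt{\om}\,d_v}
\;=\;O(1),
\]
using only $\min(d_v,d_x)\leq d_x$, the trivial bound $\sum_{x\in\Gamma_v} d_x \leq 2m$, and then $d_v>\sqrt{\om}$ together with $m\leq 6\om$. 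No bucketing, no appeal to \Clm{prec}, no count of high-degree vertices.

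Your three-bucket decomposition also works and has the minor advantage of being uniform in $d_v$ (no case split), but it invokes more machinery---\Clm{prec} for bucket~(i), a global high-degree count for bucket~(iii), and the $\prec$ ordering throughout---none of which the paper actually needs here. In particular, your discussion of bucket~(iii) as the ``main obstacle'' overstates the difficulty: once you allow the case $d_v>\sqrt{\om}$, the crude bound $\sum_{x\in\Gamma_v} d_x\leq 2m$ handles everything in one line.
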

	
	\begin{proof} We first argue that the expected time to generate a single sample of $Y_j$
		is $O(1)$.
		Our query model allows for selecting an edge in $E_v$ uniformly at random by a single
		query.
If $d_v \leq \sqrt{\om}$, then the degree of the smaller endpoint for any $e \in E_v$
		is at most $\sqrt{\om}$. Hence a sample is clearly generated in $O(1)$ time. Suppose that $d_v > \sqrt{\om}$.
		If an 	edge $e = (v,u)$ is sampled, then the runtime is $O(1 + \min(d_v,d_u)/\sqrt{\om})$. Hence, the
		expected runtime to generate $Y_j$ is, up to constant factors, at most:
%
\begin{align*}
			\frac{1}{d_v} \sum\limits_{u \in \Gamma_v} \left(1+\frac{\min\left\{ d_v,d_u\right\} }{\sqrt{\om}}\right) \leq   1+  \frac{1}{\sqrt{\om} \cdot d_v} \sum\limits_{u \in \Gamma_v} d_u
			\leq 1+  \frac{1}{\sqrt{\om} \cdot d_v} \sum\limits_{u \in V} d_u \leq 1+\frac{2m}{\sqrt{\om} \cdot d_v} \leq 5,
		\end{align*}
		where the last inequality follows from  Item~\ref{assume m} in \Assum{estimates}

By the above, 
each iteration of the `for' loop in Step~\ref{step:a} takes $O(1)$ time in expectation. Therefore, 
together, all iterations of Step~\ref{step:a}
take $O(\om^{3/2}/(\eps\ot)) $ time in expectation, and since it is repeated $O(\log n)$ times, the expected running time of the procedure is  $(\om^{3/2}/\ot) \cdot \polylog$.
	\end{proof}

	\subsection{Estimating the number of triangles given $\om$ and $\ot$} \label{sec:full}
	
	We are now ready to present an algorithm \triest{} that takes $\om$, $\ot$  as input (``advice''), and
outputs an estimate of $\tr$.  Later, we employ the the average degree approximation algorithm of Feige~\cite{feige2006sums} and a geometric search to get the bonafide algorithm that estimates $\tr$ without any initial estimates $\om$ and $\ot$.
\dddchange{Recall that a high-level description of the procedure appears in Subsection~\ref{intro-alg:subsubsec} of the introduction.}
	In what follows we rely on the following assumption.
	\begin{assumption} \label{assum:apropriate}
		We will assume that the random coins used by \heav{} are fixed in advance, and that the partition $(H,L)$ as defined in \Cor{heavy} is indeed appropriate.
	\end{assumption}
	
	By \Cor{heavy} this assumption only adds $1/n$ to the error probability in all subsequent probability
	bounds. Recall that we use $c,c_1,\dots$ to denote sufficiently large constants.
	
\begin{figure}
	\fbox{
		\begin{minipage}{0.9\textwidth}
			{\bf \triest$(\om,\ot,\eps)$}
			
			\smallskip
			\begin{compactenum}
				\item Sample $s_1 = c_1 \eps^{-3}\log (n/\eps) (n /\ot^{1/3})$  vertices, uniformly, independently and at random. Denote the chosen multiset $S$. \label{step1}
				\item Set up a data structure to enable sampling vertices in $S$ proportional to their degree.
				\item For $i = 1, 2, \ldots, s_2 = c_2 \eps^{-4}(\log^2n)  (\om^{3/2} /\ot)$:
				
				\begin{compactenum}
					\item \label{step3} Sample $v \in S$ proportional to $d_v$ and sample  $e \in E_v$ uniformly at random. Let $u$ be \tchange{the smaller endpoint according to the order $\prec$.} Let $x$ be the endpoint of $e$ that is not $v$.
					\item If $d_u \leq \sqrt{\om}$, set $r=1$ with probability $d_u/\sqrt{\om}$ and set $r=0$ otherwise.
					If $d_u > \sqrt{\om}$, set $r = \ceil{d_u/\sqrt{\om}}$.
					\item Repeat for $j = 1,2,\ldots,r$:
					\begin{compactenum}
						\item Pick a neighbor $w$ of $u$ uniformly at random.
						\item If $e$ and $w$ do not form a triangle,  set $Z_j = 0$.
                        \item If $e$ and $w$ form a triangle and $w \prec x$, set $Z_j = 0$.
						\item If $e$ and $w$ form a triangle $\Delta$ and $x \prec w$: call \heav{} for all vertices in $\Delta$, and let
						
						$ \;\;\;\;Z_j = \begin{cases}
						0 &\mbox{if $\heav(v)$ returned \textbf{heavy} }\\
						\max(d_u,\sqrt{\om})\cdot \wt(\Delta) &\mbox{otherwise}
						\end{cases}\;.$
					\end{compactenum}
					\item Set $Y_i = \frac{1}{r}\sum\limits_{j=1}^{r} Z_j$. (If $r=0$, set $Y_i = 0$.)
				\end{compactenum}
				\item Output $X = \frac{n}{s_1 s_2}\cdot \left(\sum\limits_{v \in S} d_v\right) \cdot \left(\sum\limits_{i=1}^{s_2} Y_i\right)$ .\label{step4}
			\end{compactenum}
		\end{minipage}}
\end{figure}

Recall that $\ce$ is the constant defined in Claim~\ref{clm:bound_heavy}.	
		\begin{theorem} \label{thm:estimate}
For $X$ as defined in Step~\ref{step4} of \triest, $E[X] \leq t$.	Moreover, if $(H,L)$ is appropriate and \Assum{estimates} holds, then $\EX[X] \in [t(1- \ce \cdot \eps),t]$ and $\Pr[X < t(1-3\ce \cdot \eps)] < 3\eps/\log n$.
		\end{theorem}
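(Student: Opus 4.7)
My plan is to compute $\EX[X]$ via the tower rule, conditioning on $S$. Given $S$, each iteration $i$ is equivalent to picking a directed edge $e=(v,x)$ originating at some $v\in S$ uniformly at random from $E_S:=\bigcup_{v\in S}E_v$ (so each such $e$ has probability $1/d_S$, where $d_S=\sum_{v\in S}d_v$). Fixing such $e$ and letting $u$ be its smaller endpoint under $\prec$, I handle the two cases $d_u\le\sqrt\om$ and $d_u>\sqrt\om$ separately and verify in each that $\EX[Y_i\mid e]=\wt(T_e)\cdot\mathbf{1}[v\in L]$: when $d_u\le\sqrt\om$ the probability-$d_u/\sqrt\om$ gate on $r=1$ combined with the rescaling $\sqrt\om$ in $Z_1$ exactly compensates for the missing replication, while when $d_u>\sqrt\om$ linearity over the $r\ge d_u/\sqrt\om$ i.i.d.\ inner samples suffices. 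Since $\wt(v)=\sum_{e\in E_v}\wt(T_e)$ for $v\in L$ (and $\wt(v)=0$ for $v\in H$), summing yields $\EX[Y_i\mid S]=\wt(S)/d_S$ with $\wt(S):=\sum_{v\in S}\wt(v)$, and hence $\EX[X\mid S]=(n/s_1)\wt(S)$. Taking the outer expectation over the uniform multiset $S$ gives $\EX[X]=\sum_{v\in V}\wt(v)$, and \Lem{H} then delivers both $\EX[X]\le t$ unconditionally and $\EX[X]\in[(1-\ce\eps)t,t]$ under the appropriate-partition assumption.

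For the concentration bound, I first control $\var[Y_i\mid S]$ by estimating $\EX[Y_i^2\mid e]$ via the law of total variance, mimicking the calculation in \Lem{x}. In the case $d_u\le\sqrt\om$, direct computation gives $\EX[Y_i^2\mid e]\le\sqrt\om\,\wt(T_e)$; in the case $d_u>\sqrt\om$, the i.i.d.\ structure of the $Z_j$'s combined with $r\ge d_u/\sqrt\om$ yields $\var[Y_i\mid e]\le\sqrt\om\,\wt(T_e)$, and the bound $\wt(T_e)\le t_e\le\sqrt{2m}$ from \Clm{prec} (with $m\le 6\om$) absorbs the $\EX[Y_i\mid e]^2$ term into $O(\sqrt\om)\wt(T_e)$. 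Averaging (weighted by $1/d_S$) over edges in $E_S$ produces $\var[Y_i\mid S]\le \EX[Y_i^2\mid S]=O(\sqrt\om\,\wt(S)/d_S)$. Writing $X=(n d_S/s_1)\bar Y$ with $\bar Y=(1/s_2)\sum_i Y_i$, Chebyshev's inequality then yields
\[
\pr\!\left[X<(1-\eps)\EX[X\mid S]\,\big|\,S\right]
\;\le\;\frac{\var[X\mid S]}{\eps^2\,\EX[X\mid S]^2}
\;=\;O\!\left(\frac{d_S\sqrt\om}{\eps^2\,\wt(S)\,s_2}\right).
\]

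It remains to condition on a ``good'' $S$ and union-bound. Let $A_1$ be the event $\wt(S)<(1-2\ce\eps)(s_1/n)t$; by \Thm{vert}, $\pr[A_1]<\eps^2/n$. Let $A_2$ be the event $d_S>(c_5\log n/\eps)\cdot 2s_1 m/n$; since $\EX[d_S]=2s_1 m/n$, Markov's inequality gives $\pr[A_2]\le\eps/(c_5\log n)$. On $\overline{A_1}\cap\overline{A_2}$, the ratio $d_S/\wt(S)=O(\log n\cdot m/(\eps t))$, so plugging in $s_2=c_2\eps^{-4}\log^2 n\cdot\om^{3/2}/\ot$ together with $m\le 6\om$ and $\ot\ge t/4$ simplifies the Chebyshev estimate of the previous paragraph to $O(\eps/\log n)$ for $c_2$ sufficiently large. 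A union bound over $A_1$, $A_2$, and the conditional Chebyshev failure then gives $\pr[X<(1-\eps)(1-2\ce\eps)t]\le 3\eps/\log n$, and the elementary inequality $(1-\eps)(1-2\ce\eps)\ge 1-3\ce\eps$ (valid since $\ce\ge 1$) yields the claimed bound. The main obstacle is that $d_S$ is not tightly concentrated --- only Markov is readily available --- and I resolve this by absorbing the $\log n/\eps$ Markov slack into the extra $\log n$ factor already present in the definition of $s_2$.
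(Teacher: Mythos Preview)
Your proof is correct and follows essentially the same approach as the paper's own proof: you condition on $S$ to compute $\EX[Y_i\mid S]$ and $\var[Y_i\mid S]$ (the paper packages this as \Clm{Y}), then invoke \Thm{vert} and Markov's inequality to control $\wt(S)$ and $d_S$ respectively (the paper calls such an $S$ ``good'' and ``great''), and finish with Chebyshev and a union bound. The only cosmetic difference is that the paper bounds $\var[Y_i\mid S]$ via the law of total variance decomposition, whereas you bound $\EX[Y_i^2\mid S]$ directly; both routes yield the same $O(\sqrt{\om})\EX[Y_i\mid S]$ estimate.
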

		
		There are three ``levels" of randomness. First is the choice of $S$,
		second is the choice of $e$ (Step~\ref{step3}), and finally the $Z_j$'s.
		When analyzing the randomness in any level, we condition on the previous levels.
		Before proving the theorem, we present the following definition and claim.

		\begin{definition} \label{def:good_great}
			Let $S$ be a multiset of $s_1$ vertices. We say that $S$ is \textsf{good} if $\sum\limits_{v \in S}\wt(v)/s_1 \geq t(1-2\ce \cdot \eps)/n$. We say that $S$ is \textsf{great}
			if, in addition to being good, $d_S = \sum\limits_{v \in S} d_v \leq s_1(2m/n)(\log n/\eps)$.
		\end{definition}

\def\tildT{\widetilde{T}}
\def\tildwt{\widetilde{\wt}}
		\begin{claim} \label{clm:Y} Fix the choice of the set $S$, and let $d_S = \sum\limits_{v \in S} d_v$. For every $i$, $ \EX[Y_i \mid S] = d^{-1}_S \sum\limits_{v \in S} \wt(v)$ and $\var[Y_i \mid S] <  5\sqrt{\om}\cdot \EX[Y_i \mid S]$.
		\end{claim}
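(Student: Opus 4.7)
The plan is to compute $\EX[Y_i\mid S]$ and $\var[Y_i\mid S]$ by conditioning on the successive random choices made inside a single iteration: first the vertex $v\in S$ (sampled with probability $d_v/d_S$), then the edge $e=(v,x)\in E_v$ (sampled uniformly with probability $1/d_v$), and finally the neighbors $w$ of $u$ that determine the $Z_j$'s. Under \Assum{apropriate}, the random bits of \heav{} are fixed and the induced partition $(H,L)$ is appropriate, so $\heav(v)$ returns \textbf{heavy} iff $v\in H$, in which case $\wt(v)=0$ and every $Z_j$ is $0$. We may therefore restrict attention to $v\in L$.

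For $\EX[Y_i\mid S]$, fix $(v,e)$ and let $u$ be the $\prec$-smaller endpoint of $e$. In both regimes the success probability of one trial, namely that $w$ is a common neighbor of $v$ and $x$ with $x\prec w$, equals $|T_e|/d_u=t_e/d_u$, and in either case the resulting $Z_j$ is scaled so as to be an unbiased estimator of $\wt(T_e):=\sum_{\Delta\in T_e}\wt(\Delta)$. Concretely, if $d_u>\sqrt{\om}$ then $Z_j=d_u\cdot\wt(\Delta)\cdot\mathbf 1[\text{success}]$, so $\EX[Z_j\mid v,e]=\wt(T_e)$ and averaging over $r$ iid trials gives $\EX[Y_i\mid v,e]=\wt(T_e)$; if $d_u\le\sqrt{\om}$ then the Bernoulli coin for $r$ contributes an additional factor $d_u/\sqrt{\om}$ while $\max(d_u,\sqrt{\om})=\sqrt{\om}$ contributes $\sqrt{\om}$, and the product again yields $\EX[Y_i\mid v,e]=\wt(T_e)$. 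Since the triangles in $T_v$ are partitioned by the edges of $E_v$ according to \Def{associate}, summing over $v,e$ gives
\[
\EX[Y_i\mid S]=\frac{1}{d_S}\sum_{v\in S}\sum_{e\in E_v}\wt(T_e)=\frac{1}{d_S}\sum_{v\in S}\wt(v).
\]

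For the variance, apply the law of total variance conditioned on $(v,e)$. For the inner term, observe that $Z_j\le\max(d_u,\sqrt{\om})$ pointwise (using $\wt(\Delta)\le 1$), so $\EX[Z_j^2\mid v,e]\le\max(d_u,\sqrt{\om})\cdot\EX[Z_j\mid v,e]$; in the $d_u>\sqrt{\om}$ regime, $r\ge d_u/\sqrt{\om}$ so $\var[Y_i\mid v,e]\le\frac{1}{r}\max(d_u,\sqrt{\om})\EX[Y_i\mid v,e]\le\sqrt{\om}\,\EX[Y_i\mid v,e]$, and in the $d_u\le\sqrt{\om}$ regime the same bound follows from $Y_i\le\sqrt{\om}$ directly. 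Taking expectation over $(v,e)$ yields $\EX_{v,e}[\var[Y_i\mid v,e,S]]\le\sqrt{\om}\,\EX[Y_i\mid S]$. For the outer term, bound $\var_{v,e}[\EX[Y_i\mid v,e,S]]\le\EX_{v,e}[\wt(T_e)^2]$, and use $\wt(T_e)\le t_e\le\sqrt{2m}$ by \Clm{prec} to obtain $\EX_{v,e}[\wt(T_e)^2]\le\sqrt{2m}\,\EX[Y_i\mid S]$. Finally, invoke Item~\ref{assume m} of \Assum{estimates} ($m\le 6\om$) to get $\sqrt{2m}<4\sqrt{\om}$, so summing the two contributions gives $\var[Y_i\mid S]<5\sqrt{\om}\,\EX[Y_i\mid S]$.

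The main obstacle is the case split for $d_u$: the step where $r$ is itself random (and where $\max(d_u,\sqrt{\om})=\sqrt{\om}$) needs to be handled separately from the deterministic-$r$ case, and one has to verify that the compensating factors in the definition of $Z_j$ produce the same unbiased estimator $\wt(T_e)$ and the same $\sqrt{\om}$ bound on $Z_j$ in both cases. The rest is a clean application of the law of total variance together with \Clm{prec} and \Assum{estimates}.
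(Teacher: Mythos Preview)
The proposal is correct and follows essentially the same approach as the paper: condition on the sampled pair $(v,e)$, split on whether $d_u\le\sqrt{\om}$ or $d_u>\sqrt{\om}$ to show $\EX[Y_i\mid v,e,S]=\wt(T_e)$ and $\var[Y_i\mid v,e,S]\le\sqrt{\om}\,\EX[Y_i\mid v,e,S]$, then apply the law of total variance together with $\wt(T_e)\le t_e\le\sqrt{2m}$ from \Clm{prec} and $m\le 6\om$ from \Assum{estimates}. Your treatment of the heavy case via \Assum{apropriate} and the explicit arithmetic $\sqrt{2m}<4\sqrt{\om}$ are exactly what the paper does (implicitly).
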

		
		\begin{proof} This is similar to the argument in \Lem{x}. Let $v_i$ be the chosen vertex in the $i\th$ iteration of the algorithm, and let $e_i$ be the chosen edge. We refer to this event by $\cE_i$, and condition over the set $S$ being chosen and the event $\cE_i$.
Denote by $u_i$ the lower degree endpoint of $e_i$.	
			If \heav($v_i$)=\textbf{heavy}, then $\EX[Y_i \mid S, \cE_i]=0$ and $\var[Y_i \mid S, \cE_i]=0$.
			If \heav($v_i$)=\textbf{light}, then there are two possibilities. If $d_{u_i} \leq \sqrt{\om}$ then,
			$$\EX[Y_i \mid S, \cE_i] = \frac{d_{u_i}}{\sqrt {\om}} \sum\limits_{\Delta \in T_{e_i}}
			\frac{1}{d_{u_i}} \cdot \sqrt{\om}\cdot\wt(\Delta)= \wt(T_{e_i})\;.$$
			Since the maximum value of $Y_i$ in this case is at most $\sqrt{\om}$,
			\[\var[Y_i \mid S, \cE_i] \leq \EX[Y_i^2 \mid S, \cE_i] \leq \sqrt{\om} \cdot \EX[Y_i \mid S, \cE_i]\;.\numberthis \label{eq:var_yi_case1_est}\]
			Now consider the case that $d_{u_i} > \sqrt{\om}$.
            In order to bound the variance of the $Y_i$ variables we first analyze the expectation and variance of the $Z_j$ variables.
			Note that $Z_j$ is non-zero when a triangle $\Delta \in T_{e_i}$ is found.
            It holds that
			$$\EX[Z_j \mid S, \cE_i] = \sum\limits_{\Delta \in T_{e_i}} \frac{1}{d_{u_i}} \cdot d_{u_i} \cdot  \wt(\Delta)
			=  \wt(T_{e_i}),$$
			and \[\var[Z_j \mid S, \cE_i] \leq d_{u_i} \cdot\EX[Z_j \mid S, \cE_i]. \numberthis \label{eq:var_Zj_conditional}\]
			By linearity of expectation,
			$$\EX[Y_i \mid S, \cE_i] = \wt(T_{e_i}).$$
			By independence of the $(Z_j \mid S, \cE_i)$ variables, linearity of expectation \tchange{and Equation~\eqref{eq:var_Zj_conditional},}
			\begin{align*}
			\var[Y_i \mid S, \cE_i] &= \var
			\left[\frac{1}{r}\sum\limits_{j=1}^{r}Z_j \mid S,\cE_i\right]
			= \frac{1}{r^2}\sum \limits_{j=1}^{r} \var \left[Z_j \mid S,\cE_i\right]
			\leq \frac{1}{r^2}\sum \limits_{j=1}^{r} d_{u_i} \cdot \EX\left[Z_j \mid S,\cE_i\right] \\
			& =\frac{d_{u_i}}{r} \cdot \EX \left[\frac{1}{r} \sum\limits_{j=1}^{r}Z_j \mid S,\cE_i \right]
			\leq \sqrt{\om} \cdot  \EX[Y_i \mid S, \cE_i]. \numberthis \label{eq:var_yi_case2_est}
			\end{align*}
			We remove the conditioning on $\cE_i$:
			$$ \EX[Y_i \mid S] = \sum\limits_{v \in S \cap L} \frac{d_v}{d_S} \cdot \frac{1}{d_v} \sum\limits_{e \in E_v} \wt(T_{e}) = d^{-1}_S \sum\limits_{v \in S \cap L} \sum\limits_{e \in E_v} \wt(T_{e}) =  \tchange {d^{-1}_S \sum\limits_{v \in S} \wt(v)}\;.$$

			\tchange{Recall that by Claim~\ref{clm:prec}, $\wt(e) \leq \sqrt{2m}$}. Therefore, by the law of total variance, the law of total expectation, the bound $m \leq 6\om$, \tchange{and Equations~\eqref{eq:var_yi_case1_est} and~\eqref{eq:var_yi_case2_est},}
			\begin{align*}
			\var[Y_i \mid S] &=
			\EX _{e_i} \left[ \var \left[ Y_i \mid S, \cE_i \right]\right] + \var _{e_i} \left[\EX \left[Y_i \mid S,\cE_i\right]\right] \\
            & \leq \EX _{e_i} \left[\sqrt{\om} \cdot \EX \left[Y_i \mid S, \cE_i \right]\right] + \EX_{e_i}[\wt(T_{e_i})^2] \\
            & \sqrt{\om} \cdot \EX[Y_i \mid S] + \sqrt{2m} \EX_{e_i}[\wt(T_{e_i})] < 5\sqrt{\om}\cdot \EX[Y_i \mid S] \;.
			\end{align*}
			This completes the proof of \Clm{Y}.
		\end{proof}

		\begin{proofof}{\Thm{estimate}}
For a fixed set $S$, let $X_S$ denote the sum $X_S = \frac{n}{s_1 s_2}\left(\sum\limits_{v \in S} d_v\right) \cdot \left(\sum\limits_{i=1}^{s_2} Y_i\right)$
(as defined in Step~\ref{step4} of \triest), given that the set $S$ in chosen in Step~\ref{step1}.
			By the definition of $X_S$ and by \Clm{Y},
			\[ \EX[X_S] = \frac{nd_S}{s_1}\EX[Y_i \mid S] = \frac{n}{s_1}\sum\limits_{v \in S}\wt(v). \numberthis \label{eq:exp_Xs}\]
			By \Thm{vert}, $\EX_S\left[\frac{1}{s_1}\sum\limits_{v\in S}wt(v)\right] \in [t(1-\ce\cdot \eps)/n,t/n]$, implying that
			$$\EX[X_S]\in [t(1-\ce\cdot \eps),t].$$
			By \Thm{vert}, \Def{good_great} and Assumption~\ref{assum:apropriate}, $S$ is good with probability at least $1-\eps^2/n$.
			The expected value, over $S$, of $d_S$ is $\EX_s\left[d_S\right]=s_1\cdot \frac{2m}{n}$.
			By Markov's inequality,
			$$\pr _S\left[d_S>s_1\cdot \frac{2m}{n}\cdot \frac{\log n}{\eps} \right] < \frac{\eps}{\log n}.$$
			By taking a union bound, the probability that $S$ is great is at least $1-2\eps/\log n$. For a fixed choice of $S$, let $Y_S = \frac{1}{s_2}\sum\limits_{i=1}^{s_2} Y_i$. \tchange{By the independence of the $Y_i$ variables and by Claim~\ref{clm:Y}},
			\[\var \left[Y_S\right] =\frac{1}{s_2^2}\sum \limits_{i=1}^{s_2}\var \left[Y_i \mid S\right] \tchange{<} \frac{1}{s_2^2} \sum \limits_{i=1}^{s_2} \tchange{5}\sqrt{\om} \cdot \EX[Y_i \mid S] \tchange{=} \frac{\tchange{5}\sqrt{\om}}{s_2}\cdot \EX\left[Y_S\right].\numberthis \label{eq:var_Ys} \]
			By Chebyshev's inequality, \tchange{the setting of $s_2$ and Equation~\eqref{eq:var_Ys}, we get that }
			\begin{align*}
			\Pr\big[|Y_S - \EX[Y_S]| > \eps \EX[Y_S]\big] &<
			\frac{\var[Y_S]}{\eps^2 \cdot\EX[Y_S]^2} \leq \frac{5\sqrt{\om} \cdot \EX[Y_S]}{\eps^2 (c_2 \eps^{-4}\log^2n) (\om^{3/2} /\ot) \cdot \EX[Y_S]^2} \\
			&= \frac{\eps^2}{c_2(\log^2n) (\om/\ot)\cdot \EX[Y_S]} \;.
			\end{align*}
			\tchange{By Claim~\ref{clm:Y}}, $\EX[Y_S] = d^{-1}_S \sum\limits_{v \in S} \wt(v)$, which for a great $S$
			is at least \tchange {
			$$ \frac{t(1-2\ce \cdot \eps)/n)/s_1}{s_1(2m/n)(\eps /\log n)} \geq \frac{t}{4m} \cdot \frac{\eps}{\log n}.$$}
		 Therefore, by \Assum{estimates}, for a sufficiently large constant $c_2$,
			$$\Pr\left[|Y_S - \EX[Y_S]| > \eps \EX[Y_S]\right] \leq \frac{\eps}{\log n}.$$
			By the definition of $X_S$ in Step~\ref{step4} of the algorithm, $X_S$ is just a scaling of $Y_S$. Therefore,
			$$\Pr\left[|X_S - \EX[X_S]| > \eps \EX[X_S]\right] \leq \frac{\eps}{\log n}.$$
			By Equation~\eqref{eq:exp_Xs}, $\EX[X_S] = \frac{n}{s_1}\sum\limits_{v \in S}\wt(v)$, which for a great $S$
			is at least $t(1-2\ce \cdot \eps)$. Hence, for a great $S$,
			$$\Pr\left[X_S < (1-3\ce \cdot \eps)\cdot t \right] \leq \frac{\eps}{\log n}.$$ The probability of $S$ not being
			great is at most $2\eps/\log n$. We apply the union bound to remove the conditioning, so we get
			$$\Pr\left[X < (1-3\ce \cdot \eps) \cdot t \right] \leq \frac{3\eps}{\log n},$$
			which completes the proof.
		\end{proofof}
		
		\begin{theorem} \label{thm:time} If Item~\ref{assume m} in \Assum{estimates} holds then the expected running time of  \triest{} is  $O^*(n/\ot^{1/3} + \om^{3/2}/\ot)$.
		\end{theorem}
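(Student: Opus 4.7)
The cost of Step~\ref{step1} and of building the sampling-by-degree data structure is $O(s_1) = O^*(n/\ot^{1/3})$, since each of the $s_1$ sampled vertices admits an $O(1)$-time degree query and the prefix-sum table is built in one pass. It remains to bound the expected cost of the $s_2 = O^*(\om^{3/2}/\ot)$ main-loop iterations, which I split into (a) the inner $j$-loop work (the degree query for $u$, and per $j$ one neighbor query and one pair query) and (b) the time inside calls to \heav{} triggered whenever a triangle is discovered.

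For (a), the analysis parallels \Lem{xtime}. If the chosen $v \in S$ satisfies $d_v \leq \sqrt{\om}$, then the smaller endpoint $u$ of any incident edge also has $d_u \leq \sqrt{\om}$, so $r \leq 1$ and the inner-loop cost is $O(1)$. If $d_v > \sqrt{\om}$, a Chiba--Nishizeki-style calculation together with Item~\ref{assume m} of \Assum{estimates} gives
\[ \EX_{e}[r \mid v] \;\leq\; 1 + \frac{1}{d_v\sqrt{\om}}\sum_{x \in \Gamma_v}\min(d_v,d_x) \;\leq\; 1 + \frac{2m}{d_v\sqrt{\om}} \;=\; O(1). \]
This $O(1)$ bound is preserved when averaging over the choice of $v$ from $S$ proportional to $d_v$, so the total expected cost from (a) across all iterations is $O(s_2) = O^*(\om^{3/2}/\ot)$.

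For (b), each call to \heav{} runs in expected time $O^*(\om^{3/2}/\ot)$ by \Lem{xtime}, uniformly over the target vertex. Since \heav{}'s internal randomness is independent of the outer loop, the total expected \heav{} cost equals $3\,\EX[K]\cdot O^*(\om^{3/2}/\ot)$, where $K$ is the total number of triangles discovered across all iterations. To see $\EX[K] = O^*(1)$, note that conditional on the sampled $v$ and $e$ in one iteration, the expected number of triangles found is at most $\EX[r]\cdot t_e/d_u \leq 2t_e/\sqrt{\om}$ (treating the Bernoulli case $d_u \leq \sqrt{\om}$ and the deterministic case $d_u > \sqrt{\om}$ separately). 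Averaging over $e$ uniform in $E_v$, then over $v$ proportional to $d_v$ in $S$, and then over $S$ gives
\[ \EX[\text{triangles per iteration}] \;\leq\; \frac{2}{\sqrt{\om}}\,\EX_S\!\left[\frac{\sum_{v\in S}t_v}{d_S}\right] \;=\; O^*\!\left(\frac{t}{m\sqrt{\om}}\right), \]
where the last step follows by showing that the ratio of the two sample sums $\sum_{v\in S}t_v$ and $d_S$ is within a polylogarithmic factor of $\EX[\sum t_v]/\EX[d_S]=3t/(2m)$ on a high-probability event, obtained via Chebyshev's inequality (with $\var[d_v]\leq 2m$ and an analogous bound for $\var[t_v]$), while the complement is handled by a crude worst-case per-iteration cost. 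Multiplying by $s_2$ and invoking Item~\ref{assume t} of \Assum{estimates} together with $\om = \Theta(m)$ yields $\EX[K] = O^*(1)$, whence the total expected cost of (b) is $O^*(\om^{3/2}/\ot)$.

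Summing the setup, inner-loop, and \heav{} contributions gives the claimed bound $O^*(n/\ot^{1/3} + \om^{3/2}/\ot)$. The principal technical subtlety, which I anticipate as the main obstacle, is the rigorous bound on $\EX_S[\sum_{v\in S}t_v/d_S]$: the plan is to condition on a joint constant-factor concentration event for the two sample sums, and to carefully absorb the contribution of its low-probability complement via a deterministic per-iteration cost bound that does not blow up the overall runtime estimate.
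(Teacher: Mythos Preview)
Your overall structure matches the paper's proof exactly: setup cost $O^*(n/\ot^{1/3})$, inner-loop work bounded by the Chiba--Nishizeki argument of \Lem{xtime}, and \heav{} cost bounded by (expected number of triangles found) $\times\, O^*(\om^{3/2}/\ot)$.

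Where you diverge is in bounding the expected number of triangles found. The paper is far terser: it asserts that ``averaging over the edges'' yields at most $6t/(m\sqrt{\om})$ triangles per iteration, and then claims via Item~\ref{assume m} that this is $O(\ot/\om^{3/2})$. You are right that this glosses over the ratio $\EX_S\!\big[\sum_{v\in S} t_v/d_S\big]$; the marginal edge distribution, even after averaging over $S$, is \emph{not} uniform over all directed edges in general, so the one-line reduction the paper uses is not fully justified as written. Your concentration-based approach is a reasonable way to make this step rigorous.

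That said, there are two concrete issues with your execution. First, the claimed bound $\var[d_v]\le 2m$ is false in general (take a graph with a single vertex of degree $\Theta(n)$), so your Chebyshev step for the lower tail of $d_S$ needs a different argument. Second, and more importantly, you invoke Item~\ref{assume t} of \Assum{estimates} and assume $\om=\Theta(m)$, but the theorem's hypothesis is only Item~\ref{assume m} (the one-sided bound $m/6\le\om$). Without $t\le 4\ot$ and $\om=O(m)$ you cannot conclude $s_2\cdot O^*(t/(m\sqrt{\om}))=O^*(1)$. Note that the paper's own step ``$6t/(m\sqrt{\om})=O(\ot/\om^{3/2})$'' appears to require the same extra assumptions, so this is likely an imprecision in the theorem statement rather than a flaw unique to your argument---but you should flag it rather than silently import hypotheses that are not given.
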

		
		\begin{proof} The sampling of $S$ is done in $O^*(n/\ot^{1/3})$ time.
Generating the $Z_j$ variables, without the calls to \heav{}, takes time  $O^*(\om^{3/2}/\ot)$ in expectation, by an argument identical to that in the proof of \Lem{xtime}. Therefore, it remains to bound the running time resulting from calls to \heav{}.
			
			Let us compute the expected number of triangles found during the run of the algorithm. In each iteration $i$, conditioned on choosing an edge $e$, the expected number of triangles found is at most $2(d_u/\sqrt{\om})(t_e/d_u) = 2t_e/\sqrt{\om}$.
			Averaging over the edges, the expected number of triangles found in a single iteration is at most $6t/(m \cdot \sqrt{\om})$, which by Item~\ref{assume m} in \Assum{estimates} is $O(\ot/{\om}^{3/2})$.
			There are $O(\om^{3/2}/\ot)\cdot \polylog$ iterations, leading to a total of $O^*(1)$ expected triangles.
			Thus, there are $O^*(1)$ expected calls to \heav, each taking $O^*(\om^{3/2}/\ot)$ time by \Lem{xtime}.
			Together with the above, we get an expected running time of $O(n/\ot^{1/3} + \om^{3/2}/\ot) \cdot \polylog$.
		\end{proof}

	\subsection{The final algorithm}\label{subsec:final-alg}
	We are now ready to present an algorithm that requires no prior knowledge regarding $m$ and $t$.

\bigskip
			\fbox{
				\begin{minipage}{0.9\textwidth}
					{\bf \triestt$(\eps)$}
					
					\smallskip
					\begin{compactenum}
						\item Let $\eps'=\eps/3\ce,$ where $\ce$ is the constant defined in Claim~\ref{clm:bound_heavy}.	
						\item Invoke Feige's
algorithm~\cite{feige2006sums} for approximating the average degree of a graph $10\log n$ times.
Let $\overline{d}$ be the median value of all invocations.		
						\item Let $\om = n\overline{d}/2$. \label{step:3}
						\item Let $\lt = n^3$.
						\item While $\lt \geq 1$ \label{step:t tilde}
						\begin{compactenum}
							\item For $\ot=n^3,n^3/2, n^3/4, \ldots, \lt$: \label{step:t overline}
							\begin{compactenum}
								\item For $i=1,\ldots,c\eps^{-1} \log \log n $: \label{step:5a1}
								\begin{compactenum}
									\item  Let $X_i=$\triest$(\eps', \om, \ot)$.\label{step:5a1A}
								\end{compactenum}
								\item Let $X=\min_{i}\{X_i\}$.
								\item If $X \geq \ot$ \textbf{return} $X$. \label{step:5a3}
							\end{compactenum}		
							\item Let $\lt=\lt/2$.
						\end{compactenum}
					\end{compactenum}
					
				\end{minipage}}

		\bigskip
		Before analyzing the correctness and running time of the algorithm, we present the following simple proposition, whose proof we give for the sake of completeness.
			\begin{proposition} \label{prop: m = theta Delta(G) 2/3} For every graph $G$, $t \leq \frac{4}{3}m^{3/2}$.
			\end{proposition}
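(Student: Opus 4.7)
The plan is to leverage the directed-edge machinery and the ordering $\prec$ already introduced in Section~\ref{sec:prel}, together with Claim~\ref{clm:prec}. The key identity is that $\sum_{v\in V} t_v = 3t$, since each triangle contains three vertices. Combined with $t_v = \sum_{e \in E_v} t_e$ from Definition~\ref{associate}, this rewrites as
\[
3t \;=\; \sum_{\vec{e}} t_{\vec{e}},
\]
where the sum runs over all $2m$ directed edges.

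Next, I would bound each $t_{\vec{e}}$ individually. For a directed edge $\vec{e} = \overrightarrow{(v,x)}$, Definition~\ref{associate} tells us that $t_{\vec{e}}$ counts triangles $(v,x,w)$ with $x \prec w$, so in particular
\[
t_{\vec{e}} \;\leq\; \bigl|\{w \in \Gamma_x : x \prec w\}\bigr| \;\leq\; \sqrt{2m},
\]
where the last inequality is exactly Claim~\ref{clm:prec} applied to $x$. Plugging this into the displayed identity yields $3t \leq 2m \cdot \sqrt{2m} = 2\sqrt{2}\,m^{3/2}$, hence
\[
t \;\leq\; \tfrac{2\sqrt{2}}{3}\, m^{3/2} \;\leq\; \tfrac{4}{3}\, m^{3/2},
\]
using $2\sqrt{2} < 4$. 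No step should be difficult, since the proposition is essentially a packaging of Claim~\ref{clm:prec}; the only thing to double-check is the identity $\sum_v t_v = 3t$, which follows because the association in Definition~\ref{associate} partitions the triangles through $v$ among the edges of $E_v$.
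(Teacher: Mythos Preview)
Your proof is correct and even yields the sharper constant $\tfrac{2\sqrt{2}}{3}$. The paper takes a different route: it splits vertices by degree at the threshold $\sqrt{m}$, bounds $t_v \leq m$ for the high-degree vertices (of which there are at most $2\sqrt{m}$, since $\sum_v d_v = 2m$), bounds $t_v \leq d_v^2$ for the low-degree ones, and then uses $\sum_{d_v \leq \sqrt{m}} d_v^2 \leq \sqrt{m}\sum_v d_v \leq 2m^{3/2}$. Your approach is arguably cleaner in this context because it directly reuses Claim~\ref{clm:prec} and the edge association of Definition~\ref{associate} that the paper has already set up; the paper's degree-splitting argument is more self-contained (it does not rely on the order~$\prec$), but as written it carries a stray factor of $\tfrac{1}{2}$ in the opening equality that the subsequent loose bounds happen to absorb.
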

			\begin{proof}
				\begin{align*}
				t = \frac{1}{3} \sum\limits_{v\in V} \frac{1}{2}\tr_v \leq \frac{1}{6}\left(\sum\limits_{v:\; d_v> \sqrt m} \tr_v + \sum\limits_{v:\; d_v\leq \sqrt m}2d_v^2 \right) \leq \frac{1}{6}\left(2\sqrt m\cdot 2m + 2\sqrt m\sum\limits_{v: \;d_v\leq \sqrt m}d_v \right)
				\leq \frac{4}{3}m^{3/2}.
				\end{align*}
			\end{proof}

		\begin{theorem} \label{thm:maintri}
\sloppy
Algorithm \triestt$(\eps)$ returns
a value $X$, such that $(1-\eps)t \leq X \leq (1+\eps)t$, with probability at least $5/6$.
The expected query complexity of the algorithm is $O^*\left(n/t^{1/3} + \max\left\{m,m^{3/2}/t\right\}\right)$
and the expected running time of the algorithm is $O^*(n/t^{1/3} + m^{3/2}/t)$.
		\end{theorem}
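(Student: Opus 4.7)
\medskip\noindent\textbf{Proof plan.}
The plan is to separately analyze the three ingredients of \triestt{}: the Feige preprocessing that produces $\om$, the geometric search on $\ot$ that determines what value is returned, and the expected time bound. Two facts about \Thm{estimate} drive the correctness argument: first, $\EX[X]\leq t$ holds \emph{unconditionally} on $\ot$; second, when \Assum{estimates} is satisfied, the proof of \Thm{estimate} actually establishes the two-sided concentration bound $\Pr[|X-\EX[X]|>\eps'\EX[X]]\leq O(\eps'/\log n)$ (via Chebyshev on great $S$ combined with the $O(\eps'/\log n)$ bound on $\Pr[S\text{ not great}]$).

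Taking the median of $10\log n$ independent runs of Feige's average-degree estimator yields $\om$ satisfying $m/6\leq \om\leq m$ with probability $1-1/\poly(n)$, using $O^*(n/\sqrt m)$ queries. Since $t\leq(4/3)m^{3/2}$ for any graph, this is $O^*(n/t^{1/3})$. I condition on this good event so that Item~\ref{assume m} of \Assum{estimates} holds for every subsequent call to \triest{}.

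For correctness, let $\ot^{**}$ denote the unique value of the form $n^3/2^j$ that lies in $(t/4,t/2]$, set $k=c\eps^{-1}\log\log n$ and $\eps'=\eps/(3\ce)$, and analyze by cases on the current $\ot$. If $\ot\in[t/4,t]$, the two-sided bound gives $X_i\in[(1-\eps)t,(1+\eps)t]$ with probability $1-O(\eps/\log n)$; a union bound over the $k$ runs places $\min_i X_i$ in the same interval, and since $(1-\eps)t>t/2\geq \ot^{**}$ the test $X\geq \ot^{**}$ fires at $\ot^{**}$ and a valid estimate is returned. If $\ot\geq (1+\eps)t$, Markov applied to each $X_i$ together with the independence of the $k$ runs gives $\Pr[\min_i X_i\geq \ot]\leq (t/\ot)^k=1/\poly(\log n)$, so false termination is unlikely. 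The subtle intermediate regime $\ot\in(t,(1+\eps)t]$, which I expect to be the main technical obstacle, is handled as follows: any returned value automatically satisfies $X\geq \ot>(1-\eps)t$ by the termination test itself, and applying Markov to the event $X_i>(1+\eps)t$ gives $\Pr[\min_i X_i>(1+\eps)t]\leq (1/(1+\eps))^k=1/\poly(\log n)$, so with high probability the returned value also lies below $(1+\eps)t$. The outer \textbf{while} loop, which revisits each $\ot$ with fresh random coins every time $\lt$ halves past it, supplies enough independent trials that a union bound over the $O(\log^2 n)$ visited $(\lt,\ot)$ pairs keeps the aggregate failure probability below $1/6$.

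For the running time, \Thm{time} bounds each call to \triest{}$(\om,\ot,\eps')$ by $O^*(n/\ot^{1/3}+\om^{3/2}/\ot)$ in expectation. For any fixed outer $\lt$, the inner sum over $\ot\in\{n^3,n^3/2,\dots,\lt\}$ is geometric and dominated by its smallest term $O^*(n/\lt^{1/3}+\om^{3/2}/\lt)$; the outer sum over $\lt$ is likewise geometric and dominated by the smallest $\lt$ reached, which by the correctness argument above is $\Theta(t)$ with high probability. Since the entire algorithm is in any case bounded by the trivial $O^*(n+m^{3/2})$ worst case and the probability of falling into that worst case is super-polynomially small, the expected running time is as claimed: $O^*(n/t^{1/3}+m^{3/2}/t)$. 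For the query bound, the standard trick of storing every distinct edge ever queried caps the edge-query cost at $O(m)$, producing the $\min\{m,m^{3/2}/t\}$ form.
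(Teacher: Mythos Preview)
Your correctness plan is essentially the paper's, and in one respect more careful: you explicitly treat the regime $\ot\in(t,(1+\eps)t]$, where the termination test itself supplies the lower bound, while the paper lumps everything above $2t$ together and is silent about $(t,2t]$. One minor correction: the ``two-sided concentration'' you extract from the proof of \Thm{estimate} is around $\EX[X_S]=\frac{n}{s_1}\sum_{v\in S}\wt(v)$ for the \emph{realized} sample $S$, and greatness of $S$ only lower-bounds this by $(1-2\ce\eps')t$; it does not upper-bound it by $t$. Hence $|X_i-\EX[X_S]|\leq \eps'\EX[X_S]$ does not by itself give $X_i\leq(1+\eps)t$ with probability $1-O(\eps'/\log n)$. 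The paper gets the upper tail from Markov on the unconditional bound $\EX[X]\leq t$, exactly as you already do in the large-$\ot$ case; just reuse that here (or add the matching upper-tail Chernoff in \Thm{vert}).

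There is a genuine gap in the expected running-time argument. You write that $\lt_{\min}=\Theta(t)$ with high probability and that otherwise the cost is at most the worst case $O^*(n+m^{3/2})$ with ``super-polynomially small'' probability. But the probability that the outer loop overshoots the good range \emph{at all} is only $\Theta(1/\log n)$ (a single failure at the halting level), and $(1/\log n)\cdot O^*(n+m^{3/2})$ can dwarf $O^*(n/t^{1/3}+m^{3/2}/t)$. The fix, which is what the paper does, is a level-by-level sum: each time $\lt$ drops below $t/2$ the inner loop revisits the good $\ot$ with fresh coins and halts with probability $\geq 1-c/\log n$, so $\Pr[\lt_{\min}\leq t/2^{k}]\leq (c/\log n)^{k}$, while the cost at that level is $O^*\big(2^{k}(n/t^{1/3}+\om^{3/2}/t)\big)$. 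Summing $\sum_{k\geq 0}(2c/\log n)^{k}=O(1)$ gives the claimed expectation. Your geometric-series framing is right; the tail just has to be summed over overshoot levels rather than collapsed into one ``worst case.''
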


\begin{proof}
	We first prove that the value of $X$ is as stated in the theorem.
Let $d_{avg}$ denote the average degree of vertices in $G$.
The 
algorithm from~\cite{feige2006sums} returns a value $\overline{d}$
such that, with probability at least $2/3$, $\overline{d} \in [d_{avg}/(2+\gamma), d_{avg}]$ for a constant $\gamma$.
         Since we take the median value of $10\log n$ invocations, it follows from Chernoff's inequality that $\om$ is as stated in Item~\ref{assume m} of \Assum{estimates} with probability at least $1-1/\poly(n)$. 
         Assume that this is indeed the case.

Before analyzing the algorithm \triestt\ as described above, first consider
executing Step~\ref{step:t overline} with $\lt=1$. That is, rather than running both an outer loop
over decreasing values of $\lt$ and an inner loop over decreasing values of $\ot$,  we only run a single loop over decreasing value of $\ot$, starting with $\ot = n^3$.
	By the first part of \Thm{estimate} and by Markov's inequality, for each value of $\ot$ and for
each $i$,
$\Pr[X_i \leq (1+\eps)t] > \eps/2$, where $X_i$ as defined in Step~\ref{step:5a1A}. Therefore, for each value of $\ot$,
the minimum estimate $X$ (as defined in Step~\ref{step:5a3}) is at most $(1+\eps)t$, with probability at least $1-1/\log^3n$.
	It follows that for each $\ot$ such that  $\ot > 2t$, we have that $X< \ot$ with probability at least $1-1/\log^3n$, and the algorithm will continue with $\ot=\ot/2$.
	Once we reach a value of $\ot$  for which $t/4 \leq \ot \leq t/2$, Item~\ref{assume t} in \Assum{estimates}, regarding $\ot$, holds. By the second part of \Thm{estimate}, $X_i \in [(1-\eps)t,(1+\eps)t]$ for every $i$ with probability at least $1-c/\log n$.
	Hence, we have that
	$$ \ot \leq \frac{1}{2}t \leq (1-\eps)t \leq X \leq (1+\eps)t,$$
	with probability at least $1-c/	\log n$. Therefore, we halt and return correct $X$.
	
	If however we do reach a value $\ot$ such that $\ot \leq t/4$, since \Assum{estimates} does not hold, we cannot lower bound $X$, implying that we can no longer bound the probability that $X < \ot$. Therefore we might continue running with decreasing values of $t$, causing the running time to exceed the desired bound of $O^*(n/t^{1/3}+m^{3/2}/t)$. In order to avoid this scenario, we run both an outer loop over $\lt$ and an inner loop over $\ot$. Specifically, starting
with $\lt = n^3$,
whenever we halve $\lt$, we run over all values of $\ot = n^3, n^3/2, \ldots$, until we reach $\lt$.
This implies that for every value of $\lt > 2t$ the probability of returning an incorrect estimate, that is,
outside the range of $(1-\eps)t \leq X \leq (1+\eps)t$, is
at most $1-1/\log\tchange{^2} n$. On the other hand, for values  of $\lt$ such that $\lt \leq t/2$ the probability of returning a correct estimate (within $(1-\eps)t \leq X \leq (1+\eps)t$) is at least $1-c/\log n$.
	A union bound over all failure probabilities gives a success probability of at least $5/6$.

	\sloppy
We now turn to analyze the query complexity and running time of the algorithm.
By \cite{feige2006sums}, the expected running time of the average degree approximation algorithm
is $O^*(n/\sqrt m)$. By \Thm{time}, conditioned on $\om$ satisfying Item~\ref{assume m} in \Assum{estimates}, the expected running time of \triest{$(\eps,\om,\ot)$} is $O^*(n/\ot^{1/3}+\om^{3/2}/t)$. It follows from Proposition~\ref{prop: m = theta Delta(G) 2/3} that $n/\sqrt m = O(n/t^{1/3}$), implying that the running time is determined by the value of $\om$ and by the smallest value of $\ot$ that \triest{$(\eps, \om, \ot )$} is invoked with.
	
	Recall that whenever we halve the value of $\tilde{t}$, we  run with all values $\ot=n^3, n^3/2,\ldots$.
This, together with the fact that when running with $t/4\leq \ot \leq t/2$ we halt with probability at least $1-c/\log n$, implies that the probability of reaching a value $\tilde{t} = t/ 2^k$ is at most $(c/\log n)^k$. Therefore, the expected running time, conditioned on $\om$ satisfying Item~\ref{assume m} in \Assum{estimates}, is bounded by
	$$ \log ^2 n  \cdot O^*\left(\frac{n}{t^{1/3}} + \frac{\om^{3/2}}{t}\right) + \sum_{k=1}^{\log n}(c/\log n)^k \cdot 2^k \cdot  O^*\left(\frac{n}{t^{1/3}} + \frac{\om^{3/2}}{t}\right) = O^*\left(\frac{n}{t^{1/3}} + \frac{\om^{3/2}}{t}\right)  .$$

	Now consider the value of $\om$ computed in Step~\ref{step:3} of \triestt{$(\eps)$}.
As stated previously, with probability at least $1-1/\poly(n)$ (e.g., $1-1/n^4$), the estimate
$\om$ is within a constant factor from $m$.
Therefore the expected running time of the algorithm (without the conditioning on the value of $\om$) is bounded by

	$$\left(1-\frac{1}{n^4}\right)\cdot O^*\left(\frac{n}{t^{1/3}} + \frac{m^{3/2}}{t}\right) + \frac{1}{n^4}\cdot O(n^3) = O^*\left(\frac{n}{t^{1/3}} + \frac{m^{3/2}}{t}\right).$$
	
	Observe that we can always assume that the algorithm does not perform queries it can answer by itself. That is, we can allow the algorithm to save all the information it obtained from past queries, and assume it does not query for information it can deduce from its past queries. Further observe that any pair query is preceded by a neighbor query. Therefore, if at any point the algorithm performs more than $2\om$ queries, it can abort.
	It follows that the expected query complexity is $O^*(n/t^{1/3}+\min\{m,m^{3/2}/t\})$.
	\end{proof}

\sect{A Lower Bound}\label{sec:lb}
In this section we present a lower bound on the number of queries necessary for estimating the number of triangles in a graph.
\ifnum\focs=1
Due to space constraints, in this extended abstract we provide only partial details of the proof.
All details can be found in the full version of this paper~\cite{ELRS}.
\fi
Since we sometimes refer to the number of triangles in different graphs, we use the notation $\tr(G)$ for the
number of triangles in a graph $G$.
 Our lower bound  matches our upper bound in terms of the dependence on $n$, $m$ and $\tr(G)$, up to polylogarithmic factors in $n$ and the dependence in $1/\epsilon$. In what follows, when we refer to approximation algorithms for the number of triangles in a graph, we mean multiplicative-approximation algorithms that output with high constant probability an estimation $\wht$ such that $\tr(G)/C \leq \wht\leq C\cdot \tr(G)$ for some predetermined approximation factor $C$.

We consider multiplicative-approximation algorithms that are allowed the following three types of queries: Degree queries, pair queries and random new-neighbor queries. Degree queries and pair queries are as defined in Section~\ref{sec:prel}. A random new-neighbor query $q_i$ is a single vertex $u$ and the corresponding answer is a vertex $v$ such that $(u,v)\in E$ and the edge $(u,v)$ is selected uniformly at random among the edges incident to $u$ that have not yet been observed by the algorithm.
\ifnum \focs =1
It is not hard to verify (as we show in the full version of this paper~\cite{ELRS})
\else
In Corollary~\ref{n over delta third cor} we show
\fi
that this implies a lower bound when the algorithm may perform (standard) neighbor queries instead of random new-neighbor queries. \par
We first give a simple lower bound that depends on $n$ and $\tr(G)$.

\begin{theorem} \label{n over delta third} Any multiplicative-approximation algorithm for the number of triangles in a graph must perform $\Omega\left(\frac{n}{\tr(G)^{1/3}}\right)$ queries, where the allowed queries are degree queries, pair queries and random new-neighbor queries.
\end{theorem}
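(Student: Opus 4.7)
The approach is a Yao-style indistinguishability argument. Fix $n$ and the target triangle count $t$, set $k=\lceil(6t)^{1/3}\rceil$, and assume $k\le n/2$ (otherwise $\Omega(n/t^{1/3})=O(1)$ and the claim is trivial). I would compare two inputs on vertex set $[n]$: the empty graph $G_{\rm emp}$ (with $\tr(G_{\rm emp})=0$), and, for a uniformly random $k$-subset $K\subset[n]$, the graph $G_K$ whose edges are exactly the clique on $K$ (with $\tr(G_K)=\binom{k}{3}=\Theta(t)$). Any multiplicative-approximation algorithm must, with high constant probability, output $0$ on $G_{\rm emp}$ (the only multiplicative estimate consistent with $\tr=0$) and a positive estimate in $[\tr(G_K)/C,\,C\cdot\tr(G_K)]$ on every $G_K$. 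Hence its output distribution on $G_{\rm emp}$ differs by a constant in total variation from its output distribution on a uniformly random $G_K$. By Yao's minimax principle it then suffices to lower bound the query complexity of any deterministic algorithm $A$ that distinguishes $G_{\rm emp}$ from a uniform $G_K$ with constant probability.

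Given such a deterministic $A$ making at most $q$ queries, I would first run $A$ mentally on $G_{\rm emp}$; this yields a fixed sequence of queries, inducing a set $V^*\subset[n]$ of at most $q$ vertices appearing as the argument of degree or random new-neighbor queries, and a set $P^*$ of at most $q$ pairs appearing in pair queries. The key observation is that while $A$ runs on $G_K$, as long as no query has yet ``revealed'' $K$---meaning no degree or new-neighbor query targets a vertex of $K$ and no pair query has both endpoints in $K$---every answer coincides with the corresponding $G_{\rm emp}$-answer ($0$, $\dagger$, and ``no edge'' respectively). Consequently, by induction on the query index, $A$ issues exactly the same queries on $G_K$ as on $G_{\rm emp}$ up to the first revealing query, and the event that $A$ distinguishes the two inputs is contained in the event that at least one of these fixed $q$ queries is revealing for the random $K$.

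Finally I would bound the distinguishing probability by a union bound over $V^*$ and $P^*$. For any fixed vertex $v$, $\Pr_K[v\in K]=k/n$, so the probability that some vertex in $V^*$ falls in $K$ is at most $qk/n$; for any fixed unordered pair $\{u,w\}$, $\Pr_K[u,w\in K]=\binom{k}{2}/\binom{n}{2}=O((k/n)^2)$, so the probability that some pair in $P^*$ lies entirely in $K$ is $O(q(k/n)^2)$, which is dominated by the first term under our assumption $k\le n/2$. Thus the distinguishing probability is $O(qk/n)$, and for this to exceed a positive constant we need $q=\Omega(n/k)=\Omega(n/t^{1/3})$. The one subtlety worth spelling out carefully is the handling of adaptivity: because the answers on $G_K$ match those on $G_{\rm emp}$ up to the first revealing query, the queries $A$ makes on $G_K$ before that point are deterministically those it makes on $G_{\rm emp}$, so the union bound is legitimately applied against the fixed $G_{\rm emp}$-transcript rather than against an adaptive random sequence---this is the main point where one must resist the temptation to treat the queries as ``just $q$ arbitrary queries''.
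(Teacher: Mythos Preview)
Your proposal is correct and follows essentially the same approach as the paper: compare the empty graph against a clique of size $\Theta(t^{1/3})$ planted on a uniformly random subset, and argue that the algorithm cannot distinguish unless some query touches the clique, which happens with probability $O(qk/n)$ per query sequence. The paper's proof is a terse two-line version of yours; your explicit treatment of Yao's principle, the adaptivity-coupling argument, and the separate (dominated) $O((k/n)^2)$ bound for pair queries are all standard fleshing-out of details the paper leaves implicit.
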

\begin{proof}
For every $n$ and every $1 \leq \tr \leq \binom{n}{3}$ we next define a graph $G_1$ and a family of graphs $\mG_2$ for which the following holds. The graph $G_1$ is the empty graph over $n$ vertices. In
$\mathcal{G}_2$, each graph consists of a clique of size $\lf\tr^{1/3}\rf$ and an independent set of size $n-\lf\tr^{1/3}\rf$. See Figure~\ref{Fig:lb_families} for an illustration. Within
$\mathcal{G}_2$
the graphs differ only in the labeling of the vertices. By construction, 
$G_1$ contains no triangles and each graph in 
$\mathcal{G}_2$
contains $\Theta(\tr)$ triangles.
Clearly, unless the algorithm ``hits'' a vertex in the clique it cannot distinguish between
the two cases.
The probability of hitting such a vertex in a graph selected uniformly at random from
$\mG_2$ is $\lf\tr^{1/3}\rf/n$. Thus, in order for this event to occur with high constant probability, $\Omega\left(\frac{n}{\tr^{1/3}}\right)$ queries are necessary.
\end{proof}
\begin{figure}[ht!]
	\centering
	\ifnum\nofigures=0
	\DrawTwoFamilies
	\fi
	\caption{An illustration of the two families.}
	\label{Fig:lb_families}
\end{figure}

We next state our main theorem. 

\begin{theorem} \label{m 3/2 over delta} Any multiplicative-approximation algorithm for the number of triangles in a graph must perform at least $\Omega\left( \min\left\{\frac{m^{3/2}}{\tr(G)},m\right\} \right)$ queries, where the allowed queries are degree queries, pair queries and random new-neighbor queries.
\end{theorem}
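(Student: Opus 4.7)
The plan is to invoke Yao's minimax principle: for every admissible triple $(n,m,t)$ I would construct a triangle-free graph $G_1$ and a distribution over a family $\mG_2$ of graphs each with $\Theta(t)$ triangles, and show that no deterministic $q$-query algorithm can distinguish $G_1$ from a uniformly random $G \in \mG_2$ unless $q = \Omega(\min\{m^{3/2}/t, m\})$. Since the bound is different in the two regimes $t \geq \sqrt{m}$ and $t < \sqrt{m}$, and since the construction that hides triangles inside a dense core behaves differently depending on whether the core is sparser or denser than $\sqrt{m}$, I expect to use three separate constructions --- two for subcases of $t \geq \sqrt{m}$ and one for $t < \sqrt{m}$, as foreshadowed in the introduction.

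For the main case $\sqrt{m} \leq t$, the idea is to put essentially all of the triangles inside a small hidden ``core'' of size roughly $k = \Theta(t^{1/3})$ that is structured as a clique (or, in the subcase where $t^{1/3} > \sqrt{m}$, a denser bipartite-like object on $\Theta(\sqrt{m})$ vertices). The core contributes only $\Theta(\min\{t^{2/3},m\})$ edges; the remaining $m - \Theta(\min\{t^{2/3},m\})$ edges form a triangle-free, (approximately) regular ``background'' shared with $G_1$. A random isomorphism (relabeling of vertices) is applied to hide the location of the core. For the subcase $t < \sqrt m$ I would instead plant $t$ vertex-disjoint triangles at random locations inside a triangle-free, roughly $m/n$-regular background, so each triangle occupies $3$ edges out of $m$ and is found only when two of its edges are exposed. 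In all constructions I would make the degree sequence of $G_1$ and every $G \in \mG_2$ identical (or identical in distribution under the random relabeling), so degree queries carry no information.

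The core of the argument is an indistinguishability claim: I would track the transcript of queries and answers under both experiments and show that, as long as the algorithm has not yet performed a query whose answer involves two core vertices (a ``detecting'' event), the conditional distribution of the next answer is identical in both. Query-by-query, the probability of detection can be bounded: pair queries detect with probability $O(k^2/n^2)$, and once one core vertex is known, further random-new-neighbor or pair queries detect with probability $O(k/n)$ or $O(k/\sqrt m)$, depending on the subcase. Summing over $q$ queries and demanding detection probability $\Omega(1)$ yields $q = \Omega(m^{3/2}/t)$ in the $t \geq \sqrt m$ regime and $q = \Omega(m)$ in the $t < \sqrt m$ regime. The hardest part will be the combinatorial design of the background in the two subcases of $t \geq \sqrt m$: random-new-neighbor queries expose a \emph{random} incident edge (not an arbitrary one), so the background must be such that, conditional on the observed transcript, each revealed edge is (nearly) uniform and independent of the planted core's location --- this is typically achieved by insisting on regularity together with a symmetric prior over labelings, so that a coupling argument reduces the lower bound to the birthday/collision calculation sketched above.
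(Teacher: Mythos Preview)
Your proposal conflates the two lower bounds of the paper. A hidden clique of size $k=\Theta(t^{1/3})$ is exactly the construction used for the $\Omega(n/t^{1/3})$ bound (Theorem~\ref{n over delta third}), and it cannot give $\Omega(m^{3/2}/t)$. Run your own arithmetic: with a core of size $t^{1/3}$ hidden among $n$ vertices (or even among $\Theta(\sqrt m)$ high-degree vertices), a pair query hits two core vertices with probability $O(t^{2/3}/n^2)$ (respectively $O(t^{2/3}/m)$), and a neighbor query from a known core vertex hits another with probability $O(t^{1/3}/d)$. None of these combine to $\Theta(t/m^{3/2})$ per query except at the single point $t=m^{3/2}$. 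The ``subcase $t^{1/3}>\sqrt m$'' you mention is in fact empty, since $t\le \tfrac{4}{3}m^{3/2}$ always. For $t<\sqrt m$, planting $t$ disjoint triangles in an $(m/n)$-regular background and doing a collision calculation yields roughly $m/\sqrt t$, not $\Omega(m)$.

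The paper's construction is quite different and the difference is the point. For $t\ge\sqrt m$, both $G_1$ and every $G\in\mG_2$ have the \emph{same} set of non-isolated vertices, namely the two sides $L,R$ of a complete bipartite graph $K_{\sqrt m,\sqrt m}$; the algorithm is even told which vertices lie in $L$ and in $R$. The graphs in $\mG_2$ are obtained from $G_1$ by deleting $\Theta(t/\sqrt m)$ crossing edges (a random matching, or several) and adding the same number of within-side matching edges; each added edge creates $\Theta(\sqrt m)$ triangles. Thus the lower bound is not about \emph{locating} a dense region --- the algorithm already knows where all the edges are --- but about detecting $\Theta(t/\sqrt m)$ ``witness'' pairs among $\Theta(m)$ candidate pairs, which is why each query succeeds with probability $O((t/\sqrt m)/m)=O(t/m^{3/2})$. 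The formal argument couples the algorithm to two answering processes $P_1,P_2$ and bounds the per-query statistical distance via an auxiliary bipartite graph between ``witness'' and ``non-witness'' members of $\mG_2(\pi)$ related by a single matching switch. For $t<\tfrac14\sqrt m$ the paper uses a yet more delicate construction with two bipartite components in which only four special vertices (and two added, two deleted edges) distinguish $\mG_2$ from $G_1$, giving per-query detection probability $O(1/m)$.
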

For every $n$, every $1\leq m\leq \binom{n}{2}$ and every $1 \leq \tr \leq \min\left\{\binom{n}{3},m^{3/2}\right\}$ we define a graph $G_1$ and a family of graphs $\mG_2$ for which the following holds. The graph $G_1$ and all the graphs in $\mG_2$ have $n$ vertices and $m$ edges. For the graph $G_1$, $\tr(G_1)=0$, and for every graph $G\in \mG_2$, $\tr(G)=\Theta(\tr)$. We prove it is necessary to perform $\Omega\left(\min\left\{\frac{m^{3/2}}{\tr},m\right\}\right)$ queries in order to distinguish with high constant probability between
$G_1$ and a random graph in
$\mG_2$.
For the sake of simplicity, in everything that follows we assume that $\sqrt m$ is even.

We prove that for values of $\tr$ such that $\tr < \frac{1}{4}\sqrt m$, at least $\Omega(m)$ queries are required, and for values of $\tr$ such that $\tr \geq \sqrt m$ at least $\sloppy \Omega\left(\frac{m^{3/2}}{\tr}\right)$ queries are required.
\ifnum \focs=1
For the former case we refer the reader to the full version of this paper~\cite{ELRS}, and turn to
\else

We delay the discussion on the former case to Subsection~\ref{section Delta sqrt m}, and start with
\fi
the case that $\tr \geq \sqrt m$.
 Our construction of $\mG_2$ 
depends on the value of $\tr$ as a function of $m$ where we deal separately with the following 
two ranges of $\tr$:

\begin{enumerate}
\item $\tr \in [\Omega(m),O(m^{3/2})]$. 
\item $\tr \in [\Omega(\sqrt{m},O(m)]$. 
\end{enumerate}
We prove that for every $\tr$ as above, 
$\Omega(m^{3/2}/\tr)$ queries are needed in order to distinguish between the graph $G_1$ and a random graph in $\mG_2$.
Observe that by Proposition~\ref{prop: m = theta Delta(G) 2/3}, for every graph $G$, it holds that $\tr(G) =O\left(m^{3/2}\right)$. Hence, the above ranges indeed cover all the possible values of $\tr$ as a function of $m$.

\ifnum\focs=1	
\paragraph{A high level discussion of the lower bound} \label{par: lb high level}
\else
\paragraph{A high level discussion of the lower bound.} \label{par: lb high level}
\fi
\tchange{The constructions for the different ranges of $t\geq \sqrt m$  are all based on the same basic idea, and have the following in common. In all construction for $t$ as above, $G_1$ consists of a complete bipartite graph $(L\cup R, E)$ with $|L|=|R|=\sqrt m$ and an independent set of $n-2\sqrt m$ vertices. The basic structure of the graphs in the family $\mG_2$ is the same as that of $G_1$ with the following modifications:
	\begin{itemize}
		\item For every value of $t$, we add $t/\sqrt m$ edges between vertices in $L$ (and similarly in $R$). Since each edge contributes (roughly) $\sqrt m$ triangles, this gives the desired total number of triangles in the graph. In the case that $t=m$ this is done by adding a perfect matching within $L$ and a perfect matching within $R$. In the case that $t >m$ we add several such perfect matchings, and in the case that $\sqrt m \leq t \leq m/4$ we add a (non-perfect) matching of size $t/\sqrt m$.
		\item In order to maintain the degrees of all the vertices in the bipartite component, we remove edges between vertices in $L$ and $R$.
	\end{itemize}
For an illustration of the case $t=m$, see Figure~\ref{fig:bipartite}.
In what follows we assume that the algorithm knows in advance which vertices are in $L$ and which are in $R$, and consider only the bipartite component of the graphs. In order to give the intuition for the $m^{3/2}/t$ lower bound we consider each type of query separately, starting with degree queries.

Since both in the graph $G_1$ and in all the graphs in $\mG_2$, all the vertices in $L\cup R$ have the same degree (of $\sqrt m$), degree queries do not reveal any information that is useful for  distinguishing between the two.

As for pair queries, unless the algorithm queries a pair in $L \times L $ (or $R \times R$) and receives a positive answer, or queries a pair in $L \times R$ and receives a negative answer, the algorithm cannot distinguish between the bipartite component of the graph $G_1$ and those of the graphs in $\mG_2$. We refer to these pairs as 
{\em witness pairs\/}. Roughly speaking, since there are $\Theta(t/\sqrt m)$ such pairs, and $m$ pairs in total,  it takes $\Omega(m^{3/2}/t)$ queries in order to ``catch a witness pair''.

We are left to deal with neighbor queries. Here too, distinguishing between the graph $G_1$ and the graphs in $\mG_2$ can be done by ``catching a witness''. That is, if the algorithm queries for a neighbor of a vertex in $L$ and the answer is another vertex in $L$ (analogously for a vertex in $R$). As before, the probability for hitting such a witness pair is small. However, there is another source of difference resulting from neighbor queries. When the algorithm queries a vertex $v\in L$ there is a difference in the conditional distribution on answers $v \in R$ when the answer is according to the graph $G_1$ or according to a graph in the family $\mG_2$. The reason for the difference, is that in the graph $G_1$ every vertex  has exactly $\sqrt m$ neighbors in the opposite side, while  for graphs in $\mG_2$, each vertex has $\Theta(\sqrt m - t/m)$ neighbors in the opposite side (for the range $\Omega(\sqrt m) \leq t \leq O(m)$ this is true on average). 
We prove that  this difference in sufficiently small
so as to ensure the 
$\Omega(m^{3/2}/t)$ lower bound.

Our formal analysis is based on defining two processes that interact with an algorithm for approximating the
number of triangles, denoted ALG. The first process answer queries according to $G_1$, and the second process
answers queries while constructing a uniformly selected graph in $\mG_2$. An interaction between ALG and
each of these processes induces a distribution over sequences of queries and answers. We prove that if the
number of queries performed by ALG is smaller than $m^{3/2}/(c t)$ for a sufficiently large constant $c$,
then the statistical distance between the two distributions is a small constant. 

}

\medskip
\ifnum \focs =1
\ddchange{In this extended abstract we focus on the case that $\tr= m$ \tchange{which is a special case of the construction for the range $  t \in [\Omega(m),O(m^{3/2})]$}.
Before doing so}
\else
We start by addressing the case that $\tr= m$ in Subsection~\ref{subsec:ksqrtm}, and deal with the
 case that $ m< \tr \leq \frac{m^{3/2}}{8}$ in Subsection~\ref{section r sqrt m}, and with the case that $ \sqrt m \leq \tr \leq \frac{m}{4}$ in Subsection~\ref{k leq sqrt m}. \par

Before embarking on the proof for $\tr=m$,
 \fi
 we introduce the notion of a {\em knowledge graph\/} (as defined previously
in e.g.,~\cite{GR-bound}),
which will be used in all lower bound proofs. Let ALG be an algorithm for approximating he number of triangles, which performs $Q$ queries. Let $q_t$ denote its $t\th$ query and let $a_t$ denote the corresponding answer. Then ALG is a (possibly probabilistic) mapping from {\em query-answer histories} $\pi\eqdef\langle(q_1,a_1),\ldots ,(q_t, a_t)\rangle$ to $q_{t+1}$, for every $t < Q$, and to $\mathbb{N}$ for $t = Q$. \par

We assume that the mapping determined by the algorithm is determined only on histories that are consistent with the graph $G_1$ or one of the graphs in $\mG_2$.
Any query-answer history $\pi$ of length $t$ can be used to define a knowledge graph $G^{kn}_\pi$ at time $t$. Namely, the vertex set of $G^{kn}_\pi$ consists of $n$ vertices. For every new-neighbor query $u_i$ answered by $v_i$ for $i\le t$, the knowledge graph contains the edge $(u_i,v_i)$, and similarly for every pair query $(u_j, v_j)$ that was answered by $1$. In addition, for every pair query $(u_i,v_i)$ that is answered by $0$, the knowledge graph maintains the information that $(u_i,v_i)$ is a non-edge. The above definition of the knowledge graph is a slight abuse of the notation of a graph since $G^{kn}_\pi$ is a subgraph of the graph tested by the algorithm, but it also contains additional information regarding queried pairs that are not edges. For a vertex $u$, we denote its set of neighbors in the knowledge graph by $\Gamma^{kn}_\pi(u)$, and let $d^{kn}_\pi(u)=\left|\Gamma^{kn}_\pi(u)\right|$. We denote by $N^{kn}_\pi(u)$ the set of vertices $v$ such that $(u,v)$ is either an edge or a non-edge in $G^{kn}_\pi$.

\ifnum\focs=0
\subsect{A lower bound for $\tr=m$}
\label{subsec:ksqrtm}
\subsubsect{The lower-bound construction} \label{Construction for k=sqrt m}
\else
\subsect{The lower-bound construction} \label{Construction for k=sqrt m}
\fi
	The graph $G_1$ has two components. The first component is a complete bipartite graph with $\sqrt m$ vertices on each side, i.e, $K_{\sqrt{m},\sqrt{m}}$, and the second component is an independent set of size $n-2\sqrt{m}$. We denote by $L$ the set of vertices $\ell_1, \ldots, \ell_{\sqrt m}$ on the left-hand side of the bipartite component and by $R$ the set of vertices $r_1, \ldots, r_{\sqrt m}$ on its right-hand side. The graphs in the family $\mathcal{G}_2$ have the same basic structure with a few modifications. We first choose for each graph a perfect matching $M^C$ between the two sides $R$ and $L$ and remove the edges in $M^C$ from the graph. We refer to the removed matching as the \textsf{``red matching"} and its pairs as \textsf{``crossing non-edges"} or \textsf{``red pairs"}. Now, we add two perfect matching from $L$ to $L$ and from $R$ to $R$, denoted $M^L$ and $M^R$ respectively. We refer to these matchings as the \textsf{blue matchings} and their edges as \textsf{``non-crossing edges''} or \textsf{``blue pairs"}. \dchange{Thus for each choice of three perfect matchings $M^C$, $M^L$ and $M^R$ as defined above, we have a corresponding graph in $\mathcal{G}_2$.}
	
	Consider a graph $G\in \mG_2$. Clearly, every blue edge participate in $\sqrt m-2$ triangles. Since, every triangle in the graph contains exactly one blue edge, there are $2\sqrt m \cdot (\sqrt m-2) = \Theta(m)$ triangles in $G$.

\begin{figure}[!htb]
	\centering
\ifnum\nofigures=0
	\begin{tikzpicture}[thick, x=2cm, scale=0.7]
	\CompleteMatching
	\end{tikzpicture}
\fi
	\caption{An illustration of the family $\mathcal{G}_2$ for $\tr=m$.}
	\label{fig:bipartite}
\end{figure}

\ifnum\focs=0
\subsubsect{Definition of the processes ${P_1}$ and ${P_2}$} \label{processes}
\else
\subsect{Definition of the processes ${P_1}$ and ${P_2}$} \label{processes}
\fi
\dchange{
In what follows we describe two random processes, $P_1$ and $P_2$, which interact with an arbitrary algorithm ALG. The process $P_1$ answers ALG's queries consistently with $G_1$. The process $P_2$ answers ALG's queries while constructing a
uniformly selected random graph from $\mathcal{G}_2$. We assume without loss of generality that ALG does not ask queries whose answers can be derived from its knowledge graph, since such queries give it no new information. For example, ALG does not ask a pair query about a pair of vertices that are already known to be connected by an edge due to a neighbor query. Also, we assume ALG knows in advance which vertices belong to $L$ and which to to $R$, so that ALG need not query vertices in the independent set. Since the graphs in $\mG_2$ differ from $G_1$ only in the edges of
the subgraph induced by $L\cup R$, we think of $G_1$ and graphs in $\mG_2$ as consisting only
of this subgraph. Finally, since in our constructions all the vertices in $L\cup R$ have the same degree of $\sqrt m$, we assume that no degree queries are performed.}

For every, $Q$, every $t \leq Q$ and every query-answer history $\pi$ of length $t-1$ the process $P_1$ answers the $t\th$ query of the algorithm \dchange{consistently with $G_1$. Namely:}
\begin{itemize}
\item For a pair query $q_t=(u,v)$
if the pair $(u,v)$ is a crossing pair \dchange{in $G_1$}, then the process replies $1$, and otherwise it replies $0$.
\item For a random new-neighbor query $q_t=u$ the process answers with a random neighbor of $u$ that has yet been observed by the algorithm. That is, for every vertex $v$ such that $v\in\Gamma(u)\setminus\Gamma^{kn}_{\pi}(u)$ the process replies $a_t=v$ with probability $1/(\sqrt m - d^{kn}_{\pi}(u))$.
\end{itemize}

The process $P_2$ is defined as follows:
\begin{itemize}
\item For a query-answer history $\pi$ we denote by $\mathcal{G}_2(\pi)\subset \mG_2$ the subset of graphs in $\mG_2$ that are consistent with $\pi$.

\item For every $t\leq Q$ and every query-answer history $\pi$ of length $t-1$, the process $P_2$ selects a graph in $\mG_2$ uniformly at random and answers the $t\th$ query as follows.
\begin{enumerate}
\item If the $t\th$ query is a pair query $q_t = (u,v)$, then $P_2$ answers the query $q_t$ according to the selected graph.
\item  If the $t\th$ query is a random new-neighbor query $q_t = u_t$, then $P_2$'s answer is a uniform new neighbor of $u_t$ in the selected graph.
\end{enumerate}

\item After all queries are answered (i.e., after $Q$ queries), uniformly choose a random graph $G$ from $\mG_2(\pi)$.
\end{itemize}

For a query-answer history $\pi$ of length $Q$ we denote by $\pi^{\leq t}$ the length $t$ prefix of $\pi$ and by $\pi^{\geq t}$ the $Q-t+1$ suffix of $\pi$.

We note that the selected graph is only used to answer the $t\th$ query and is then
``discarded back to'' the remaining graphs that are consistent with that answer (and all previous answers in $\pi$).

\begin{claim}
	Let $\pi$ be a query-answer history of length $t-1$. We use $\circ$ to denote concatenation. 	
	\begin{itemize}
	\item If the $t\th$ query is a pair query, then $a_t=1$ with probability $$\frac{|\mG_2(\pi\circ(q_t,1))|}{|\mG_2(\pi)|},$$
		  and $a_t = 0$ with probability $$\frac{|\mG_2(\pi\circ(q_t,0))|}{|\mG_2(\pi)|}.$$

	\item If the $t\th$ query is a random new-neighbor query $q_t=u_t$, then 	for every $v\in V \setminus \Gamma^{kn}_{\pi}(u)$ the probability that the process $P_2$ answers $a_t=v$ is
		\[\frac{|\mG_2(\pi\circ(q_t,v))|}{|\mG_2(\pi)|} \cdot \frac{1}{ \sqrt m - d^{kn}_\pi(u_t)}.\]
		 If $v \in \Gamma^{kn}_{\pi}(u)$ then the probability that $P_2$ answers $a_t=v$ is $0$.
	\end{itemize}
\end{claim}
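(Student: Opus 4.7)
The plan is to prove both parts of the claim by directly unpacking the definition of $P_2$: at step $t$ the process first draws a graph $G$ uniformly at random from $\mG_2(\pi)$ (the graphs still consistent with the history $\pi$) and then uses $G$ to answer $q_t$. This reduces the claim to conditioning on $G$ and summing.

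For the pair-query case $q_t=(u,v)$, the answer $a_t$ is a deterministic function of $G$, namely the indicator of $(u,v)\in E(G)$. Hence $\{G\in\mG_2(\pi) : a_t=1\}$ is exactly $\mG_2(\pi\circ(q_t,1))$ and similarly for $a_t=0$. Dividing by $|\mG_2(\pi)|$ yields the two displayed probabilities.

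For the new-neighbor query $q_t=u_t$, I first note two structural properties of $\mG_2(\pi)$: every graph in it contains $u_t$ with degree exactly $\sqrt m$ (since all vertices of $L\cup R$ have this degree in each graph of $\mG_2$), and every vertex in $\Gamma^{kn}_\pi(u_t)$ is a true neighbor of $u_t$ in every such graph (since a ``known'' neighbor arose from an answer already recorded in $\pi$). Consequently, for any $G\in\mG_2(\pi)$ the set of unexplored neighbors of $u_t$ has size exactly $\sqrt m - d^{kn}_\pi(u_t)$. By the definition of $P_2$, conditioned on $G$ the process returns each element of this set with probability $1/(\sqrt m - d^{kn}_\pi(u_t))$, and any $v\in V\setminus\Gamma^{kn}_\pi(u_t)$ that is not a neighbor of $u_t$ in $G$ is returned with probability $0$.

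Combining these facts, for $v\in V\setminus\Gamma^{kn}_\pi(u_t)$,
\[
\Pr[a_t=v] \;=\; \sum_{G\in\mG_2(\pi):\,(u_t,v)\in E(G)} \frac{1}{|\mG_2(\pi)|}\cdot\frac{1}{\sqrt m - d^{kn}_\pi(u_t)}.
\]
The final step is the identity $\{G\in\mG_2(\pi) : (u_t,v)\in E(G)\} = \mG_2(\pi\circ(q_t,v))$, which holds because appending the answer $v$ to the neighbor query $u_t$ is consistent with exactly those graphs of $\mG_2(\pi)$ in which $(u_t,v)$ is an edge. Substituting gives the claimed formula, and the case $v\in\Gamma^{kn}_\pi(u_t)$ is immediate since $P_2$ never returns a known neighbor. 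The statement is essentially bookkeeping; the only point requiring care is justifying the constant denominator $\sqrt m - d^{kn}_\pi(u_t)$, which rests on the uniform degree and the interpretation of the knowledge graph noted above.
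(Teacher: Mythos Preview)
Your proof is correct and follows essentially the same approach as the paper's: condition on the uniformly chosen graph $G\in\mG_2(\pi)$, observe that the answer is then determined (for pair queries) or uniform over the $\sqrt m - d^{kn}_\pi(u_t)$ unexplored neighbors (for new-neighbor queries), and identify the relevant event with $\mG_2(\pi\circ(q_t,a))$. If anything, your version is slightly more careful than the paper's, since you explicitly justify why the number of unexplored neighbors equals $\sqrt m - d^{kn}_\pi(u_t)$ via the uniform-degree property and the fact that known neighbors are genuine neighbors in every consistent graph.
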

\begin{proof}
	First consider a pair query $q_t=(u_t,v_t)$. The probability that $(u_t,v_t)$ is an edge in the graph chosen by the process $P_2$ is the fraction of graphs in $\mG_2(\pi)$ in which $(u_t,v_t)$ is an edge. This is exactly	 $\frac{|\mG_2(\pi\circ(q_t,1))|}{|\mG_2(\pi)|}.$ Similarly, the probability of choosing a graph in which $(u_t,v_t)$ is not an edge is $\frac{|\mG_2(\pi\circ(q_t,0))|}{|\mG_2(\pi)|}$.

	Now consider a random new-neighbor query $q_t=u_t$. We start with the case that $v \in V \setminus \Gamma^{kn}_{\pi}$. The probability that $v$ is chosen by $P_2$ is the probability that a graph $G$ in which $v$ is a neighbor of $u_t$ is chosen in the first step, and that $v$ is the chosen new neighbor among all of $u$'s neighbors in the second step. Since there are $|\mG_2(\pi \circ (u,v) )|$ graphs in which $v$ is a neighbor of $u_t$, and $u_t$ has $\sqrt m - d^{kn}_\pi(u_t)$ neighbors,
	this happens with probability
	$$ \frac{|\mG_2(\pi \circ (u,v) )|}{|\mG_2|}\cdot \frac{1}{ \sqrt m - d^{kn}_\pi(u_t)}.$$
	For a vertex $v$ such that $v \notin V \setminus \Gamma^{kn}_{\pi}$, in every graph $G\in\mG_2$ , $v$ is not a neighbor of $u_t$, implying that the probability that the process replies $a_t=v$ is $0$.
\end{proof}

\begin{lemma} \label{uniform graph}
For every algorithm ALG, the process $P_2$, when interacting with ALG,
answers ALG's queries according to a uniformly generated graph $G$ in $\mG_2$.
\end{lemma}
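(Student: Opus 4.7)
The plan is to prove the equivalent statement that for every length-$Q$ query-answer history $\pi$, the joint distribution over $(\pi, G)$ produced by $P_2$ (where $G$ is the final graph drawn uniformly from $\mG_2(\pi)$) equals the joint distribution obtained by first sampling $G$ uniformly from $\mG_2$ and then letting ALG interact with the true neighbor/pair oracle of $G$ (where new-neighbor queries are answered by a uniform unobserved neighbor). It suffices to show that the marginals over histories agree, since in both scenarios the conditional of $G$ given $\pi$ is uniform over $\mG_2(\pi)$.

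I would compute the probability that $P_2$ produces a fixed history $\pi=\langle(q_1,a_1),\ldots,(q_Q,a_Q)\rangle$ by multiplying, over $i=1,\ldots,Q$, the probability that ALG issues $q_i$ given $\pi^{<i}$ (call this $p_{\mathrm{ALG}}(q_i\mid\pi^{<i})$) times the probability of the answer $a_i$ under $P_2$ given by the preceding claim. For a pair query the answer factor is $|\mG_2(\pi^{\leq i})|/|\mG_2(\pi^{<i})|$, and for a new-neighbor query $q_i=u_i$ with answer $v$ it is $|\mG_2(\pi^{\leq i})|/|\mG_2(\pi^{<i})|\cdot 1/(\sqrt{m}-d^{kn}_{\pi^{<i}}(u_i))$. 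The crucial observation is that the ratios $|\mG_2(\pi^{\leq i})|/|\mG_2(\pi^{<i})|$ \emph{telescope}, yielding
\[
\prod_{i=1}^{Q}\frac{|\mG_2(\pi^{\leq i})|}{|\mG_2(\pi^{<i})|}=\frac{|\mG_2(\pi)|}{|\mG_2|}.
\]
Hence the $P_2$-probability of $\pi$ equals
\[
\frac{|\mG_2(\pi)|}{|\mG_2|}\cdot\prod_{i=1}^{Q}p_{\mathrm{ALG}}(q_i\mid\pi^{<i})\cdot\prod_{i:\,q_i\text{ new-nbr}}\frac{1}{\sqrt{m}-d^{kn}_{\pi^{<i}}(u_i)}.
\]

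On the other side, I would compute the probability of $\pi$ under the ``one-shot'' experiment where $G$ is sampled uniformly from $\mG_2$ and ALG interacts with the true oracle of $G$. Writing this as $\sum_{G\in\mG_2}|\mG_2|^{-1}\Pr_G[\pi]$, only graphs in $\mG_2(\pi)$ contribute, and for every such $G$ the probability $\Pr_G[\pi]$ factors into the same product of $p_{\mathrm{ALG}}(q_i\mid\pi^{<i})$ and the uniform new-neighbor factors $1/(\sqrt{m}-d^{kn}_{\pi^{<i}}(u_i))$ (which are the same for all $G\in\mG_2(\pi)$ because every vertex in $L\cup R$ has degree exactly $\sqrt m$, independent of the choice of graph). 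Summing over $G\in\mG_2(\pi)$ yields exactly the expression above, establishing the marginal equality.

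The main subtlety (rather than a deep obstacle) will be justifying that the ALG-side randomness factors identically in both experiments: the probability that ALG issues $q_i$ depends only on $\pi^{<i}$, not on the underlying graph, and the uniform new-neighbor factor depends only on $d^{kn}_{\pi^{<i}}(u_i)$ and the common degree $\sqrt m$. Once this is observed, the telescoping identity immediately delivers the lemma, and the conditional-uniformity of $G$ given $\pi$ in both experiments yields the full joint equivalence claimed.
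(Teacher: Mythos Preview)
Your proposal is correct and uses essentially the same telescoping identity as the paper's proof. The paper fixes a graph $G$ and shows the probability that $G$ is the final output equals $1/|\mG_2|$ via the telescoping product $\prod_i |\mG_2(\pi^{\leq i})|/|\mG_2(\pi^{\leq i-1})|\cdot 1/|\mG_2(\pi)| = 1/|\mG_2|$, whereas you fix a history $\pi$ and match the marginal $\Pr[\pi]$ under $P_2$ to that under the one-shot experiment; both hinge on the same telescoping and on the fact that the new-neighbor factor $1/(\sqrt m - d^{kn}_{\pi^{<i}}(u_i))$ is identical for every $G\in\mG_2(\pi)$. Your version is a bit more explicit about why the per-graph probability $\Pr_G[\pi]$ is constant across $\mG_2(\pi)$, which is exactly the point the paper's terse argument leaves implicit.
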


\begin{proof}
	Consider a specific graph $G \in \mG_2$. Let $\pi$ be the query-answer history generated  by the interaction between ALG and $P_2$. Let $Q$ be the number of queries performed during the interaction. The probability that $G$ is the resulting graph from that interaction is
	\begin{align*}
		\Pr[G \in \mG_2(\pi^{\leq 1})] &\cdot  \Pr[G \in \mG_2(\pi^{\leq 2}) \mid G \in \mG_2(\pi^{\leq 1})] \cdot \; \dots \; \cdot \Pr[G \in \mG_2(\pi^{\leq Q})| G \in \mG_2(\pi^{\leq Q-1})] \cdot \frac{1}{|\mG(\pi^{\leq Q})|} \\
		&= \frac{|\mG_2(\pi^{\leq 1})|}{|\mG_2|} \cdot \frac{|\mG_2(\pi^{\leq 2})|}{|\mG_2(\pi^{\leq 1})|} \cdot \; \dots\; \cdot \frac{|\mG_2(\pi^{\leq Q})|}{|\mG_2(\pi^{\leq Q-1 })|} \cdot \frac{1}{|\mG_2(\pi^{Q})|} = \frac{1}{|\mG_2|} ,
	\end{align*}
	and the lemma follows.
\end{proof}

For a fixed algorithm ALG that performs $Q$ queries, and for $b\in\{1,2\}$, let $\mathcal{D}_{\text{ALG}}^{b}$ denote the distribution on query-answers histories of length $Q$ induced by the interaction between ALG and $P_b$. We shall show that for every algorithm ALG that performs at most $Q=\frac{m^{3/2}}{100\tr}$ queries, the statistical distance between $\mathcal{D}^{\text{ALG}}_{1}$ and $\mathcal{D}^{\text{ALG}}_{2}$, denoted $d\left(\mathcal{D}^{\text{ALG}}_{1}, \mathcal{D}^{\text{ALG}}_{2}\right)$, is at most $\frac{1}{3}$. This will imply that the lower bound stated in Theorem~\ref{m 3/2 over delta} holds for the case that $\tr(G)=m$. In order to obtain this bound we introduce the
notion of a query-answer witness pair, defined next.

\begin{definition}\label{def:qa-witness-pair}
We say that ALG has detected a \textsf{query-answer witness pair} in three cases:
\begin{enumerate}
 \item If $q_t$ is a pair query for a crossing pair $(u_t,v_t) \in L\times R$ and $a_t=0$.
 \item If $q_t$ is a pair query for a non-crossing pair $(u_t,v_t) \in (L\times L) \cup (R\times R)$ and $a_t=1$.
 \item If $q_t=u_t$ is a random new-neighbor query and $a_t=v$ for some $v$ such that $(u_t,v)$ is a non-crossing pair.
\end{enumerate}
\end{definition}

  We note that the source of the difference between $\mathcal{D}^{\text{ALG}}_{1}$ and $\mathcal{D}^{\text{ALG}}_{2}$
 is not only due to the probability that the query-answer history contains a witness pair (which is $0$
 under $\mathcal{D}^{\text{ALG}}_{1}$ and non-$0$ under $\mathcal{D}^{\text{ALG}}_{2}$). There is also
 a difference in the distribution over answers to random new neighbor queries  when the answers do not
 result in  witness pairs (in particular when we condition on the query-answer history prior to
 the $t\th$ query). However, the analysis of witness pairs serves us also in bounding the
 contribution to the distance due to random new neighbor queries that do not result in a witness pairs.

Let $w$ be a ``witness function'', such that for a pair query $q_t$ on a crossing pair, $w(q_t)=0$, and for a non-crossing pair, $w(q_t)=1$. The probability that ALG detects a witness pair when $q_t$ is a pair query $(u_t,v_t)$ and $\pi$ is a query-answer history of length \dchange{$t-1$}, is
\[ \Pr_{P_2}[w(q_t)\,\ddchange{|\,\pi}] = \frac{\left|\mathcal{G}_2\left(\pi\circ\left(q_t,w(q_t)\right)\right)\right|}{\left|\mathcal{G}_2\left(\pi\right)\right|}\le \frac{\left|\mathcal{G}_2\left(\pi\circ(q_t,w(q_t))\right)\right|}{|\mathcal{G}_2(\pi\circ(q_t,\overline{w(q_t)}))|} \ .\]

Therefore, to bound the probability that the algorithm observes a witness pair it is sufficient to bound the ratio between the number of graphs in $\mathcal{G}_2\left(\pi\circ\left(q,w(q_t) \right)\right)$ and the number of graphs in $\mathcal{G}_2(\pi\circ(q,\overline{w(q_t)}\,))$. We do this by introducing an auxiliary graph, which is defined next. \par

\ifnum\focs=0
\subsubsect{The auxiliary graph for $\tr=m$}
\else
\subsect{The auxiliary graph}
\fi
For every $t\leq Q$, every query-answer history $\pi$ of length $t-1$ for which $\pi$ is consistent with $G_1$ (that is, no witness pair has yet been detected), and every pair $(u,v)$, we consider a bipartite \textsf{auxiliary graph} $\aux$. On one side of $\aux$ we have a node for every graph in $\mathcal{G}_{2}(\pi)$ for which the pair \dchange{$(u,v)$} is a witness pair. We refer to these nodes as \textsf{witness graphs}. On the other side of the auxiliary graph, we place a node for every graph in $\mathcal{G}_{2}(\pi)$ for which the pair is not a witness. We refer to these nodes as \textsf{non-witness graphs}. We 
\dchange{put an edge in the auxiliary graph} between a witness graph $W$ and a non-witness graph $\oW$ if the pair $(u,v)$ is a crossing (non-crossing) pair and the two graphs are identical except that their red (blue) matchings differ on exactly two pairs -- $(u,v)$ and one additional pair.
In other words, $\oW$ can be obtained from $W$ by performing a {\em switch\/}
operation, as defined next.
\begin{definition}
 \label{switch matched} We define a \textsf{switch between pairs in a matching} in the following manner. Let $(u,v)$ and $(u',v')$ be two matched pairs in a matching $M$. A \textsf{switch} between $(u,v)$ and $(u',v')$ means removing the edges $(u,v)$ and $(u',v')$ from $M$ and adding to it the edges $(u,v')$ and $(u',v)$.
\end{definition}
Note that the switch process maintains the cardinality of the matching.
We denote by $d_w(\aux)$ the \textsf{minimal} degree of any witness graph in $\aux$, and by $d_{nw}(\aux)$ the \textsf{maximal} degree of the non-witness graphs. See Figure~\ref{fig:auxiliary} for an illustration.

%

\begin{figure}[ht!]
	\ifnum\nofigures=0
	\centering
\subfloat[
{The auxiliary graph with witness nodes on the left and non-witness nodes on the right.}]{%
			\label{fig:bipartiteGoodBad}
			\begin{tikzpicture}[thick, x=2cm, scale=0.7]
			\DrawBadAndGoodBipartite		
u			\end{tikzpicture}}\hfill
\subfloat[
{An illustration of two neighbors in the auxiliary graph for $\tr=m$.}]{%
			\begin{tikzpicture}[thick, x=2cm, scale=0.7,xscale=0.5]
			\DrawFix
			\end{tikzpicture}\hfill
			\label{fig:fix_k_general}
		}
		\fi
		\caption{}
		\label{fig:auxiliary}
	\end{figure}

\begin{lemma} \label{Auxilary-graph-degrees k= sqrt m}
Let $\tr=m$ and $Q = \frac{m^{3/2}}{100\tr}$. For every $t\leq Q$, every query-answer history $\pi$ of length $t-1$ such that $\pi$ is consistent with $G_1$ and every pair $(u,v)$,
\[\frac{d_{nw}(\aux)}{d_{w}(\aux)} \leq \frac{2}{\sqrt m}= \frac{2\tr}{m^{3/2}}.\]
\end{lemma}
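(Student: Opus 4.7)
The plan is to bound $d_{nw}(\aux)$ from above and $d_w(\aux)$ from below, splitting on whether $(u,v)$ is a crossing pair in $L\times R$ or a non-crossing pair in $L\times L$ or $R\times R$. In both cases I will show $d_{nw}(\aux)\leq 1$ and $d_w(\aux)\geq \sqrt{m}/2$, yielding the stated ratio. Throughout I use the fact that the knowledge graph $G^{kn}_\pi$ contains at most $Q=\sqrt{m}/100$ edges and non-edges combined (each query contributes at most one), so the number of known edges (respectively known non-edges) incident to any particular vertex is at most $Q$.

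For the upper bound on $d_{nw}(\aux)$, fix any non-witness graph $\oW$, and let $M$ denote the relevant matching ($M^C$ in the crossing case, $M^L$ or $M^R$ in the non-crossing case). Since $(u,v)\notin M$ in $\oW$, the vertex $u$ is matched to some $u^*$ and $v$ is matched to some $v^*$ in $M$. Any neighbor of $\oW$ in $\aux$ differs from $\oW$ by a switch that adds $(u,v)$ to $M$; such a switch must involve precisely the pairs $(u,u^*)$ and $(v,v^*)$, and the re-pairing is uniquely determined by the requirement that $(u,v)$ appear in the new matching. Therefore $d_{nw}(\aux)\leq 1$.

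For the lower bound on $d_w(\aux)$, fix any witness graph $W$ and let $M$ be the relevant matching with $(u,v)\in M$. For each other pair in $M$, I enumerate the possible switches and count how many produce a graph in $\mG_2(\pi)$. In the crossing case, each other pair in $M^C$ admits a single switch consistent with the bipartite structure (the alternative re-pairing would place two vertices of $L$ or two of $R$ together); this gives $\sqrt{m}-1$ candidate switches. In the non-crossing case, each other pair in $M^L$ admits two legitimate re-pairings of the four endpoints, giving $2(\sqrt{m}/2-1)=\sqrt{m}-2$ candidate switches. Since $\pi$ is consistent with $G_1$, a candidate switch is valid iff the two newly added pairs do not contradict $G^{kn}_\pi$: in the crossing case, neither new pair in $M^C$ may be a known edge; in the non-crossing case, neither new pair in $M^L$ may be a known non-edge. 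I will argue that each known edge incident to $u$ invalidates exactly one candidate switch (the one that would re-pair $u$ with that specific vertex), and similarly for $v$. Hence the number of invalidated switches is at most $2Q = \sqrt{m}/50$, so $d_w(\aux)\geq \sqrt{m}-2-2Q\geq \sqrt{m}/2$ in either case. Combining yields $d_{nw}(\aux)/d_w(\aux)\leq 2/\sqrt{m}=2t/m^{3/2}$.

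The main technical care lies in the non-crossing case, where I must recognize that each other pair in the blue matching contributes \emph{two} re-pairings rather than one, and must carefully check that the bookkeeping of ``constraints from $\pi$'' really invalidates at most one switch per known pair. The other subtlety is verifying that the four ``non-trivial'' constraints one might worry about (on the four pairs $(u,v'),(u',v),(u,v),(u',v')$) reduce to just two relevant constraints: since $\pi$ is consistent with $G_1$, the pair $(u',v')$ being removed from $M$ cannot have been queried in a way that forbids its new status (all crossing pairs in $\pi$ are edges, all non-crossing pairs in $\pi$ are non-edges), so only the two pairs newly added to $M$ generate constraints.
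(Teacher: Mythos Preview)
Your proposal is correct and follows essentially the same approach as the paper: bound $d_{nw}(\aux)\leq 1$ by observing that the switch creating $(u,v)$ is uniquely determined by the partners of $u$ and $v$ in $\oW$, and bound $d_w(\aux)\geq \sqrt{m}/2$ by counting candidate switches and subtracting those ruled out by the at most $2Q$ relevant entries of $G^{kn}_\pi$. The paper treats only the crossing case in detail and asserts the non-crossing case is ``almost identical''; you go further and spell out the non-crossing case, correctly noting that there each of the $\sqrt{m}/2-1$ other blue pairs yields \emph{two} re-pairings (giving $\sqrt{m}-2$ candidates rather than $\sqrt{m}-1$), and that the vacuous constraints on the two removed pairs follow from $\pi$ being consistent with $G_1$. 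This extra care is a genuine improvement in exposition but not a different method.
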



\begin{proof}
Recall that the graphs in $\mG_2$ are as defined in \dchange{Subsection}~\ref{Construction for k=sqrt m} and illustrated in Figure~\ref{fig:bipartite}. In the following we consider crossing pairs, as the proof for non-crossing pairs is almost identical. \dchange{Recall that a} crossing pair is a pair $(u,v)$ such that $u\in L$ and $v\in R$ or vise versa. A witness graph $W$ with respect to the pair $(u,v)$ is a graph in which $(u,v)$ is a red pair, i.e., $(u,v) \in M^C$. There is an edge from $W$ to every non-witness graph $\oW \in \mG_{2}(\pi)$ such that $M^C(W)$ and $M^C(\oW)$ differ exactly on $(u,v)$ and one additional edge. \par

Every red pair 	$(u',v') \in M^C(W)$ creates a potential non-witness graph $\oW_{(u',v')}$ when switched with $(u,v)$ (as defined in Definition~\ref{switch matched}). However, not all of the these non-witness graphs are in $\mG_{2}(\pi)$. If $u'$ is a neighbor of $v$ in the knowledge graph $G^{kn}_\pi$, i.e., $u' \in \Gamma^{kn}_\pi(v)$, then $\oW_{(u',v')}$ is not consistent with the knowledge graph, and therefore $\oW_{(u',v')}\notin \mG_{2}(\pi)$. This is also the case for a pair $(u',v')$ such that $v'\in\Gamma^{kn}_\pi(u)$. Therefore, only pairs $(u',v')\in M^C$ such that $u'\notin \Gamma^{kn}_\pi(v)$ and $v'\notin \Gamma^{kn}_\pi(u)$ produce a non-witness graph $\oW_{(u',v')}\in \mG_{2}(\pi)$ when switched with $(u,v)$. We refer to these pairs as \textsf{consistent pairs}. Since $t \leq \frac{\sqrt m}{100}$, both $u$ and $v$ each have at most $\frac{m}{100}$ neighbors in the knowledge graph, implying that out of the $\sqrt m -1 $ potential pairs, the number of consistent pairs is at least
$$\sqrt m - 1 - d^{kn}_\pi(u) - d^{kn}_\pi(v) \geq \sqrt m -1 -2\cdot \frac{\sqrt m}{100} \geq \frac{1}{2}\sqrt m .$$
\sloppy
Therefore, the degree of every witness graph $W\in \mA_{\pi,(u,v)}$ is at least $\frac{1}{2}\sqrt m$, implying
that $d_{w}(\aux) \geq \frac{1}{2}\sqrt m$. \par
In order to prove that $d_{nw}(\aux) = 1$, consider a non-witness graph $\oW$. Since $\oW$ is a non-witness graph, the pair $(u,v)$ is not a red pair. This implies that $u$ is matched to some vertex $v' \in R$, and $v$ is matched to some vertex $u' \in L$. That is, $(u,v'),(v,u') \in M^C$. By the construction of the edges in the auxiliary graph, every neighbor $W$ of $\oW$ can be obtained by a single switch between two red pairs in the red matching. The only possibility to switch two pairs in $M^C(\oW)$ and obtain a matching in which $(u,v)$ \textsf{is\/} a red pair is to switch the pairs $(u,v')$ and $(v,u')$. Hence, every non-witness graph $\oW$ has at most one neighbor. \par
We showed that $d_{w}(\aux) \geq \frac{1}{2}\sqrt m$ and that $d_{nw}(\aux) \leq 1$, implying
\[\frac{d_{nw}(\aux)}{d_{w}(\aux)} \leq \frac{2}{\sqrt m}= \frac{2\tr}{m^{3/2}},\] and the proof is complete.
\end{proof}

\ifnum\focs=0
\subsubsect{Statistical distance} \label{Statistical distance}
\else
\subsect{Statistical distance} \label{Statistical distance}
\fi

For a query-answer history $\pi$ of length $t-1$ and a query $q_t$, let $Ans(\pi, q_t)$ denote the set of possible answers to the query $q_t$ that are consistent with $\pi$. Namely, if $q_t$ is a pair query (for a pair that does not
belong to the knowledge graph $G^{kn}_\pi$), then
$Ans(\pi, q_t) = \{0,1\}$, and if $q_t$ is a random new-neighbor query, then
$Ans(\pi, q_t)$ consists of all vertices except those in $N^{kn}_\pi$.
\ifnum\focs=1
\tchange{For the proofs of the next lemma we refer the reader to the full version of this paper~\cite{ELRS}.}
\fi
\begin{lemma} \label{Probability_hitting_red_blue_low}
\dchange{Let $\tr=m$ and $Q = \frac{m^{3/2}}{100\tr}$.}
For every $t\leq Q$, every query-answer history $\pi$ of length $t-1$ such that $\pi$ is consistent with $G_1$ and for every query $q_t$:
\[\sum\limits_{\tchange{a} \in Ans(\pi,q_t)} \Big|\Pr_{\tchange{P_1}}[\tchange{a} \,|\, \pi,q_t] - \Pr_{\tchange{P_2}}[\tchange{a} \,|\, \pi,q_t ]\Big| \leq \frac{\dchange{12}}{\sqrt m} = \tchange{\frac{12\tr}{m^{3/2}}} .\]
\end{lemma}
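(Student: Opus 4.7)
The plan is to handle pair queries and random new-neighbor queries separately, in each case bounding the sum by $O(1/\sqrt m)$. For pair queries the auxiliary-graph bound of Lemma \ref{Auxilary-graph-degrees k= sqrt m} does the heavy lifting; for new-neighbor queries I will use the matching structure of $\mG_2$ to reduce to a clean expectation calculation.

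If $q_t=(u,v)$ is a pair query then $\mathrm{Ans}(\pi,q_t)=\{0,1\}$ and $P_1$ answers deterministically, so the sum equals $2\Pr_{P_2}[a_t=w(q_t)\mid\pi]$. By the inequality noted just before Lemma \ref{Auxilary-graph-degrees k= sqrt m}, this probability is at most $|\mG_2(\pi\circ(q_t,w(q_t)))|/|\mG_2(\pi\circ(q_t,\overline{w(q_t)}))|$. Double-counting the edges of $\aux$ (every witness node has degree at least $d_w(\aux)$, every non-witness node at most $d_{nw}(\aux)$) bounds this ratio by $d_{nw}(\aux)/d_w(\aux)\le 2/\sqrt m$ by Lemma \ref{Auxilary-graph-degrees k= sqrt m}. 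Hence pair queries contribute at most $4/\sqrt m$.

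For a random new-neighbor query $q_t=u$, partition $\mathrm{Ans}(\pi,q_t)$ into $C=\{v\notin N^{kn}_\pi(u):(u,v)\text{ is crossing}\}$ and $W=\{v\notin N^{kn}_\pi(u):(u,v)\text{ is non-crossing}\}$. For every $v\in C$ one has $\Pr_{P_1}[a_t=v\mid\pi,q_t]=1/(\sqrt m-d^{kn}_\pi(u))$ while $\Pr_{P_2}[a_t=v\mid\pi,q_t]$ is the same quantity multiplied by $|\mG_2(\pi\circ(u,v))|/|\mG_2(\pi)|\le 1$; for $v\in W$, $\Pr_{P_1}[a_t=v\mid\pi,q_t]=0$. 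Because $\Pr_{P_2}\le \Pr_{P_1}$ pointwise on $C$ and both distributions are normalized, the sum of absolute differences equals $2\Pr_{P_2}[a_t\in W]$.

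To bound $\Pr_{P_2}[a_t\in W]$, I will use the structural fact that in every $G\in\mG_2$ the vertex $u$ has exactly one non-crossing neighbor (its blue match), and that this neighbor cannot lie in $N^{kn}_\pi(u)$ because $\pi$ is consistent with $G_1$ (otherwise $\pi$ would already contain a witness). Hence $\sum_{v\in W}|\mG_2(\pi\circ(u,v))|/|\mG_2(\pi)|=\EX_{G\sim\mG_2(\pi)}[|W\cap\Gamma_G(u)|]=1$, so $\Pr_{P_2}[a_t\in W]=1/(\sqrt m-d^{kn}_\pi(u))\le 2/\sqrt m$ using $d^{kn}_\pi(u)\le Q\le \sqrt m/100$. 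The new-neighbor case therefore contributes at most $4/\sqrt m$, which together with the pair-query bound fits comfortably within the claimed $12/\sqrt m=12t/m^{3/2}$. The main obstacle is the new-neighbor case, where the answer alphabet is large and $P_1,P_2$ disagree on many vertices; the pointwise sign observation $\Pr_{P_2}\le\Pr_{P_1}$ on $C$ is what collapses it to the tractable quantity $\Pr_{P_2}[a_t\in W]$.
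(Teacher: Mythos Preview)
Your proof is correct. For pair queries you follow the paper exactly, invoking the auxiliary-graph degree ratio from Lemma~\ref{Auxilary-graph-degrees k= sqrt m} to bound $\Pr_{P_2}[w(q_t)]$, and the factor of $2$ arises because the two answers' differences are equal in magnitude.

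For the random new-neighbor case your argument differs from the paper's and is in fact cleaner. The paper treats each answer $v$ individually: for $v$ on the opposite side it bounds $|\Pr_{P_1}[v]-\Pr_{P_2}[v]|$ by $\frac{1}{\sqrt m - d^{kn}_\pi(u)}\cdot\frac{|\mG_2(\pi\circ((u,v),0))|}{|\mG_2(\pi)|}$ and then applies the auxiliary-graph lemma to each summand; for $v$ on the same side it does the analogous thing with $\frac{|\mG_2(\pi\circ((u,v),1))|}{|\mG_2(\pi)|}$. Summing these per-vertex bounds over $\sqrt m$ vertices on each side yields $4/\sqrt m + 8/\sqrt m = 12/\sqrt m$. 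You instead observe that the differences have a fixed sign on $C$ (since $\mG_2(\pi\circ(u,v))\subseteq\mG_2(\pi)$), so the whole sum collapses to $2\Pr_{P_2}[a_t\in W]$, and you compute this probability \emph{exactly} via the double-counting identity $\sum_{v\in W}|\mG_2(\pi\circ(u,v))|=\sum_{G\in\mG_2(\pi)}|W\cap\Gamma_G(u)|=|\mG_2(\pi)|$, using that each $G\in\mG_2(\pi)$ gives $u$ exactly one blue match and that this match must lie in $W$. This gives $4/\sqrt m$ rather than $12/\sqrt m$ for the neighbor case. The paper's per-term approach has the minor advantage of being uniform (the same auxiliary-graph bound is reused everywhere, which ports directly to the $\tr\neq m$ regimes), while your argument is tighter and explains more transparently why the neighbor case cannot blow up despite the large answer set.
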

\ifnum \focs = 0
\begin{proof}
We prove the lemma separately for each type of query.
\begin{itemize} \item We start with a crossing pair query $(u_t,v_t)$. In this case the witnesses are red pairs. Namely, our witness graphs for this case are all the graphs in $\mathcal{G}_2(\pi \circ (q_t,0))$, and the non-witness graphs are all the graphs in $\mathcal{G}_2(\pi \circ (q_t,1))$. By the construction of the auxiliary graph
\[\left|\mathcal{G}_2\left(\pi \circ (q_t,0)\right)\right| \cdot d_{w}(\aux) \leq \left|\mathcal{G}_2\left(\pi \circ (q_t,1)\right)\right| \cdot d_{nw}(\aux).\]
This, together with Lemma~\ref{Auxilary-graph-degrees k= sqrt m}, implies
 \[\frac{|\mathcal{G}_2(\pi \circ (q_t,0))|}{|\mathcal{G}_2(\pi)|}\le\frac{|\mathcal{G}_2(\pi\circ (q_t,0))|}{|\mathcal{G}_2(\pi\circ (q_t,1))|} \leq \frac{d_{nw}(\aux)}{d_{w}(\aux)} =\frac{2}{\sqrt m}=\tchange{\frac{2\tr}{m^{3/2}}}\;.\]
For a pair query $q_t$, the set of possible answers $Ans(\pi, q_t)$ is $\{0,1\}$. Therefore,
\begin{align*}
\sum\limits_{a \in \{0,1\}}&\Big|\Pr_{\tchange{P_1}}[a\,|\,\pi,q_t] - \Pr_{\tchange{P_2}}[a\,|\,\pi,q_t] \Big| \\
&=\Big|\Pr_{\tchange{P_1}}[0\,|\,\pi,q_t] - \Pr_{\tchange{P_2}}[0\,|\,\pi,q_t] \Big| + \Big|\Pr_{\tchange{P_1}}[1\,|\,\pi,q_t] - \Pr_{\tchange{P_2}}[1\,|\,\pi,q_t] \Big| \\
&=\tchange{\frac{2\tr}{m^{3/2}}} + 1-\left(1-\tchange{\frac{2\tr}{m^{3/2}}}\right) =\tchange{\frac{4\tr}{m^{3/2}}} = \frac{4}{\sqrt m}\;. \numberthis \label{SD crossing pair}
\end{align*}

\item For a non-crossing pair query $q_t=(u,v)$ our witness graphs are graphs that contain $q_t$ as a blue pair, i.e., graphs from $\mathcal{G}_2(\pi,(q_t,1))$, and our non-witness graphs are graphs in which no blue pair had been queried, i.e., graphs from $\mathcal{G}_2(\pi,(q_t,0))$. From Lemma~\ref{Auxilary-graph-degrees k= sqrt m} we get that for a non-crossing pair query $q_t$: \[\frac{\left|\mathcal{G}_2\left(\pi\circ (q_t,1)\right)\right|}{\left|\mathcal{G}_2(\pi)\right|}\le\frac{\left|\mathcal{G}_2\left(\pi\circ (q_t,1)\right)\right|}{\left|\mathcal{G}_2\left(\pi\circ (q_t,0)\right)\right|} \leq \frac{d_{nw}(\aux)}{d_{w}(\aux) } =\tchange{\frac{2\tr}{m^{3/2}}} =\frac{2}{\sqrt m}\;.\]
Therefore,
\begin{align*}
\sum\limits_{a\in\{0,1\}}&\Big|\Pr_{\tchange{P_1}}[a\,|\,\pi,q_t] - \Pr_{\tchange{P_2}}[a\,|\,\pi,q_t] \Big| \\
&= \;\;\Big|\Pr_{\tchange{P_1}}[0\,|\,\pi,q_t] - \Pr_{\tchange{P_2}}[0\,|\,\pi,q_t] \Big| + \Big|\Pr_{\tchange{P_1}}[1\,|\,\pi,q_t] - \Pr_{\tchange{P_2}}[1\,|\,\pi,q_t] \Big| \\
&= \;\; 1-\left(1-\tchange{\frac{2\tr}{m^{3/2}}}\right) + \tchange{\frac{2\tr}{m^{3/2}}} =\tchange{\frac{4\tr}{m^{3/2}}} = \frac{4}{\sqrt m}\;. \numberthis \label{SD non-crossing pair}
\end{align*}

\item For a new-neighbor query $q_t=u_t$, the set of possible answers $Ans(\pi, q_t)$ is the set of all the vertices in the graph. Therefore,
\begin{align*}
&\sum\limits_{a \in Ans(\pi, q_t)} \Big|\Pr_{\tchange{P_1}}[a \,|\,\pi,q_t] - \Pr_{\tchange{P_2}}[a \,|\,\pi,q_t] \Big| \\
&\;\;=\sum\limits_{v \in R} \Big|\Pr_{\tchange{P_1}}[v \,|\,\pi,q_t] - \Pr_{\tchange{P_2}}[v \,|\,\pi,q_t] \Big| +
\sum\limits_{v \in L} \Big|\Pr_{\tchange{P_1}}[v \,|\,\pi,q_t] - \Pr_{\tchange{P_2}}[v \,|\,\pi,q_t] \Big|\;.
\end{align*}
Recall that for a vertex $v \in \Gamma^{kn}_\pi(u)$, $\Pr_{\tchange{P_1}}[v \,|\,\pi,q_t] =\Pr_{\tchange{P_2}}[v \,|\,\pi,q_t] =0$. Therefore, \dchange{it suffices to} consider only vertices $v$ such that $v\notin \Gamma^{kn}_\pi(u)$. Assume without loss of generality that $u\in L$, and consider a vertex $v\in R, v\notin \Gamma^{kn}_\pi(u)$.
\dchange{Since for every $v\in R$ we have that $(u_t,v) \in E(G_1)$, by the definition of $P_1$,}
\begin{align}
&\Pr_{\tchange{P_1}}[v \,|\,\pi,q_t] = \frac{1}{\sqrt m - d^{kn}_\pi(u_t)} \label{D1_R}\;.
\end{align}
Now consider the process $P_2$. By \dchange{its} definition,
\begin{align*}
\Pr_{\tchange{P_2}}[v \,|\,\pi,q_t] &= \frac{\mathcal{G}_2\left(\pi\circ (q_t,v)\right) }{\mathcal{G}_2(\pi)} \cdot \frac{1}{\sqrt m -d^{kn}_\pi(u)} \\
&= \frac{\mathcal{G}_2\left(\pi\circ \left((u,v),1\right)\right) }{\mathcal{G}_2(\pi)} \cdot \frac{1}{\sqrt m -d^{kn}_\pi(u)} \\
&=\left(1 - \frac{\mathcal{G}_2\left(\pi\circ \left((u,v),0\right) \right)}{\mathcal{G}_2(\pi)}\right) \cdot \frac{1}{\sqrt m -d^{kn}_\pi(u)}\;.
\end{align*}
By the first item in the proof, for any crossing pair $q_t=(u,v)$,
\begin{align*}
\frac{\mathcal{G}_2(\pi \circ (q_t,0))}{\mathcal{G}_2(\pi)} =\tchange{\frac{4\tr}{m^{3/2}}} = \frac{4}{\sqrt m} \;,
\end{align*}
and it follows that
\begin{align}
&\Pr_{\tchange{P_2}}[v \,|\,\pi,q_t] =\left(1 - \tchange{\frac{4\tr}{m^{3/2}}} \right) \cdot \frac{1}{\sqrt m -d^{kn}_\pi(u)} \label{D2_R}\;.
\end{align}
By Equations~\eqref{D1_R} and~\eqref{D2_R}, we get that for every $v \in R$ such that $v \notin \Gamma^{kn}_\pi(u)$,
\begin{align}
\Big|\Pr_{\tchange{P_1}}[v \,|\,\pi,q_t] - \Pr_{\tchange{P_2}}[v \,|\,\pi,q_t]\Big| = \frac{\tchange{4\tr/m^{3/2}}}{\sqrt m - d^{kn}_\pi(u)}. \numberthis \label{random new neighbor query single a}
\end{align}
Therefore,
\begin{align*}
\sum\limits_{v \in R } & \Big|\Pr_{\tchange{P_1}}[v \,|\,\pi,q_t] - \Pr_{\tchange{P_2}}[v \,|\,\pi,q_t] \Big|
 =\sum\limits_{v \in R, v\notin \Gamma^{kn}_\pi(u)} \Big|\Pr_{\tchange{P_1}}[v \,|\,\pi,q_t] - \Pr_{\tchange{P_2}}[v \,|\,\pi,q_t] \Big|\\
& =\left(\sqrt m - d^{kn}_\pi(u)\right) \cdot \frac{\tchange{4\tr/m^{3/2}}}{\sqrt m - d^{kn}_\pi(u)}
 =\tchange{\frac{4\tr}{m^{3/2}}} = \frac{4}{\sqrt m} \;.\numberthis \label{SD_R}
\end{align*}

Now consider a vertex $v \in L$. Observe that for every $v\in L$, it holds that $v \notin \Gamma^{kn}_\pi(u) $ since otherwise $\pi$ is not consistent with $G_1$.
\dchange{For the same reason,}
\begin{align}
&\Pr_{\tchange{P_1}}[v \,|\,\pi,q_t] = 0 \label{D1_L}\;.
\end{align}
As for $P_2$, as before,
\begin{align*}
&\Pr_{\tchange{P_2}}[v \,|\,\pi,q_t] = \frac{\mathcal{G}_2(\pi, (u_t,v)) }{\mathcal{G}_2(\pi)} \cdot \frac{1}{\sqrt m - d^{kn}_\pi(u_t)}\;.
\end{align*}	
By the second item of the claim, since for every $v \in L$, $(u_t,v)$ is a non-crossing pair, we have that
\begin{align}
\frac{|\mathcal{G}_2(\pi,(u_t,v))|}{|\mathcal{G}_2(\pi)|} = \tchange{\frac{4\tr}{m^{3/2}}} = \frac{4}{\sqrt m} \label{D2_L}\;.
\end{align}
Combining Equations \eqref{D1_L} and \eqref{D2_L} we get that for every $v \in L$
\begin{align*}
\Big|\Pr_{\tchange{P_1}}[v \,|\,\pi,q_t] - \Pr_{\tchange{P_2}}[v \,|\,\pi,q_t]\Big| = \frac{ \tchange{4\tr/m^{3/2}} }{\sqrt m - d^{kn}_\pi(u)}.
\end{align*}
Since $Q=\frac{m^{3/2}}{100\tr} = \frac{\sqrt m}{100}$, for every $t\leq Q$, $d^{kn}_\pi(u) < \frac{1}{2}\sqrt m$, and it follows that $\frac{\sqrt m-1}{\sqrt m-d^{kn}(u)}$ is bounded by $2$. Hence,
\begin{align*}
\sum\limits_{v \in L} \Big|\Pr_{\tchange{P_1}}[v \,|\,\pi,q_t] - \Pr_{\tchange{P_2}}[v \,|\,\pi,q_t] \Big| &= (\sqrt m - 1) \cdot \frac{\tchange{4\tr/m^{3/2}}}{\sqrt m - d^{kn}_\pi(u)} \\
&= \tchange{\frac{8\tr}{m^{3/2}}} = \frac{8}{\sqrt m} \;.\numberthis \label{SD_L}
\end{align*}
By Equations \eqref{SD_R} and \eqref{SD_L} we get
\begin{align*}
\sum\limits_{v \in R} &\Big|\Pr_{\tchange{P_1}}[v \,|\,\pi,q_t] - \Pr_{\tchange{P_2}}[v \,|\,\pi,q_t] \Big| + \sum\limits_{v \in L} \Big|\Pr_{\tchange{P_1}}[v \,|\,\pi,q_t] - \Pr_{\tchange{P_2}}[v \,|\,\pi,q_t] \Big| \\
& = \tchange{\frac{12\tr}{m^{3/2}}} = \frac{12}{\sqrt m} . \numberthis \label{dist neighbors}
\end{align*}
\end{itemize}
This completes the proof.
\end{proof}
\fi
Recall that $\mathcal{D}_{b}^{\text{ALG}}$, $b\in\{1,2\}$, denotes the distribution on query-answer histories of length $Q$, induced by the interaction of ALG and $P_b$.
We show that the two distributions are indistinguishable for $Q$ that is sufficiently small.

\begin{lemma} \label{dist for k=sqrt m}
Let $\tr=m$. For every algorithm ALG that asks at most $Q=\tchange{\frac{m^{3/2}}{100\tr}}$ queries, the statistical distance between $\mathcal{D}_{1}^{\text{ALG}}$ and $\mathcal{D}_{2}^{\text{ALG}}$ is at most $\frac{1}{3}$.
\end{lemma}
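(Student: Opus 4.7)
The plan is to bound the total variation distance via a standard hybrid/chain-rule argument that decomposes it into a sum of per-step statistical distances, and then invoke the previous lemma at each step. Concretely, first I would fix the internal randomness of ALG (ultimately averaging over it) so that for every history prefix $\pi^{\leq t-1}$, the next query $q_t$ is a deterministic function of $\pi^{\leq t-1}$. Then, viewing $\mathcal{D}_1^{\text{ALG}}$ and $\mathcal{D}_2^{\text{ALG}}$ as distributions over length-$Q$ sequences built one step at a time, the standard chain rule for statistical distance gives
\[
d\bigl(\mathcal{D}_1^{\text{ALG}},\mathcal{D}_2^{\text{ALG}}\bigr)
\;\leq\;
\sum_{t=1}^{Q}\;
\mathbb{E}_{\pi\sim \mathcal{D}_1^{\text{ALG}}}
\Bigl[\,d_{TV}\bigl(\Pr_{P_1}[\cdot\mid\pi^{\leq t-1},q_t],\;\Pr_{P_2}[\cdot\mid\pi^{\leq t-1},q_t]\bigr)\Bigr],
\]
where the query $q_t$ inside the expectation is the one ALG issues after seeing $\pi^{\leq t-1}$.

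The key observation that makes the previous lemma usable is that every prefix $\pi^{\leq t-1}$ in the support of $\mathcal{D}_1^{\text{ALG}}$ is automatically consistent with $G_1$, since $P_1$ answers truthfully according to $G_1$ and therefore never reveals a witness pair. Hence the hypothesis of Lemma~\ref{Probability_hitting_red_blue_low} is satisfied at every step inside the expectation, and we obtain
\[
\sum_{a\in Ans(\pi,q_t)}\bigl|\Pr_{P_1}[a\mid\pi,q_t]-\Pr_{P_2}[a\mid\pi,q_t]\bigr|\;\leq\;\frac{12}{\sqrt{m}},
\]
so the corresponding $d_{TV}$ is at most $6/\sqrt{m}$.

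Plugging this uniform bound back into the chain-rule inequality and using $Q=\frac{m^{3/2}}{100t}=\frac{\sqrt{m}}{100}$ (recall $t=m$), we get
\[
d\bigl(\mathcal{D}_1^{\text{ALG}},\mathcal{D}_2^{\text{ALG}}\bigr)\;\leq\; Q\cdot\frac{6}{\sqrt{m}}
\;=\;\frac{m^{3/2}}{100\,t}\cdot\frac{6}{\sqrt{m}}\;=\;\frac{6m}{100\,t}\;=\;\frac{6}{100}\;<\;\frac{1}{3}.
\]
Averaging over ALG's internal coins (convexity of $d_{TV}$) preserves the bound.

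The main obstacle, and the only non-routine point, is justifying that one may restrict attention to expectations over $\mathcal{D}_1^{\text{ALG}}$ rather than $\mathcal{D}_2^{\text{ALG}}$ when applying the chain rule --- this is exactly what lets us invoke the previous lemma, which requires $\pi$ consistent with $G_1$. The standard way to see this is via a coupling: couple $P_1$ and $P_2$ step by step, paying at step $t$ the conditional $d_{TV}$ given the shared prefix generated so far; as long as the two copies agree, the shared prefix is consistent with $G_1$, and the first disagreement contributes at most the per-step bound. Summing the $Q$ per-step contributions gives the displayed bound. All other steps (the algebra and the substitution of $Q$) are routine.
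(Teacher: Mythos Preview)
Your proposal is correct and takes essentially the same approach as the paper. The paper formalizes the chain rule via explicit hybrid distributions $\mathcal{D}_{1,t}^{\text{ALG}}$ (first $t$ answers from $P_1$, rest from $P_2$) and the triangle inequality, but this is exactly the coupling/chain-rule argument you describe; in particular, the paper also uses that the prefix in each hybrid comparison is generated by $P_1$ and hence consistent with $G_1$, then applies Lemma~\ref{Probability_hitting_red_blue_low} and sums the $Q$ per-step bounds to get the same $\tfrac{1}{2}\cdot Q\cdot \tfrac{12}{\sqrt m}=\tfrac{6}{100}<\tfrac13$.
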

\begin{proof}
Consider the following hybrid distribution. Let $\mathcal{D}_{1,t}^{\text{ALG}}$ be the distribution over query-answer histories of length $Q$, where in the length $t$ prefix ALG is answered by the process $P_1$ and in the length $Q-t$ suffix ALG is answered by the process $P_2$. Observe that $\mathcal{D}_{1,Q}^\text{ALG} = \mathcal{D}_{1}^\text{ALG}$ and that $\mathcal{D}_{1,0}^\text{ALG} = \mathcal{D}_{2}^\text{ALG}$. Let $\pi = (\pi_1, \pi_2,\ldots,\pi_\ell)$ denote a query-answer history of length $\ell$. By the triangle inequality \ifnum\focs=0 $d(\mathcal{D}_{1}^{\text{ALG}}, \mathcal{D}_{2}^{\text{ALG}}) \leq \sum\limits_{t=0}^{Q-1}d(\mathcal{D}_{1,t+1}^{\text{ALG}}, \mathcal{D}_{1,t}^{\text{ALG}})$ \;. \fi

\ifnum\focs=0
\begin{align*}
d(\mathcal{D}_{1}^{\text{ALG}}, \mathcal{D}_{2}^{\text{ALG}}) \leq \sum\limits_{t=0}^{Q-1}d(\mathcal{D}_{1,t+1}^{\text{ALG}}, \mathcal{D}_{1,t}^{\text{ALG}}) \;.
\end{align*}
\fi

It thus remains to bound
$d(\mathcal{D}_{1,t+1}^{\text{ALG}}, \mathcal{D}_{1,t}^{\text{ALG}})$
\dchange{
$= \frac{1}{2}\sum\limits_{\pi }\Big|\Pr_{\mathcal{D}_{1,t+1}^{\text{ALG}}}[\pi]-\Pr_{\mathcal{D}_{1,t}^{\text{ALG}}}[\pi]\Big| $} for every $t$ such that $0\leq t\leq Q-1$. Let $\mathcal{Q}$ denote the set of all possible queries.

\begin{align*}
\sum\limits_{\pi }\Big|\Pr_{\mathcal{D}_{1,t+1}^{\text{ALG}}}[\pi]-\Pr_{\mathcal{D}_{1,t}^{\text{ALG}}}[\pi]\Big|  &= \makebox[0.42cm]{\hfill} \sum\limits_{\pi_1,\ldots,\pi_{t-1}} \Pr_{\tchange{P_1},\text{ALG}}[\pi_1,\ldots,\pi_{t-1}]\cdot \sum\limits_{q \in \mathcal{Q}} \Pr_\text{ALG}[q\,|\,\pi_1,\ldots,\pi_{t-1} ] \\
&\cdot \sum\limits_{a \in Ans((\pi_1,\ldots,\pi_{t-1}),q)} \Big|\Pr_{\tchange{P_1}}[a\,|\,\pi_1,\ldots,\pi_{t-1},q] - \Pr_{\tchange{P_2}}[a\,|\,\pi_1,\ldots,\pi_{t-1},q]\Big| \\
& \cdot \makebox[0.75cm]{\hfill}  \sum\limits_{\pi_{t+1},\ldots,\pi_{Q}} \Pr_{\tchange{P_2},\text{ALG}}[\pi_{t+1},\ldots,\pi_{Q}\,|\,\pi_1,\ldots,\pi_{t-1},\dchange{(q,a)} ] \;.\\
\end{align*}

By Lemma~\ref{Probability_hitting_red_blue_low}, \dchange{for every $1 \leq t \leq Q-1$,
and every $\pi_1,\dots,\pi_{t-1}$ and $q$},
\[\sum\limits_{a \in Ans((\pi_1,\ldots,\pi_{t-1}),q)} \Big|\Pr_{\tchange{P_1}}[a\,|\,\pi_1,\ldots,\pi_{t-1},q] - \Pr_{\tchange{P_2}}[a\,|\,\pi_1,\ldots,\pi_{t-1},q]\Big| \leq \tchange{\frac{12\tr}{m^{3/2}}} \;.\]
\dchange{We also have that for every pair $(q,a)$,
\[\sum\limits_{\pi_{t+1},\ldots,\pi_{Q}} \Pr_{\tchange{P_2},\text{ALG}}[\pi_{t+1},\ldots,\pi_{Q}\,|\,\pi_1,\ldots,\pi_{t-1},\dchange{(q,a)} ]
  = 1\;.\]}
Therefore,
\begin{align*}
 \sum\limits_{\pi }&\Big|\Pr_{\mathcal{D}_{1,t+1}^{\text{ALG}}}[\pi]-\Pr_{\mathcal{D}_{1,t}^{\text{ALG}}}[\pi]\Big| 
\leq \sum\limits_{\pi_1,\ldots,\pi_{t-1}} \Pr_{\tchange{P_1},{\text{ALG}}}[\pi_1,\ldots,\pi_{t-1}]\sum\limits_{q \in \dchange{\mathcal{Q}}} \Pr_\text{ALG}[q\,|\,\pi_1,\ldots,\pi_{t-1} ] \cdot \tchange{\frac{12\tr}{m^{3/2}}}
=\tchange{\frac{12\tr}{m^{3/2}}}\; .
\end{align*}
Hence, for $Q=\frac{\sqrt m}{100}$,
\begin{align*}
d(\mathcal{D}_{1}^{\text{ALG}}, \mathcal{D}_{2}^{\text{ALG}}) =& \frac{1}{2}\sum\limits_{\pi }\sum\limits_{t=1}^{Q-1}\Big|\Pr_{\mathcal{D}_{1,t+1}^{\text{ALG}}}[\pi]-\Pr_{\mathcal{D}_{1,t}^{\text{ALG}}}[\pi]\Big| \leq \frac{1}{2}\cdot Q \cdot \tchange{\frac{12\tr}{m^{3/2}}} \leq \frac{1}{3},
\end{align*}
and the proof is complete.
\end{proof}
\ifnum\focs=1
\ddchange{The case of $\tr =m$ in Theorem~\ref{m 3/2 over delta} follows from
Lemma~\ref{dist for k=sqrt m}.}
\fi
\ifnum\focs=0
In the next subsection we turn to prove the theorem for the cases where $m <\tr\leq \frac{m^{3/2}}{8}$, and for the case where $\sqrt m \leq \tr \leq \frac{m}{4}$. We start with the former case. The proof will follow the building blocks of the proof for
 $\tr = m$,
where the only difference is in the description of the auxiliary graph $\aux$ and in the proof that $\frac{d_{nw}(\aux)}{d_w(\aux)} \leq \frac{2\tr}{m^{3/2}} = \frac{2r}{\sqrt m}$.

\subsect{A lower bound for $m < \tr <m^{3/2}$ } \label{section r sqrt m}
\tchange{Let $\tr=r \cdot m$ for \ddchange{an integer} $r$ such that $1< r \leq \frac{1}{8}\sqrt m$. It is sufficient for our needs to consider only values of $\tr$ for which $r$ is an integer}. \dchange{The proof of the lower bound for this case is a fairly simple extension of the proof for the case of \tchange{$\tr=m$, that is, $r= 1$}. We next describe the modifications we make in the construction of $\mG_2$. 
}

\dchange{\subsubsect{The lower-bound construction}}
Let $G_1$ be as defined in \dchange{Subsection}~\ref{Construction for k=sqrt m}. The construction of $\mG_2$ for $\tr=r \cdot m$ \dchange{can} be thought of as repeating the construction of $\mG_2$ for $\tr=m$ (as described in \dchange{Subsection}~\ref{Construction for k=sqrt m}) $r$ times. We again start with a complete bipartite graph $K_{\sqrt{m},\sqrt{m}}$ and an independent set of size $n-2\sqrt m$.
\dchange{For each graph $G \in \mG_2$ we select $r$} perfect matchings between the two sides $R$ and $L$ and remove these edges from the graph.
We denote the $r$ perfect matchings \dchange{by} $M^C_1, \dots, M^C_r$ and refer to them as the \textsf{red matchings}. We require that each two perfect matchings $M^C_i$ and $M^C_j$ do not have any shared edges. That is, for every $i$ and for every $j$, for every $(u,v) \in M^C_i$ it holds that $(u,v) \notin M^C_j$.
\dchange{In order to maintain the degrees of the vertices, we next select $r$} perfect matchings for each side of the bipartite graph ($L$ to $L$ and $R$ to $R$). We denote these matchings \dchange{by} $M^R_1, ..., M^R_{r}$ and $M^L_1, ..., M^L_{r}$ respectively. Again we require that no two matchings share an edge. We refer to these matchings as the \textsf{blue matchings} and their edges as \textsf{blue pairs}. Each such choice of \dchange{$3r$} matchings defines a graph in $\mG_2$.

\tchange{Let $G$ be a graph in $\mG_2$. We say that a triangle is blue if all its edges are blue. Otherwise we say the triangle is mixed. Observe that every blue edge in $G$ participates in at least $\sqrt m -2r$ mixed triangles, and at most $r$ blue triangles. Also note that every two mixed triangles are disjoint. Therefore, there are at least $\frac{1}{2}r \sqrt m \cdot(2\sqrt{m} - 2r)=\Omega(r \cdot  m)$ and at most $\frac{1}{2}r\sqrt m \cdot(2\sqrt{m} - 2r)  + r^2 \sqrt m$ triangles in $G$. Since $r < \frac{1}{8}\sqrt m$, we get that every graph in $G$ has $\Theta(r \cdot m)$ triangles.
	}

\dchange{\subsubsect{The processes $P_1$ and $P_2$}
The definition of the processes $P_1$ and $P_2$ is the same as in Subsection~\ref{processes} (using the modified
definition of $\mG_2$), and Lemma~\ref{uniform graph} holds here as well.}

\subsubsect{The auxiliary graph}
As before, for every $t\leq Q$, every query-answer history $\pi$ of length $t-1$ such that $\pi$ is consistent with $G_1$ and every pair $(u,v)$, we define a bipartite auxiliary graph $\aux$, such that on one side there is a node for every witness graph $W \in \mG_2(\pi)$, and on the other side a node for every non-witness graph $\oW \in \mG_2(\pi)$. The witness graphs for this case are graphs in which $(u,v)$ is a red (blue) edge in one of the red (blue) matchings. If $(u,v)$ is a crossing pair, then for every witness graph $W$, $(u,v) \in M^C_i(W)$ for some $1\leq i\leq r$. If $(u,v)$ is a non-crossing pair, then for every witness graph $W$, $(u,v) \in M^L_i(W)$ or $(u,v) \in M^L_i(W)$. There is an edge from $W$ to every graph $\oW$ such that the matching that contains $(u,v)$ in $W$ and the corresponding matching in $\oW$ differ on exactly two pairs -- $(u,v)$ and one additional pair. For example, if $(u,v) \in M^C_i(W)$, there is an edge from $W$ to every graph $\oW$ such that
$M^C_{\dchange{i}}(W)$ and $M^C_{\dchange{i}}(\oW)$ differ on exactly $(u,v)$ and one additional pair.

\begin{lemma} \label{Auxilary-graph-degrees k=r sqrt m}
\tchange{Let $\tr=r \cdot m$ for an integer $r$ such that $1< r\leq \frac{\sqrt m}{8}$ and let $Q = \frac{m^{3/2}}{100\tr} $.}
For every $t\leq Q$, every query-answer history $\pi$ of length $t-1$ such that $\pi$ is consistent with $G_1$ and every pair $(u,v)$,
\[\frac{d_{nw}(\aux)}{d_{w}(\aux)} \leq \tchange{\frac{2\tr}{m^{3/2}}} = \frac{2r}{\sqrt m}.\]
\end{lemma}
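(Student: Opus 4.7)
The argument generalizes the $r=1$ case established in Lemma~\ref{Auxilary-graph-degrees k= sqrt m}. I will treat crossing pairs $(u,v)\in L\times R$; the non-crossing case is identical after replacing the relevant red matching $M^C_i$ with the appropriate blue matching $M^L_i$ or $M^R_i$. Throughout, recall that $Q=m^{3/2}/(100\tr)=\sqrt{m}/(100r)$, so for every vertex $w$ we have $d^{kn}_\pi(w)\le Q\le \sqrt{m}/(100r)$.

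\textbf{Lower bound on $d_w(\aux)$.} Fix any witness graph $W\in\mG_2(\pi)$, and let $M^C_i(W)$ be the unique red matching containing $(u,v)$. For every other pair $(u',v')\in M^C_i(W)$, switching it with $(u,v)$ inside $M^C_i$ yields a candidate non-witness graph $\oW_{(u',v')}$. This candidate lies in $\mG_2(\pi)$ provided (i) the newly created edges $(u,v')$ and $(u',v)$ are consistent with the knowledge graph, i.e.\ $v'\notin\Gamma^{kn}_\pi(u)$ and $u'\notin\Gamma^{kn}_\pi(v)$, and (ii) $(u,v')$ and $(u',v)$ do not already belong to any other red matching $M^C_j$ with $j\ne i$ (so that the matchings remain pairwise edge-disjoint). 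Condition~(i) excludes at most $d^{kn}_\pi(u)+d^{kn}_\pi(v)\le 2\sqrt{m}/(100r)$ pairs. For condition~(ii), note that $u$ appears in at most $r-1$ other red pairs (one per matching $M^C_j$, $j\ne i$), and likewise for $v$, so at most $2(r-1)$ further pairs are ruled out. Using $r\le \sqrt{m}/8$, the number of \emph{consistent} pairs is at least
\[
\sqrt{m}-1-\tfrac{2\sqrt{m}}{100r}-2(r-1)\;\ge\;\sqrt{m}-\tfrac{\sqrt{m}}{50}-\tfrac{\sqrt{m}}{4}\;\ge\;\tfrac{\sqrt{m}}{2}.
\]
Hence $d_w(\aux)\ge \sqrt{m}/2$.

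\textbf{Upper bound on $d_{nw}(\aux)$.} Fix a non-witness graph $\oW\in\mG_2(\pi)$. In each red matching $M^C_i(\oW)$, the vertex $u$ is paired with some $v^{(i)}\in R$ (with $v^{(i)}\ne v$, since $(u,v)$ is not red in $\oW$) and $v$ is paired with some $u^{(i)}\in L$. By the definition of the auxiliary graph, a neighbor $W$ of $\oW$ must differ from $\oW$ in exactly two pairs of a single matching $M^C_i$, one of which is $(u,v)$. The only single switch within $M^C_i(\oW)$ producing $(u,v)$ is the one exchanging $(u,v^{(i)})$ and $(u^{(i)},v)$ for $(u,v)$ and $(u^{(i)},v^{(i)})$. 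Thus each of the $r$ matchings contributes at most one neighbor, giving $d_{nw}(\aux)\le r$.

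\textbf{Combining.} We conclude
\[
\frac{d_{nw}(\aux)}{d_w(\aux)}\;\le\;\frac{r}{\sqrt{m}/2}\;=\;\frac{2r}{\sqrt{m}}\;=\;\frac{2\tr}{m^{3/2}},
\]
as claimed. The main subtlety, absent in the $r=1$ case, is the bookkeeping for disjointness across the $r$ matchings: it tightens the lower bound on $d_w$ (one must also discard switches that collide with other red matchings) and simultaneously inflates the upper bound on $d_{nw}$ by a factor of $r$ (each matching contributes a distinct candidate neighbor). The regime $r\le\sqrt{m}/8$ and $Q\le\sqrt{m}/(100r)$ is exactly what is needed to keep both losses under control. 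The non-crossing case is handled verbatim, using that each of $L$ and $R$ contains $r$ disjoint blue matchings, so the same factor-$r$ accounting applies.
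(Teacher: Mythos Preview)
Your proof is correct and follows essentially the same route as the paper: you lower-bound $d_w(\aux)\ge\sqrt{m}/2$ by counting consistent switches within the matching $M^C_i$ containing $(u,v)$ (discarding at most $2Q$ pairs for knowledge-graph consistency and at most $2(r-1)$ pairs to preserve edge-disjointness across the $r$ red matchings), and you upper-bound $d_{nw}(\aux)\le r$ since each of the $r$ matchings admits exactly one switch that can create the pair $(u,v)$. One terminological slip worth fixing: after the switch, $(u,v')$ and $(u',v)$ become new \emph{red pairs} (i.e., non-edges), not ``newly created edges''---but the constraint you actually write, $v'\notin\Gamma^{kn}_\pi(u)$ and $u'\notin\Gamma^{kn}_\pi(v)$, is precisely the right one for that.
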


\begin{proof}
We again \dchange{analyze the case in which} the pair is a crossing pair $(u,v)$, as the proof for a non-crossing pair is almost identical. We first consider the minimal degree of the witness graphs in $\aux$.
\dchange{Let $M^C_i$ be the matching to which $(u,v)$ belongs.}
As before, only pairs $(u',v') \in M^C_i$ such that $u' \notin \Gamma^{kn}_\pi(u)$, $ v' \notin \Gamma^{kn}_\pi(v)$ result in a non-witness graph $\oW\in \mG_2(\pi)$ when switched with $(u,v)$. However, we have an additional constraint. Since by our construction no two red matchings share an edge, it must be that $u'$ is not matched to $v$ in any of the other $r$ red matching, and similarly that $u$ is not matched to $v'$ in any of the other matchings.
\dchange{It follows that} of the $(\sqrt m -1 -2\cdot \tchange{\frac{m^{3/2}}{100\cdot r \cdot m}})$ potential pairs (as in the proof of Lemma~\ref{Auxilary-graph-degrees k= sqrt m}), we discard $2r$ additional pairs. \tchange{Since $1\leq r \leq \frac{\sqrt m}{8}$} we remain with $(\sqrt m -1 -\frac{\sqrt m}{50} - \frac{1}{4}\sqrt m) \geq \frac{1}{2}\sqrt m$ potential pairs. Thus, $d_{w}(\aux) \geq \frac{1}{2} \sqrt m$.

We now turn to consider the degree of the non-witness graphs and prove that $d_{nw}(\aux) \leq r$. Consider a non-witness graph $\overline{W}$. To prove that $\oW$ has at most $r$ neighbors it is easier to consider all the possible options to ``turn'' $\oW$ from a non-witness graph into a witness graph. It holds that for every $j \in [r]$, $(u,v) \notin M^C_j(\oW)$. Therefore for every matching $M^C_j$, $u$ is matched to some vertex, denoted $v'_j$ and $v$ is matched to some vertex, denoted $u'_j$. If we switch between the pairs $(u,v'_j)$ and $(v,u'_j)$, this results in a matching in which $(u,v)$ is a witness pair. We again refer the reader to Figure~\ref{fig:fix_k_general}, where the illustrated matching can be thought of as the $j\th$ matching. Denote the resulting graph by $W_{(u'_j,v'_j)}$. If the pair $(u'_j,v'_j)$ has not been observed yet by the algorithm then $W_{(u'_j,v'_j)}$ is a witness graph in $\aux$. Therefore there are at most $r$ options to turn $\oW$ into a witness graph, and $d_{nw}(\aux)\leq r$. We showed that $d_{w}(\aux)\geq \frac{1}{2}\sqrt m$ and $d_{nw}(\aux) \leq r$, implying
\[\frac{d_{nw}(\aux)}{d_{w}(\aux)} \leq \frac{2r}{\sqrt m}= \tchange{\frac{2\tr}{m^{3/2}}},\]
as required.
\end{proof}

\subsubsect{Statistical distance}
The proof of the next lemma is exactly the same as the proof of Lemma~\ref{Probability_hitting_red_blue_low}, except that occurrences of the term $(\tr/m^{3/2})$ are \tchange{replaced by $(r/\sqrt m)$ instead of $(1/\sqrt{m})$}, and we apply Lemma~\ref{Auxilary-graph-degrees k=r sqrt m} instead of Lemma~\ref{Auxilary-graph-degrees k= sqrt m}.

\begin{lemma} \label{Probability_hitting_red_blue_low_k=r sqrt m}
\tchange{Let $\tr=r \cdot m$ for an integer $r$ such that $1< r\leq \frac{\sqrt m}{8}$ and let $Q = \frac{m^{3/2}}{100\tr} $}. For every $t\leq Q$, every query-answer history $\pi$ of length $t-1$ such that $\pi$ is consistent with $G_1$ and for every query $q_t$,
\[\sum\limits_{a \in Ans(\pi,q_t)} \Big|\Pr_{\tchange{P_1}}[a \,|\,\pi,q_t] - \Pr_{\tchange{P_2}}[a \,|\,\pi,q_t ]\Big| =  \tchange{\frac{12\tr}{m^{3/2}}} = \frac{12r}{\sqrt m}\;.\]
\end{lemma}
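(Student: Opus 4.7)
The plan is to mirror the proof of Lemma~\ref{Probability_hitting_red_blue_low} essentially verbatim, substituting the bound $d_{nw}(\aux)/d_w(\aux) \leq 2r/\sqrt{m}$ from Lemma~\ref{Auxilary-graph-degrees k=r sqrt m} wherever the earlier proof used the analogous bound $2/\sqrt{m}$ from Lemma~\ref{Auxilary-graph-degrees k= sqrt m}. I would split into the same three cases.

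First, for a crossing pair query $q_t = (u,v)$, the witness graphs are those in $\mathcal{G}_2(\pi \circ (q_t,0))$ and non-witnesses are those in $\mathcal{G}_2(\pi \circ (q_t,1))$, so double-counting edges of $\aux$ gives
\[
\frac{|\mathcal{G}_2(\pi \circ (q_t,0))|}{|\mathcal{G}_2(\pi)|} \leq \frac{|\mathcal{G}_2(\pi \circ (q_t,0))|}{|\mathcal{G}_2(\pi \circ (q_t,1))|} \leq \frac{d_{nw}(\aux)}{d_w(\aux)} \leq \frac{2r}{\sqrt{m}}.
\]
Summing $|\Pr_{P_1}[a\,|\,\pi,q_t] - \Pr_{P_2}[a\,|\,\pi,q_t]|$ over $a \in \{0,1\}$ yields $4r/\sqrt{m}$. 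For a non-crossing pair query, the witness/non-witness roles of the answers flip (witnesses are now those graphs where the pair lies in some $M^L_i$ or $M^R_i$), and the same calculation gives $4r/\sqrt{m}$.

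Second, for a random new-neighbor query $q_t = u$ with $u \in L$, I would split $Ans(\pi,q_t) \setminus \Gamma^{kn}_\pi(u)$ into vertices in $R$ and vertices in $L$. For $v \in R$ with $v \notin \Gamma^{kn}_\pi(u)$, $\Pr_{P_1}[v\,|\,\pi,q_t] = 1/(\sqrt{m} - d^{kn}_\pi(u))$, while
\[
\Pr_{P_2}[v\,|\,\pi,q_t] = \left(1 - \frac{|\mathcal{G}_2(\pi \circ ((u,v),0))|}{|\mathcal{G}_2(\pi)|}\right) \cdot \frac{1}{\sqrt{m} - d^{kn}_\pi(u)},
\]
and applying the crossing-pair bound above makes each per-vertex difference at most $(4r/\sqrt{m})/(\sqrt{m} - d^{kn}_\pi(u))$. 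Since there are at most $\sqrt{m} - d^{kn}_\pi(u)$ such vertices, the contribution from $R$ is at most $4r/\sqrt{m}$. For $v \in L$, consistency of $\pi$ with $G_1$ forces $\Pr_{P_1}[v\,|\,\pi,q_t] = 0$, and invoking the non-crossing-pair bound gives $\Pr_{P_2}[v\,|\,\pi,q_t] \leq (4r/\sqrt{m})/(\sqrt{m} - d^{kn}_\pi(u))$. Because $Q = m^{3/2}/(100t)$, we have $d^{kn}_\pi(u) \leq Q \leq \sqrt{m}/100$ (using $r \geq 1$), so $(\sqrt{m}-1)/(\sqrt{m} - d^{kn}_\pi(u)) \leq 2$, and summing over the at most $\sqrt{m}$ vertices in $L$ yields $8r/\sqrt{m}$. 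Adding the $R$ and $L$ contributions gives $12r/\sqrt{m} = 12t/m^{3/2}$.

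The main obstacle I anticipate is verifying that the auxiliary-graph bound still applies cleanly when we have $r$ distinct red/blue matchings rather than one: the non-witness degree can now be as large as $r$ (one valid switch per matching), but Lemma~\ref{Auxilary-graph-degrees k=r sqrt m} has already absorbed this cost against $d_w \geq \sqrt{m}/2$, leaving the clean bound $2r/\sqrt{m}$. The only subtlety worth double-checking is the validity of the bound $d^{kn}_\pi(u) \leq \sqrt{m}/2$ used in the $L$-vertex estimate, but since $Q = m^{3/2}/(100 r m) = \sqrt{m}/(100 r) \leq \sqrt{m}/100$, this is clearly fine for all $r \geq 1$.
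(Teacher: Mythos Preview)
Your proposal is correct and follows exactly the paper's approach: the paper states that the proof is identical to that of Lemma~\ref{Probability_hitting_red_blue_low}, replacing the occurrences of $1/\sqrt{m}$ by $r/\sqrt{m}$ and invoking Lemma~\ref{Auxilary-graph-degrees k=r sqrt m} in place of Lemma~\ref{Auxilary-graph-degrees k= sqrt m}. Your case split, bounds, and the check that $d^{kn}_\pi(u) \leq Q \leq \sqrt{m}/100$ all match the paper's argument.
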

\dchange{The proof of the next lemma is same as the proof of Lemma~\ref{dist for k=sqrt m} except that
we replace the application of Lemma~\ref{Probability_hitting_red_blue_low}, by an application of
Lemma~\ref{Probability_hitting_red_blue_low_k=r sqrt m}.}
\begin{lemma} \label{dist for k = r sqrt m} \tchange{Let $\tr=r \cdot m$ for an integer $r$ such that $1< r\leq \frac{\sqrt m}{8}$. For every algorithm ALG that performs at most $Q=\frac{m^{3/2}}{100\tr}$ }queries, the statistical distance between $\mathcal{D}_{1}^{\text{ALG}}$ and $\mathcal{D}_{2}^{\text{ALG}}$ is at most $\frac{1}{3}$.
\end{lemma}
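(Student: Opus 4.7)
The plan is to follow exactly the hybrid argument used in the proof of Lemma~\ref{dist for k=sqrt m}, replacing the single-matching per-step distance bound by the new one proved in Lemma~\ref{Probability_hitting_red_blue_low_k=r sqrt m}. First I would introduce the same hybrid distributions $\mathcal{D}_{1,t}^{\text{ALG}}$ for $0\le t\le Q$, where the first $t$ queries are answered by $P_1$ and the remaining $Q-t$ queries by $P_2$, so that $\mathcal{D}_{1,Q}^{\text{ALG}}=\mathcal{D}_1^{\text{ALG}}$ and $\mathcal{D}_{1,0}^{\text{ALG}}=\mathcal{D}_2^{\text{ALG}}$. By the triangle inequality, it suffices to bound $d(\mathcal{D}_{1,t+1}^{\text{ALG}},\mathcal{D}_{1,t}^{\text{ALG}})$ for each $t$ and sum over $t=0,\dots,Q-1$.

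Next I would expand each per-step distance $d(\mathcal{D}_{1,t+1}^{\text{ALG}},\mathcal{D}_{1,t}^{\text{ALG}})$ in the same way as in Lemma~\ref{dist for k=sqrt m}: condition on the length $t-1$ prefix $(\pi_1,\dots,\pi_{t-1})$ drawn under the $P_1$/ALG interaction, sum over the next query $q$ (distributed according to ALG given the history), and then the only difference between the two hybrids is the single answer $a\in Ans((\pi_1,\dots,\pi_{t-1}),q)$ that is drawn from $P_1$ versus $P_2$; the remaining $Q-t$ answers are drawn from $P_2$ in both cases and therefore contribute a factor of $1$ after summing over $(\pi_{t+1},\dots,\pi_Q)$.

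At this point I apply Lemma~\ref{Probability_hitting_red_blue_low_k=r sqrt m} in place of Lemma~\ref{Probability_hitting_red_blue_low}, obtaining
\[
\sum_{a\in Ans((\pi_1,\dots,\pi_{t-1}),q)} \Big|\Pr_{P_1}[a\mid \pi_1,\dots,\pi_{t-1},q]-\Pr_{P_2}[a\mid \pi_1,\dots,\pi_{t-1},q]\Big| \le \frac{12\tr}{m^{3/2}}=\frac{12r}{\sqrt m}.
\]
Since this bound is independent of the prefix $(\pi_1,\dots,\pi_{t-1})$ and of the query $q$, summing over $q$ (weighted by ALG's distribution) and over the prefix yields $d(\mathcal{D}_{1,t+1}^{\text{ALG}},\mathcal{D}_{1,t}^{\text{ALG}})\le \tfrac{1}{2}\cdot\tfrac{12\tr}{m^{3/2}}$ (the $\tfrac{1}{2}$ comes from the statistical-distance normalization), exactly as before.

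Finally, summing this per-step bound over the $Q=\frac{m^{3/2}}{100\tr}$ steps gives
\[
d(\mathcal{D}_1^{\text{ALG}},\mathcal{D}_2^{\text{ALG}})\le Q\cdot \frac{1}{2}\cdot\frac{12\tr}{m^{3/2}}=\frac{6}{100}<\frac{1}{3}.
\]
There is essentially no obstacle here beyond verifying that Lemma~\ref{Probability_hitting_red_blue_low_k=r sqrt m} does go through for the new construction (which is already established), and checking the requirement $r>1$, $r\le \sqrt m/8$ is consistent with the conditioning used in the earlier lemma so that knowledge-graph degrees remain below $\sqrt m/2$ throughout the hybrid; both are ensured by the choice $Q=m^{3/2}/(100t)$ and are inherited from the proof for $t=m$.
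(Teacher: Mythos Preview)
Your proposal is correct and follows exactly the paper's approach: the paper explicitly states that the proof of this lemma is the same as that of Lemma~\ref{dist for k=sqrt m}, with Lemma~\ref{Probability_hitting_red_blue_low_k=r sqrt m} substituted for Lemma~\ref{Probability_hitting_red_blue_low}. Your hybrid argument, per-step bound, and final summation all match the paper's treatment.
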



\subsect{A lower bound for $\sqrt m \leq \tr \leq \frac{1}{4}m$} \label{k leq sqrt m}
\tchange{Similarly to the previous section, we let $\tr=k\sqrt m$ and assume that $k$ is an integer such that $1 \leq k\leq \frac{\sqrt m}{4}$.}
\subsubsect{The lower-bound construction}
The construction of the graph $G_1$ is as defined in \dchange{Subsection}~\ref{Construction for k=sqrt m}, and we modify the construction of the graphs in $\mG_2$. 
As before, the basic structure of every graph is a complete bipartite graph $K_{\sqrt m,\sqrt m}$ and an independent set of size $n-2\sqrt m$ vertices. In this case,
 \dchange{for each graph in $\mG_2$,} we do not remove a perfect matching from the bipartite graph, but rather a matching $M^C$ of size $k$. In order to keep the degrees of all vertices to be $\sqrt m$, we modify the way we construct the blue matchings. Let $M^C=\{(\ell_{i_1},r_{i_1}), (\ell_{i_2},r_{i_2}), \ldots, (\ell_{i_k},r_{i_k})\}$ be the crossing matching. The blue matchings will be $M^L=\{(\ell_{i_1},\ell_{i_2}), (\ell_{i_3},\ell_{i_4}), \ldots, (\ell_{i_k-1},\ell_{i_k})\}$ and $M^R=\{(r_{i_1},r_{i_2}), (r_{i_3},r_{i_4}), \ldots, (r_{i_k-1},r_{i_k})\}$. Note that every matched pair belongs to a \textsf{four-tuple} $\langle\ell_{i_j}, \ell_{i_{j+1}}, r_{i_{j+1}},r_{i_{j}}\rangle$ such that $(\ell_{i_j},r_{i_{j}})$ and $(\ell_{i_{j+1}},r_{i_{j+1}})$ are red pairs and $(\ell_{i_j},\ell_{i_{j+1}})$ and $(r_{i_{j}},r_{i_{j+1}})$ are blue pairs. We refer to these structures as matched \textsf{squares} and to four-tuples $(\ell_x,\ell_y,r_z,r_w)$ such that no pair in the tuple is matched as \textsf{unmatched squares}. See Figure~\ref{fig:bipartite k leq sqrt m} for an illustration. Every graph in 
 $\mG_2$ is defined by its set of $k$ four-tuples.

 \tchange{Similarly to previous constructions, in every graph $G\in \mG_2$, every blue edge participates in $\sqrt m -2$ triangles. Since every triangle in the $G$ contains exactly one blue edge, we have that $G$ has $k\cdot (\sqrt m-2)=\Theta(k\sqrt m)$ triangles.}

\begin{figure}[!ht]
	\centering
\ifnum\nofigures=0
	\begin{tikzpicture}[thick, x=2cm, scale=0.7]
	\DrawKleqSqrtM
	\end{tikzpicture}
\fi
	\caption{An illustration of the bipartite component in the family $\mathcal{G}_2$ for $\sqrt m \leq \tr \leq \frac{1}{4} m$.}
	\label{fig:bipartite k leq sqrt m}
\end{figure}

\subsubsect{The processes $P_1$ and $P_2$} \label{processes all neighbors query}
We introduce a small modification to the definition of the processes $P_1$ and $P_2$. Namely, we leave the answering process for pair queries as described in Subsection~\ref{processes} and modify the answering process for random new-neighbor queries as follows. Let $t\leq Q$, and $\pi$ be a query-answer history of length $t-1$ such that $\pi$ is consistent with $G_1$. \dchange{If the $t\th$ query is a new-neighbor query $q_t = u$ and $d^{kn}_\pi(u)< \frac{1}{2}\sqrt m$, then
the processes $P_1$ and $P_2$ answer as described in Subsection~\ref{processes}. However,
if the $t\th$ query is a new-neighbor query $q_t=u$ such that $d^{kn}_\pi(u)\geq \frac{1}{2}\sqrt m$, then the processes answers as follows.}
\begin{itemize}
\item The process $P_1$ answers with the set of all neighbors of $u$ in $G_1$. That is, if $u$ is in $L$, then the process replies with $a=R=\{r_1,\ldots,r_{\sqrt m}\}$, and if $u$ is in $R$, then the process replies with $a=L=\{\ell_1,\ldots,\ell_{\sqrt m}\}$.

\dchange{The process $P_2$ answers with $a = \{v_1,\dots,v_{\sqrt{m}}\}$, where
$\{v_1,\dots,v_{\sqrt{m}}\}$ is the set of neighbors of $u$ in a subset of the graphs in $\mG_2$.
By the definition of $\mG_2$, if $u$ is in $L$, then this set is either $R$, or
it is $R\setminus \{r_i\}\cup \{\ell_j\}$ for some $r_i \in R$ and $\ell_j \in L$,
and if $u$ is in $R$, then this set is either $L$, or
it is $L\setminus \{\ell_i\}\cup \{r_j\}$ for some $\ell_i \in L$ and $r_j \in R$.
 For every such set $a\in Ans(\pi,q_t)$, the process returns $a$ as an answer with probability
\[\frac{|\mathcal{G}_2(\pi\circ(q_t,a))|}{|\mathcal{G}_2(\pi)|} \;.\]
}
We call this query an \textsf{all-neighbors query}.
\end{itemize}
First note that the above modification makes the algorithm ``more powerful''. That is, every algorithm that is not allowed all-neighbors query can be emulated by an algorithm that is allowed this type of query. Therefore this only strengthen our lower bound results. \par

Also note that this modification does not affect the correctness of Lemma~\ref{uniform graph}. We can redefine the function $\alpha_t(\pi)$ to be
\[\alpha_t(\pi)=
\begin{cases}
1 &\mbox{if $q_t(\pi)$ is a pair query}\\
1/\left(\sqrt m-d^{kn}_{\pi^{ \leq t-1}}(u)\right) &\mbox{if $q_t(\pi)=u$ is a random new-neighbor query}\\
1 &\mbox{if $q_t(\pi)$ is an all-neighbors query}
\end{cases}
\;,\]
and the rest of the proof follows as before. \par


\subsubsect{The auxiliary graph}
For every $t\leq Q$, every query-answer history $\pi$ of length $t-1$ such that $\pi$ is consistent with $G_1$ and every pair $(u,v)$, the witness graphs in $\aux$ are graphs in which $(u,v)$ is either a red pair or a blue pair. There is an edge between a witness graph $W$ and a non-witness graph $\oW$ if the two graphs have the same set of four-tuples except for two matched squares -- one that contains the pair $(u,v)$, $\langle u,v,u',v'\rangle$ and another one. \par

\begin{definition} \label{switch} We define a \textsf{switch between a matched square and an unmatched square} in the following manner. Let $\langle u,v,u',v'\rangle$ be a matched square and $\langle x,y,x',y' \rangle$ be an un matched squares. Informally, a \textsf{switch} between the squares is ``unmatching'' the matched square and instead ``matching'' the unmatched square. \par

Formally, a switch consists of two steps. The first step is removing the edges $(u,v)$ and $(u',v')$ from the red matching $M^C$ and the edges $(u,u')$ and $(v,v')$ from the blue matchings $M^L$ and $M^R$ respectively. The second step is adding the edges $(x,y)$ and $(x',y')$ from the red matching $M^C$ and the edges $(x,x')$ and $(y,y')$ from the blue matchings $M^L$ and $M^R$ respectively. See Figure~\ref{fig:fix_small_k} for an illustration.
\end{definition}

\begin{figure}[h]
	\centering
\ifnum\nofigures=0
	\begin{tikzpicture}[thick, x=2cm, scale=0.7]
	\DrawSmallkFixArg
	\end{tikzpicture}
\fi
	\caption{An illustration of a switch between the squares $\sq$ and $\altSq$.}
	\label{fig:fix_small_k}
\end{figure}

\begin{lemma} \label{Auxilary-graph-degrees k leq sqrt m}
\tchange{Let $\tr=k \cdot \sqrt m$ for an integer $k$ such that $1< k\leq \frac{\sqrt m}{4}$ and let $Q = \frac{m^{3/2}}{600\tr} $.} For every $t\leq Q$, every query-answer history $\pi$ of length $t-1$ such that $\pi$ is consistent with $G_1$ and every pair $(u,v)$,
\[\frac{d_{nw}(\aux)}{d_{w}(\aux)} = \frac{16k}{m} = \tchange{\frac{16\tr}{m^{3/2}}}.\]
\end{lemma}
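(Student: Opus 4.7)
The plan is to follow the same template as the proofs of Lemmas~\ref{Auxilary-graph-degrees k= sqrt m} and \ref{Auxilary-graph-degrees k=r sqrt m}: lower-bound $d_{w}(\aux)$, upper-bound $d_{nw}(\aux)$, and divide. I will treat the crossing case $u\in L,\ v\in R$ in detail; the non-crossing case is handled identically by interchanging the roles of the red matching and the blue matchings.

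For the lower bound on $d_{w}(\aux)$: fix a witness $W$ and let $\sq_W=\langle u,u',v',v\rangle$ be the unique matched square of $W$ in which $(u,v)$ is a red pair. A neighbor of $W$ in $\aux$ is obtained by switching $\sq_W$ with an unmatched square $\sq_2$ of $W$, i.e.\ picking two unmatched $L$-vertices, two unmatched $R$-vertices, and one of the two admissible red matchings on these four vertices, giving a total of $2\binom{\sqrt m-k}{2}^{2}=(\sqrt m-k)^{2}(\sqrt m-k-1)^{2}/2$ candidates. Because $\pi$ is consistent with $G_1$, none of the four pairs associated with $\sq_W$ (namely $(u,v),(u',v'),(u,u'),(v,v')$) has been observed, so the only constraint is that the four pairs associated with $\sq_2$ (namely $(x,y),(x',y'),(x,x'),(y,y')$) be unobserved. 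Each of the $\le Q=m^{3/2}/(600\,\tr)=m/(600k)$ observed pairs blocks at most $(\sqrt m-k)(\sqrt m-k-1)$ unmatched squares (an observed non-crossing non-edge can only play the blue-pair role, while an observed crossing edge can only play the red-pair role). Using $k\le\sqrt m/4$, this yields
\[
d_{w}(\aux)\;\ge\;\frac{(\sqrt m-k)^{2}(\sqrt m-k-1)^{2}}{2}\;-\;Q\cdot(\sqrt m-k)(\sqrt m-k-1)\;\ge\;\frac{m^{2}}{16}
\]
with room to spare.

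For the upper bound on $d_{nw}(\aux)$: fix a non-witness $\oW$. A neighbor in $\aux$ is obtained only by a switch that creates a new matched square containing $(u,v)$ as a red pair. The switch definition requires all four vertices of the new square to be unmatched in $\oW$, so both $u$ and $v$ must be unmatched in $\oW$ (otherwise the contribution is zero). When they are, the new square has the form $\{u,u',v,v'\}$ with the red matching forced to $\{(u,v),(u',v')\}$, so we pick $u'\ne u$ among the $\sqrt m-k-1$ remaining unmatched $L$-vertices, an analogous $v'$, and one of the $k/2$ matched squares of $\oW$ to unmatch, whence
\[
d_{nw}(\aux)\;\le\;\tfrac{k}{2}(\sqrt m-k-1)^{2}\;\le\;\tfrac{km}{2}.
\]

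Dividing gives $d_{nw}(\aux)/d_{w}(\aux)\le (km/2)/(m^{2}/16)=8k/m\le 16k/m=16\tr/m^{3/2}$, which is the stated bound. The main technical obstacle is the bookkeeping of the blocking effect of $\pi$ in the lower bound on $d_{w}(\aux)$: one must verify that under the specific constant $600$ in $Q$ together with the constraint $k\le\sqrt m/4$, the subtracted term is strictly dominated by $(\sqrt m-k)^{4}/2$, leaving enough slack to conclude. I note that the statement of the lemma is written with ``$=$,'' but, as in the analogous Lemmas \ref{Auxilary-graph-degrees k= sqrt m} and \ref{Auxilary-graph-degrees k=r sqrt m}, the switching argument naturally yields a one-sided bound; the upper bound $16k/m$ is all that is used in the subsequent statistical-distance analysis.
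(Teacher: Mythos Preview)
Your approach is essentially the paper's: lower-bound $d_w(\aux)$ by counting unmatched squares available for a switch and subtracting those blocked by observed pairs, and upper-bound $d_{nw}(\aux)$ by counting the ways to promote $(u,v)$ into a matched square. Your constants are in fact slightly sharper than the paper's (you correctly note $k/2$ matched squares rather than $k$, and you keep the factor of~$2$ for the two red pairings on four chosen vertices), and your observation that the four pairs of $\sq_W$ are automatically unobserved because $\pi$ is $G_1$-consistent is exactly the right reason.

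There is one small gap. In this range the processes $P_1,P_2$ are modified so that a neighbor query on a vertex $u$ with $d^{kn}_\pi(u)\ge\sqrt m/2$ becomes an \emph{all-neighbors} query, returning the full neighbor set of $u$. Hence a single step of $\pi$ can reveal on the order of $\sqrt m$ pairs, not one, so your premise ``$\le Q$ observed pairs'' is not literally true. The paper handles this by arguing that even with all-neighbors answers, at most $O(Q)$ pairs are observed in total (since each all-neighbors query must be preceded by $\sqrt m/2$ ordinary edges incident to that vertex, there are at most $O(Q/\sqrt m)$ of them, each adding $O(\sqrt m)$ pairs). Once you replace ``$\le Q$'' by ``$\le cQ$'' for an absolute constant $c$, your subtraction $cQ\cdot(\sqrt m-k)(\sqrt m-k-1)\le cQm$ is still dominated by the main term $(\sqrt m-k)^2(\sqrt m-k-1)^2/2\ge m^2/8$ (using $k\le\sqrt m/4$ and $Q=m/(600k)$), and the rest of your argument stands unchanged.
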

\begin{proof}
We start with proving that $d_{w}(\aux) \geq \frac{1}{2}m$. A witness graph in $\aux$ with respect to a pair $(u,v)$ is a graph in which $(u,v)$ is part of a matched square $\sq$. Potentially, $\sq$ could be switched with every unmatched square to get a non-witness pair. There are $\sqrt m -k$ unmatched vertices on each side, \dchange{so that there are} $\binom{\sqrt m-k}{2}\cdot \binom{\sqrt m-k}{2}\geq \frac{1}{8}m^2$ potential squares. To get a graph that is in $\mG_2(\pi)$, the unmatched square $\altSq$ must be such that none of the induced pairs between the vertices $x,x',y,y'$ have been observed yet by the algorithm. When all-neighbor queries are allowed, if at most $Q$ queries has been performed, then at most $4Q$ pairs have been observed by the algorithm. Therefore, for at most $4\frac{m}{100k} \leq \frac{1}{4} m$ of the potential squares, an induced pair was queried. Hence, every witness square can be switched with at least $\frac{1}{8}m^2 - \frac{1}{4}m \geq \frac{1}{16}m^2$ consistent unmatched squares, implying that $d_{w}(\aux)\geq \frac{1}{16}m^2$. \par
To complete the proof \dchange{it remains} to show that $d_{nw}(\aux)\leq mk$. \dchange{To this end} we would like to analyze the number of witness graphs that every non-witness $\oW$ can be ``turned'' into. In every non-witness graph $\oW$ the pair $(u,v)$ is unmatched, and in order to turn $\oW$ into a witness graph, one of the $k$ matched squares should be removed and the pair $(u,v)$ with an additional pair $(u',v')$ should be ``matched''. There are $k$ options to remove an existing square, and at most $m$ options to choose a pair $u',v'$ to match $(u,v)$ with. Therefore, the number of potential neighbors of $\oW$ is at most $mk$. It follows that
\[\frac{d_{nw}(\aux)}{d_{w}(\aux)} = \frac{16mk}{m^2} = \frac{16k}{m} = \tchange{\frac{16\tr}{m^{3/2}}},\]
and the proof is complete.
\end{proof}

\dchange{\subsubsect{Statistical distance}}
For an all-neighbors query $q=u$ we say that the corresponding answer is a \textsf{witness answer} if $u\in L$ and $a \neq R$, or symmetrically if $u\in R$ and $a \neq L$. Let $E^{Q}$ be the set of all query-answer histories $\pi$ of length $Q$ such that there exists a query-answer pair $(q,a)$ in $\pi$ in which $q$ is an all-neighbors pair and $a$ is a witness answer with respect to that query, and let $\overline{E}^{Q} = \Pi^Q\setminus E^Q$. That is, $\overline{E}^{Q}$ is the set of all query-answer histories of length $Q$ such that no all-neighbors query is answered with a witness answer. \tchange{Let $\tP_1$ and $\tP_2$ by the induced distributions of the processes $P_1$ and $P_2$ conditioned on the event that the process do not reply with a witness answer.}
Observe that for every query-answer history $\pi$ of length $t-1$, for every query $q_t$ that is either a pair query or a random new-neighbor query and for every $a\in Ans(\pi,q_t)$,
\[\Pr_{\tP_b}[a \,|\,\pi,q_t] = \Pr_{\tchange{P_b}}[a \,|\,\pi,q_t] .\]
for $b \in \{1,2\}$. Therefore, \dchange{the proof of the next lemma is exactly the
same as the proof of Lemma~\ref{Probability_hitting_red_blue_low}, except that occurrences of the term \tchange{$(\tr/m^{3/2})$ are replaced by $(k/m)$ instead of $(1/\sqrt{m})$} and we apply Lemma~\ref{Auxilary-graph-degrees k leq sqrt m} instead of Lemma~\ref{Auxilary-graph-degrees k= sqrt m}.
}

\medskip
\begin{lemma} \label{Probability_hitting_red_blue_low_k leq sqrt m}
\tchange{Let $\tr=k \cdot \sqrt m$ for an integer $k$ such that $1< k\leq \frac{\sqrt m}{4}$ and let $Q = \frac{m^{3/2}}{600\tr} $.} For every $t\leq Q$, every query-answer history $\pi$ of length $t-1$ such that $\pi$ is consistent with $G_1$ and for every pair or random new-neighbors query $q_t$,
\[\sum\limits_{a \in Ans(\pi,q_t)} \Big|\Pr_{\tP_1}[a \,|\,\pi,q_t] - \Pr_{\tP_2}[a \,|\,\pi,q_t ]\Big| = \frac{96k}{m} = \tchange{\frac{96\tr}{m^{3/2}}}\;.\]
\end{lemma}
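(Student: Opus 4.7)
The plan is to follow the structure of the proof of Lemma~\ref{Probability_hitting_red_blue_low} essentially verbatim, replacing the auxiliary-graph bound $d_{nw}/d_w \leq 2/\sqrt{m}$ from Lemma~\ref{Auxilary-graph-degrees k= sqrt m} with the bound $d_{nw}/d_w \leq 16k/m$ from Lemma~\ref{Auxilary-graph-degrees k leq sqrt m}, and throughout replacing occurrences of $(\tr/m^{3/2})$ that came from $1/\sqrt{m}$ with the corresponding $k/m$. Because we now condition on no witness all-neighbors answer being returned, we work with $\tP_1, \tP_2$ instead of $P_1, P_2$, but on pair and random new-neighbor queries these distributions coincide with $P_1, P_2$, so the analysis is unchanged.

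First I would split by query type. For a pair query $q_t=(u_t,v_t)$ on a crossing pair, the witness outcome is $a_t=0$, and the counting argument of the auxiliary graph $\aux$ gives
\[
|\mG_2(\pi \circ (q_t,0))| \cdot d_w(\aux) \leq |\mG_2(\pi \circ (q_t,1))| \cdot d_{nw}(\aux),
\]
so by Lemma~\ref{Auxilary-graph-degrees k leq sqrt m}, $\Pr_{\tP_2}[0\mid\pi,q_t] \leq 16k/m$. Since $\Pr_{\tP_1}[0\mid\pi,q_t] = 0$, the sum of absolute differences over $\{0,1\}$ is $2\cdot(16k/m) = 32k/m$. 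The same bound holds for non-crossing pair queries by symmetry (the witness outcome is $a_t=1$, and $\Pr_{\tP_1}[1\mid\pi,q_t]=0$).

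For a random new-neighbor query $q_t = u_t$ (with $d^{kn}_\pi(u_t) < \tfrac{1}{2}\sqrt{m}$, else it becomes an all-neighbors query handled via conditioning), I would repeat the analysis of the third bullet in the proof of Lemma~\ref{Probability_hitting_red_blue_low}. Assume $u_t \in L$; for $v \in R$ with $v \notin \Gamma^{kn}_\pi(u_t)$,
\[
\Pr_{\tP_1}[v\mid\pi,q_t] = \frac{1}{\sqrt{m}-d^{kn}_\pi(u_t)},
\]
whereas
\[
\Pr_{\tP_2}[v\mid\pi,q_t] = \Bigl(1 - \frac{|\mG_2(\pi \circ ((u_t,v),0))|}{|\mG_2(\pi)|}\Bigr)\cdot\frac{1}{\sqrt{m}-d^{kn}_\pi(u_t)},
\]
and by the pair-query bound above the discrepancy per such $v$ is $(32k/m)/(\sqrt{m}-d^{kn}_\pi(u_t))$, contributing at most $32k/m$ in total over $v \in R$. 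For $v \in L$ we have $\Pr_{\tP_1}[v\mid\pi,q_t]=0$ and $\Pr_{\tP_2}[v\mid\pi,q_t]$ is at most $(32k/m)/(\sqrt{m}-d^{kn}_\pi(u_t))$ by applying Lemma~\ref{Auxilary-graph-degrees k leq sqrt m} to the non-crossing pair $(u_t,v)$; since $d^{kn}_\pi(u_t) < \tfrac{1}{2}\sqrt{m}$, the prefactor $(\sqrt{m}-1)/(\sqrt{m}-d^{kn}_\pi(u_t)) \leq 2$, giving a total of at most $64k/m$ from $v \in L$. Summing yields $96k/m = 96\tr/m^{3/2}$, as required.

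The one step I expect to be slightly more subtle is verifying that the all-neighbors query does not interfere: since we only need to bound the contribution of pair and random new-neighbor queries, and the conditioned distributions $\tP_1,\tP_2$ agree with $P_1,P_2$ on these queries by the observation preceding the lemma, all the counting identities for $|\mG_2(\pi \circ \cdot)|$ carry through unchanged; the only place this is used is to ensure $d^{kn}_\pi(u_t) < \tfrac{1}{2}\sqrt{m}$ whenever a random new-neighbor query is actually answered (otherwise the process switches to an all-neighbors reply), which is exactly the threshold built into Subsection~\ref{processes all neighbors query}.
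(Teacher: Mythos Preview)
Your proposal is correct and follows the paper's approach essentially verbatim: the paper simply states that the proof is identical to that of Lemma~\ref{Probability_hitting_red_blue_low}, with the auxiliary-graph bound from Lemma~\ref{Auxilary-graph-degrees k leq sqrt m} (giving $16k/m$ in place of $2/\sqrt m$) substituted in, and relying on the observation that $\tP_b$ and $P_b$ coincide on pair and random new-neighbor queries. You have spelled out exactly those substitutions, including the case split and the constant bookkeeping that leads to $96k/m$.
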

Note that Lemma~\ref{Probability_hitting_red_blue_low_k leq sqrt m} does not cover
\dchange{all-neighbors queries, and hence we establish the next lemma.}

\begin{lemma} \label{bound dist all-neighbors query}
\tchange{Let $\tr=k \cdot \sqrt m$ for an integer $k$ such that $1< k\leq \frac{\sqrt m}{4}$ and let $Q = \frac{m^{3/2}}{600\tr} $.} For every $t\leq Q$, every query-answer history $\pi$ of length $t-1$ such that $\pi$ is consistent with $G_1$ and for every all-neighbors query $q_t$,
\[\Pr_{\tchange{P_2}}[a_t \mbox{ is a witness answer }\,|\,\pi,q_t] \leq \frac{16k}{\tchange{\sqrt m}} \;.\]
\end{lemma}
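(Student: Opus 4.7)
My plan is to adapt the auxiliary-graph switching argument from Lemma~\ref{Auxilary-graph-degrees k leq sqrt m}, now applied to the quantity ``fraction of graphs in $\mG_2(\pi)$ in which $u$ participates in a matched square.'' Since under $P_2$ the all-neighbors query is answered by picking $G \in \mG_2(\pi)$ uniformly at random and returning $\Gamma_G(u)$, and the answer is a witness answer precisely when $u$ is matched in $G$, this fraction equals $\Pr_{P_2}[a_t \text{ is a witness answer} \mid \pi, q_t]$. I would first set up a bipartite auxiliary graph $\mA$ whose witness side consists of graphs $W \in \mG_2(\pi)$ in which $u$ is matched, and whose non-witness side consists of graphs in which $u$ is unmatched. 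An edge of $\mA$ connects $W$ and $\overline{G}$ whenever $\overline{G}$ is obtained from $W$ by a single switch (as in Definition~\ref{switch}) that removes the unique matched square of $W$ containing $u$ and inserts an unmatched square disjoint from $u$.

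For the minimum degree on the witness side I would re-use the counting in the proof of Lemma~\ref{Auxilary-graph-degrees k leq sqrt m} essentially verbatim: each witness $W$ has a unique matched square containing $u$, and it can be switched with any potential unmatched four-tuple whose four vertices are unmatched in $W$ and whose four ``flipped'' induced pairs have not yet been observed. The number of potential unmatched squares is $\Omega(m^2)$, and the bound of $4Q$ on the total number of observed pairs (still valid here, since each all-neighbors query is preceded by at least $\sqrt m / 2$ ordinary neighbor queries, so all-neighbors queries do not blow up the count of observed pairs by more than a constant factor) rules out only a small constant fraction, yielding $d_w(\mA) \geq m^2/16$.

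The main point of departure from Lemma~\ref{Auxilary-graph-degrees k leq sqrt m}, and the step I expect to be the main obstacle, is the upper bound on the maximum degree on the non-witness side. Given a non-witness $\overline{G}$, producing a witness neighbor requires choosing one of the $k/2$ matched squares of $\overline{G}$ to remove and then specifying a new matched square containing $u$. In Lemma~\ref{Auxilary-graph-degrees k leq sqrt m} both endpoints of the candidate pair are fixed, so only $O(m)$ extensions exist; here only $u$ is fixed, so the new square is determined by the blue partner $u' \in L\setminus\{u\}$ of $u$, two $R$-vertices $v,v'$, and a red-matching configuration, giving at most $(\sqrt m - k - 1)(\sqrt m - k)(\sqrt m - k - 1) \leq m^{3/2}$ possibilities. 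This yields $d_{nw}(\mA) \leq (k/2)\cdot m^{3/2}$.

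Combining the two degree bounds by the standard double-count $|W|\cdot d_w(\mA) \leq |NW|\cdot d_{nw}(\mA)$ gives
\[
\Pr_{P_2}\!\left[a_t \text{ is a witness answer} \mid \pi, q_t\right] \;=\; \frac{|W|}{|\mG_2(\pi)|} \;\leq\; \frac{|W|}{|NW|} \;\leq\; \frac{d_{nw}(\mA)}{d_w(\mA)} \;\leq\; \frac{16k}{\sqrt m},
\]
which is the desired inequality. The extra factor $\sqrt m$ gained by fixing only one endpoint rather than an entire pair is precisely what distinguishes this bound $\tfrac{16k}{\sqrt m}$ from the bound $\tfrac{16k}{m}$ obtained for pair queries, and the routine part of the proof will be verifying that the constants indeed work out to the stated $16$.
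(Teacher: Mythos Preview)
Your approach is correct and uses the same switching machinery as the paper, but the paper takes a shorter route. Rather than building a new auxiliary graph for the event ``$u$ is matched,'' the paper observes that this event is exactly the union, over $u'\in L$, of the events ``$(u,u')$ is a blue pair.'' Each of those events already has probability at most $16k/m$ by Lemma~\ref{Auxilary-graph-degrees k leq sqrt m} (applied to the non-crossing pair $(u,u')$), so a union bound over the at most $\sqrt m$ choices of $u'$ immediately gives $\sqrt m\cdot \tfrac{16k}{m}=\tfrac{16k}{\sqrt m}$. This avoids recomputing $d_w$ and $d_{nw}$ from scratch.

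Your direct argument also works: with only $u$ fixed, the non-witness degree bound becomes $d_{nw}(\mathcal A)\le (k/2)\cdot m^{3/2}$ (one of the $k/2$ squares to remove, then $\le m^{3/2}$ ways to pick $u',v,v'$), while $d_w(\mathcal A)\ge m^2/16$ carries over verbatim from Lemma~\ref{Auxilary-graph-degrees k leq sqrt m}. This gives the ratio $8k/\sqrt m\le 16k/\sqrt m$, so the constants do work out. One small point worth tightening: your parenthetical about ``at most $4Q$ observed pairs'' is correct but the justification you give is incomplete. The clean argument is that each all-neighbors query requires $\ge\sqrt m/2$ prior incidences at the queried vertex, and since each query (of any type) contributes at most one incidence to any \emph{fixed} vertex, the number $A$ of all-neighbors queries satisfies $A=O(Q/\sqrt m)$; hence the total number of observed pairs is $N+A\cdot\sqrt m/2=O(Q)$. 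This is exactly what the paper asserts (also without proof) in Lemma~\ref{Auxilary-graph-degrees k leq sqrt m}.
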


\begin{proof}
Assume without loss of generality that $u \in L$. \dchange{By the definition of the process $P_2$, it answers the query consistently with a uniformly selected random graph $G_2 \in \mG_2(\pi)$
 by returning the complete set of $u$'s neighbors in $G_2$.} In $\mG_2$, there are two types of graphs. First, there are graphs in which $u$ is not matched, that is $(u,u') \notin M^L$ for every vertex $u' \in L$. In these graphs the set of $u$'s neighbors is 
 $\dchange{R=} \{r_1,\ldots,r_{\sqrt m}\}$. We refer to these graphs as {\em non-witness graphs\/}. The second type of graphs are those in which $(u,u') \in M^L$ for some $u'\in L$ and $(u,v)\in M^C$ for some $v\in R$. In these graphs the set of $u$'s neighbors is 
 \dchange{$(R\setminus\{v\})\cup \{u'\}$}.
 We refer to these graphs as {\em witness graphs\/}. As before, let $Ans(\pi, q_t)$ be the set of all possible answers for an all-neighbors query $q_t$. It holds that
\begin{align*}
\Pr_{\tchange{P_2}}[a_t \mbox{ is a witness answer }\,|\,\pi,q_t] &= \sum\limits_{\substack{a \in Ans(\pi, q_t) \\ a\neq R}} \Pr_{\tchange{P_2}}[a \;\,|\,\pi,q_t ] \\
 &= \sum\limits_{\substack{u'\in L, v\in R }} \frac{\left|\mG_2\left(\pi\circ\left((u,u'),1\right) \circ\left((u,v),0\right) \right)\right|}{|\mG_2(\pi)|} \\
 &= \sum\limits_{\substack{u'\in L }} \frac{\left|\mG_2\left(\pi\circ\left((u,u'),1\right) \right)\right|}{|\mG_2(\pi)|} \cdot \sum\limits_{\substack{v\in R }}\frac{\left|\mG_2\left(\pi\circ\left((u,u'),1\right) \circ\left((u,v),0\right) \right)\right|}{|\mG_2(\pi)|} \\
&= \sum\limits_{\substack{u' \in L}} \frac{\left|\mG_2\left(\pi\circ\left((u,u'),1\right)
 \right)\right|}{|\mG_2(\pi)|} \;.
\end{align*}
Similarly to the proof of Lemma~\ref{Probability_hitting_red_blue_low}, for every $u$ and $u'$ in $L$, $\frac{\left|\mG_2\left(\pi\circ\left((u,u'),1\right)\right)\right|}{|\mG_2(\pi)|} \leq \frac{16k}{m}$. Therefore,
\begin{align*}
\Pr_{\tchange{P_2}}[a_t \mbox{ is a witness answer }\,|\,\pi,q_t]=
\sum\limits_{\substack{u' \in L}} \frac{\left|\mG_2\left(\pi\circ\left((u,u'),1\right)
 \right)\right|}{|\mG_2(\pi)|} \leq \sqrt m \cdot \frac{16k}{m} = \frac{16k}{\sqrt m} \;,
\end{align*}
and the lemma follows.
\end{proof}

It remains to prove that a similar lemma to Lemma~\ref{dist for k=sqrt m} holds for $\sqrt m \leq \tr \leq \frac{1}{4}m$
\dchange{(and the distributions $\mathcal{D}_{1}^{\text{ALG}}$ and $\mathcal{D}_{2}^{\text{ALG}}$ as defined in
this subsection)}.
\begin{lemma} \label{dist for k leq sqrt m} \tchange{Let $\tr=k \cdot \sqrt m$ for an integer $k$ such that $1< k\leq \frac{\sqrt m}{4}$. For every algorithm ALG that performs at most $Q=\frac{m^{3/2}}{600\tr}$ }queries, the statistical distance between $\mathcal{D}_{1}^{\text{ALG}}$ and $\mathcal{D}_{2}^{\text{ALG}}$ is at most $\frac{1}{3}$.
\end{lemma}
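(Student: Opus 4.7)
The plan is to adapt the hybrid-distribution argument of Lemma~\ref{dist for k=sqrt m} to the present setting, with one extra preliminary step to isolate the contribution of all-neighbors queries, since these were not present in the $\tr=m$ case. The two tools I intend to use are Lemma~\ref{Probability_hitting_red_blue_low_k leq sqrt m}, which bounds the per-step statistical contribution of pair and random new-neighbor queries through the conditioned processes $\tP_1,\tP_2$, and Lemma~\ref{bound dist all-neighbors query}, which bounds the probability that any single all-neighbors query in $P_2$ returns a witness answer.

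First, I note that $P_1$ answers every all-neighbors query deterministically with $R$ or $L$, so the event $E^Q$ that some all-neighbors query receives a witness answer satisfies $\Pr_{\mathcal{D}_1^{\mathrm{ALG}}}[E^Q]=0$. A standard conditioning argument then yields
\[
d\!\left(\mathcal{D}_1^{\mathrm{ALG}},\mathcal{D}_2^{\mathrm{ALG}}\right) \;\le\; \Pr_{\mathcal{D}_2^{\mathrm{ALG}}}[E^Q] \;+\; d\!\left(\mathcal{D}_1^{\mathrm{ALG}}|_{\bar E^Q},\,\mathcal{D}_2^{\mathrm{ALG}}|_{\bar E^Q}\right).
\]
On the event $\bar E^Q$, every all-neighbors query is answered by the same deterministic set in both processes, so these steps contribute nothing to the telescoping, and the remaining pair and random new-neighbor steps are handled exactly as in the proof of Lemma~\ref{dist for k=sqrt m} but with Lemma~\ref{Probability_hitting_red_blue_low_k leq sqrt m} in place of Lemma~\ref{Probability_hitting_red_blue_low}. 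Summing the per-step bound over all $Q$ hybrid steps gives a conditional distance of at most $\tfrac{1}{2}\cdot Q\cdot \tfrac{96k}{m}=\tfrac{2}{25}$.

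The remaining task is to bound $\Pr_{\mathcal{D}_2^{\mathrm{ALG}}}[E^Q]$, which by Lemma~\ref{bound dist all-neighbors query} and a union bound is at most $A\cdot (16k/\sqrt m)$, where $A$ denotes the number of all-neighbors queries issued during the interaction. This is where the main obstacle lies: each all-neighbors query on a vertex $u$ reveals $\sqrt m$ new edges and thus raises the knowledge-graph degree of $\sqrt m$ other vertices by one, so naively such queries could chain and cascade. To handle this, I would use a degree-budget argument: every non-all-neighbors query contributes at most $2$ to $\sum_v d^{kn}_\pi(v)$, while each prior all-neighbors query contributes at most $1$ to the knowledge-graph degree of any fixed vertex. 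Since each of the $A$ all-neighbors queries requires its vertex to satisfy $d^{kn}_\pi(u)\ge \sqrt m/2$, the non-all-neighbors queries must account for at least $A(\sqrt m/2-A)$ units of degree across these $A$ vertices, giving $A(\sqrt m/2-A)\le 2(Q-A)$. For $A\le \sqrt m/4$ this resolves to $A\le 8Q/\sqrt m=\sqrt m/(75k)$, a bound one can verify is consistent (so the case $A>\sqrt m/4$ is ruled out).

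Putting the pieces together, $\Pr_{\mathcal{D}_2^{\mathrm{ALG}}}[E^Q]\le (\sqrt m/(75k))\cdot (16k/\sqrt m)=16/75<1/4$, and combining with the conditional-distance bound yields $d(\mathcal{D}_1^{\mathrm{ALG}},\mathcal{D}_2^{\mathrm{ALG}})\le 1/4+2/25<1/3$. The hybrid telescoping is purely mechanical given Lemma~\ref{Probability_hitting_red_blue_low_k leq sqrt m}; the substantive work is the degree-budget bound on $A$, since a loose bound (e.g.\ $A\le Q$) would not suffice and one must exploit the fact that each all-neighbors query contributes only one unit to any particular vertex's knowledge-graph degree.
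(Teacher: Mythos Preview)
Your approach is essentially the same as the paper's: decompose according to the event $E^Q$ that some all-neighbors query returns a witness answer, bound $\Pr_{\mathcal{D}_2}[E^Q]$ via a bound on the number $A$ of all-neighbors queries, and handle the complement by the hybrid argument through $\tP_1,\tP_2$ using Lemma~\ref{Probability_hitting_red_blue_low_k leq sqrt m}. Two remarks on the details.

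First, on bounding $A$: the paper simply asserts $A\le 4Q/\sqrt m$ without justification, so your degree-budget argument actually supplies something the paper omits. However, your dismissal of the case $A>\sqrt m/4$ (``a bound one can verify is consistent'') is not a proof; consistency of the small-$A$ regime does not by itself exclude large $A$, and indeed the quadratic $A(\sqrt m/2-A)\le 2(Q-A)$ is trivially satisfied for large $A$. The clean fix is to apply your counting to the \emph{first} $\lfloor\sqrt m/4\rfloor$ all-neighbors queries: each of these has fewer than $\sqrt m/4$ prior all-neighbors queries, so each requires at least $\sqrt m/4$ units of degree from non-all-neighbors queries, forcing $(\sqrt m/4)^2\le 2Q=m/(300k)$, a contradiction for $k\ge 1$. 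With that step your bound $A\le 8Q/\sqrt m$ is valid.

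Second, on the conditioning: you identify $\mathcal{D}_2|_{\bar E^Q}$ with the distribution induced by the per-step-conditioned process $\tP_2$, but these are not the same—global conditioning on $\bar E^Q$ reweights transcripts by a $\pi$-dependent factor $\prod_{t:\text{AN}}\Pr_{P_2}[\text{non-witness}\mid\pi^{<t},q_t]$, so Lemma~\ref{Probability_hitting_red_blue_low_k leq sqrt m} does not directly control the per-step distances of $\mathcal{D}_2|_{\bar E^Q}$. The paper handles this by bounding $\sum_{\pi\in\bar E^Q}|\Pr_{P_1}[\pi]-\Pr_{P_2}[\pi]|\le \sum_{\pi\in\bar E^Q}|\Pr_{\tP_1}[\pi]-\Pr_{\tP_2}[\pi]|+\Pr_{\mathcal D_2}[E^Q]$ and then running the hybrid on $\tP_1,\tP_2$. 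Your numerics have enough slack to absorb this extra $\Pr[E^Q]$ term, so the overall argument goes through once you route it this way.
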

\begin{proof}
Let the sets $E^Q$ and $\overline{E}^Q$ be as defined in the beginning of this subsection.
\dchange{By the definition of the statistical distance, and since $\Pr_{\tchange{P_1,\text{ALG}}}[E^{Q}]=0$},
\begin{align*}
d(\mathcal{D}_{1}^{\text{ALG}}, \mathcal{D}_{2}^{\text{ALG}}) &= \frac{1}{2}\left(\sum_{\pi \in E^Q}\Big|\Pr_{\tchange{P_1,\text{ALG}}}[\pi] - \Pr_{\tchange{P_2,\text{ALG}}}[\pi]\Big| +
\sum_{\pi \in \overline{E}^{Q}}\Big|\Pr_{\tchange{P_1,\text{ALG}}}[\pi]- \Pr_{\tchange{P_2,\text{ALG}}}[\pi]\Big| \right) \\
& = \frac{1}{2}\left( \dchange{\Pr_{\tchange{P_2,\text{ALG}}}[E^{Q}]}+
\sum_{\pi \in \overline{E}^{Q}}\Big|\Pr_{\tchange{P_1,\text{ALG}}}[\pi]- \Pr_{\tchange{P_2,\text{ALG}}}[\pi]\Big| \right) \numberthis \label{twice dist}.
\end{align*}
By Lemma~\ref{bound dist all-neighbors query}, the probability of detecting a witness 
	as a result of an all-neighbors query is at most $\frac{16k}{\sqrt m}$. Since in $Q$ queries, there can be at most $\dchange{4}Q/\sqrt m$ all-neighbors queries, we have that
\[\Pr_{\mD_2^{\text{ALG}}}[E^Q] \leq \frac{1}{6}\;. \numberthis \label{eq:PrD2Eq}\]
We now turn to upper bound the second term.
\dchange{Let $\alpha = \Pr_{\tchange{P_2,\text{ALG}}}[E^Q]$.
\begin{eqnarray}
\sum\limits_{\pi \in \overline{E}^Q}
 \Big|\Pr_{\tchange{P_1,\text{ALG}}}[\pi]- \Pr_{\tchange{P_2,\text{ALG}}}[\pi]\Big|
&=& \sum\limits_{\pi \in \overline{E}^Q}\Big|\Pr_{\tchange{\tP_1,\text{ALG}}}[\pi]\cdot \Pr_{\tchange{P_1,\text{ALG}}}[\overline{E}^Q]
 - \Pr_{\tchange{\tP_2,\text{ALG}}}[\pi]\cdot \Pr_{\tchange{P_2,\text{ALG}}}[\overline{E}^Q] \Big| \nonumber \\
&=& \sum\limits_{\pi \in \overline{E}^Q}\Big|\Pr_{\tchange{\tP_1,\text{ALG}}}[\pi]
     - (1-\alpha)\cdot \Pr_{\tchange{\tP_2,\text{ALG}}}[\pi] \Big| \label{eq:D1OEQ1} \\
&\leq& \sum\limits_{\pi \in \overline{E}^Q}\Big|\Pr_{\tchange{\tP_1,\text{ALG}}}[\pi] -
      \Pr_{\tchange{\tP_2,\text{ALG}}}[\pi] \Big| +
      \alpha\cdot \Pr_{\tchange{\tP_2,\text{ALG}}}[\overline{E}^Q] \nonumber \\
&\leq& \sum\limits_{\pi \in \overline{E}^Q}\Big|\Pr_{\tchange{\tP_1,\text{ALG}}}[\pi] -
      \Pr_{\tchange{\tP_2,\text{ALG}}}[\pi] \Big| +
      \frac{1}{6} \;,\label{eq:tD2OEQ1}
\end{eqnarray}
where in Equation~\eqref{eq:D1OEQ1} we used the fact that $\Pr_{\tchange{P_1,\text{ALG}}}[\overline{E}^Q] =1$,
and in Equation~\eqref{eq:tD2OEQ1} we used the fact that $\Pr_{\tchange{\tP_2,\text{ALG}}}[\overline{E}^Q] =1$
and that $\alpha \leq 1/6$.
}

Therefore, it remains to bound
\[\sum\limits_{\pi \in \overline{E}^Q}\Big|\Pr_{\tchange{\tP_1,\text{ALG}}}[\pi] - \Pr_{\tchange{\tP_2,\text{ALG}}}[\pi]\Big|\;.\]

 Let the hybrid distributions $D_{1,t}^{\text{ALG}}$ for $t \in [Q-1]$
 be as defined in Lemma~\ref{dist for k=sqrt m} (based on the distributions $\mathcal{D}_{1}^{\text{ALG}}$ and $\mathcal{D}_{2}^{\text{ALG}}$ that are induced by the processes $P_1$ and $P_2$ that were defined in this subsection). Also, let $\tmD_{1,t}^{\text{ALG}}$ be the hybrid distribution $\mD_{1,t}^{\text{ALG}}$ conditioned on the event that no all-neighbors query is answered with a witness. That is, $\tmD_{1,t}^{\text{ALG}}$ is the distribution over query-answer histories $\pi$ of length $Q$, where in the length $t$ prefix ALG is answered by the process $P_1$, in the length $Q-t$ suffix ALG is answered by the process $P_2$, and \dchange{each all-neighbors query is answered consistently with $G_1$
 (so that no witness is observed).} 
 By the above definitions and the triangle inequality,
\begin{equation}
\sum\limits_{\pi \in \overline{E}^Q}\left|\Pr_{\tchange{\tP_1,\text{ALG}}}[\pi]- \Pr_{\tchange{\tP_2,\text{ALG}}}[\pi]\right| \leq \sum\limits_{t}^{Q-1}
 \sum\limits_{\pi \in \overline{E}^Q}\Big|\Pr_{\tmD_{1,t+1}^{\text{ALG}}}[\pi]-\Pr_{\tmD_{1,t}^{\text{ALG}}}[\pi]\Big| \;. \numberthis \label{eq:oEqD1mD2}
\end{equation}
As in the proof of Lemma~\ref{dist for k=sqrt m} we have that for every $t \in [Q-1]$,
\begin{align}
 \sum\limits_{\substack{\pi \in \overline{E}^Q }} &\Big|\Pr_{\tmD_{1,t+1}^{\text{ALG}}}[\pi]-\Pr_{\tmD_{1,t}^{\text{ALG}}}[\pi]\Big|\nonumber \\
&=
\sum\limits_{\substack{\pi'=\pi_1,\ldots,\pi_{t-1},q_t :\\ \pi' \in \overline{E}^{t-1}}} \Pr_{\tP_{1},{\text{ALG}}}[\pi',q_t]
 \cdot \sum\limits_{\substack{a \in Ans(\pi', q_t) : \\ \pi'\circ(q_t,a) \in \overline{E}^t}} \Big|\Pr_{\tP_{1}}[a\,|\,\pi',q_t] - \Pr_{\tP_2}[a\,|\,\pi',q_t]\Big| \;.
 \label{eq:sum-pi-t-t1-k-leq-sqrtm}
\end{align}
By Lemma~\ref{Probability_hitting_red_blue_low_k leq sqrt m}
\dchange{(and since for an all-neighbor query $q_t$
we have that the (unique) answer according to $\tP_2$ is the same as according to $\tP_1$)},
 \[\sum\limits_{\substack{a \in Ans(\pi',q_t) : \\ \pi'\circ(q_t,a) \in \overline{E}^t}} \Big|\Pr_{\tP_{1}}[a\,|\,\pi',q_t] - \Pr_{\tP_2}[a\,|\,\pi',q_t]\Big| \leq \frac{96k}{m} = \tchange{\frac{96\tr}{m^{3/2}}},\]
and it follows that
\[ \sum\limits_{\substack{\pi \in \overline{E}^Q }} \Big|\Pr_{\tmD_{1,t+1}^{\text{ALG}}}[\pi]-\Pr_{\tmD_{1,t}^{\text{ALG}}}[\pi]\Big| \leq \frac{96k}{m} = \tchange{\frac{96\tr}{m^{3/2}}}.\]
Hence, for $Q=\frac{m^{3/2}}{600\tr}$,
\begin{equation}
\sum\limits_{t}^{Q-1} \sum\limits_{\pi \in \overline{E}^Q}\Big|\Pr_{\tmD_{1,t+1}^{\text{ALG}}}[\pi]-\Pr_{\tmD_{1,t}^{\text{ALG}}}[\pi]\Big| \leq Q \cdot \frac{48\tr}{m^{3/2}} \leq \frac{1}{6}\;. \label{last}
\end{equation}
Combining Equations~\eqref{twice dist}, \eqref{eq:PrD2Eq}, \eqref{eq:tD2OEQ1}, \eqref{eq:oEqD1mD2} and \eqref{last}, we get
\begin{align}
d(\mD_1^{\text{ALG}},\mD_2^{\text{ALG}}) \leq \frac{1}{2}\left(\frac{1}{6} + \frac{1}{6} + \frac{1}{6}\right) \leq \frac{1}{3},
\end{align}
and the proof is complete.
\end{proof}

\subsect{Lower Bound for $\tr<\frac{1}{4}\sqrt m$.} \label{section Delta sqrt m}
\subsubsect{The construction}
In this case the basic structure of $G_1$ and $\mG_2$ is a bit different. Also, for the sake of simplicity, we present \dchange{graphs} with $2m$ edges, and either $0$ or $4\tr$ triangles.
The graph $G_1$ has three components -- \textsf{two} complete bipartite graphs, each over $2\sqrt m$ vertices, and an independent set of size $n-4\sqrt m$. Let $A$ and $B$ be the left-hand side and the right-hand side sets, respectively, of the first bipartite component, and $C$ and $D$ of the second one. We refer to the edges between $A$ and $B$ and the edges between $C$ and $D$ as \textsf{black edges}. We divide each of these sets into $\frac{\sqrt m}{\tr}$ subsets of size $\tr$, denoted $\{ \Lambda_1, \ldots ,\Lambda_{\frac{\sqrt m}{\tr}} \}$ for $\Lambda \in \{A,B,C,D\}$. For every $1\leq i\leq \frac{\sqrt m}{\tr}$, we first remove a complete bipartite graph between $A_i$ and $B_i$ and between $C_i$ and $D_i$, and refer to the removed edges as \textsf{red edges}. We then add a complete bipartite graph between $B_i$ and $C_i$ and between $D_i$ and $A_i$, and refer to added edges as \textsf{blue edges}. Note that this maintains the degrees of all the vertices to be $\sqrt m$.

In 
$\mG_2$ the basic structure of all the graphs is the same as of $G_1$ with the following modifications. Each graph is defined by the choice of four ``special" vertices $a^*, b^*, c^*, d^*$ such that $a^*\in A_{i_{a^*}}, b^* \in B_{i_{b^*}},c^*\in C_{i_{c^*}}$ and $d^* \in D_{i_{d^*}}$ for some indices $i_{a^*},i_{b^*},i_{c^*}$ and $i_{d^*}$ such that no two indices are equal. We then add edges $(a^*,c^*)$ and $(b^*,d^*)$, referred to as \textsf{green} edges, and remove edges $(a^*,b^*)$ and $(c^*, d^*)$, referred to as \textsf{purple edges}. We also refer to the green and purple edges as \textsf{special edges}. Note that we add one edge and remove one edge from each special vertex, thus maintaining their initial degrees. See Figure~\ref{fig:lb-for-small-Delta}. \par
\begin{figure}[h]
	\centering
\ifnum\nofigures=0
	\begin{tikzpicture}[yscale=0.7]
	\DrawDoubleBipartiteMain
	\end{tikzpicture}
\fi
	\caption{An illustration of a graph in $\mathcal{G}_2$. The broken thin (red) edges describe edges that were removed and the thin (blue) edges describe edges that were added. The broken thick (purple) edges describe the special non-edges $(a^*, b^*)$ and $(c^*, d^*)$. The curly (green) edges describe the special edges $(a^*, c^*)$ and $(b^*, d^*)$.}
	\label{fig:lb-for-small-Delta}
\end{figure}

We first prove that 
\dchange{$\tr(G_1)=0$}
and then that for every graph $G$ in $\mG_2$, $\tr(G)=4\tr.$
\begin{claim}
\tchange{The graph $G_1$ has no triangles.}
\end{claim}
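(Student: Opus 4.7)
The plan is to exploit the highly restricted bipartite-like structure of $G_1$ and reduce the problem to a short case analysis on which pairs of classes among $\{A,B,C,D\}$ can host edges.

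First, I would observe that since $G_1$ consists of the independent set of $n-4\sqrt{m}$ vertices together with modifications of two complete bipartite graphs (one on $A\cup B$ and one on $C\cup D$), and since the only edges added to this base structure are the blue edges between $B_i$ and $C_i$ and between $D_i$ and $A_i$, there are still no edges inside any one of the four sets $A,B,C,D$. Consequently, any triangle in $G_1$ must consist of three vertices lying in three \emph{distinct} sets among $\{A,B,C,D\}$ (the independent-set vertices are isolated in $G_1$, so they contribute nothing).

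Next I would compile the full list of edge-bearing pairs of classes: the remaining black edges live only in $A\times B$ and in $C\times D$, while the blue edges live only in $B\times C$ (matching indices) and in $A\times D$ (matching indices). In particular, there are \emph{no} edges between $A$ and $C$, and \emph{no} edges between $B$ and $D$. These two ``missing'' class-pairs are the key obstruction.

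Finally, I would enumerate the four possible triples of distinct classes and note that each one necessarily contains either the pair $\{A,C\}$ or the pair $\{B,D\}$:
\begin{itemize}
\item $\{A,B,C\}$ and $\{A,C,D\}$ both contain $\{A,C\}$;
\item $\{A,B,D\}$ and $\{B,C,D\}$ both contain $\{B,D\}$.
\end{itemize}
In every case, one of the three required edges of a triangle crosses a class-pair that carries no edges, so the triangle cannot exist. This completes the argument; I do not expect a genuine obstacle here, as the claim is essentially a combinatorial check on the class-bipartition. The only thing to be mildly careful about is to note explicitly that the construction removes the $A_i\times B_i$ and $C_i\times D_i$ edges but does not introduce any new edges across $A$-$C$ or $B$-$D$, so the ``no-edge'' observations for these two class-pairs are preserved.
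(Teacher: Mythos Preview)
Your argument is correct and essentially the same as the paper's: both rest on the observation that the edge set of $G_1$ is contained in $(A\times B)\cup(C\times D)\cup(B\times C)\cup(A\times D)$, with no edges across $A$--$C$ or $B$--$D$ and none inside any class. The paper phrases it by fixing an edge and checking that its endpoints have disjoint neighbor-classes, while you phrase it by enumerating triples of classes and exhibiting a missing pair; these are dual formulations of the same combinatorial check.
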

\begin{proof} Consider an edge $(u,v)$ in $G_1$. First assume $u$ and $v$ are connected by a black edge, that is, they are on different sides of the same bipartite component. Hence we can assume without loss of generality that $u\in A$ and that $v \in B$. Since $u$ is in $A$ it is only connected to vertices in $B$ or vertices in $D$. Since $v$ is in $B$ it is only connected to vertices in $A$ or vertices in $C$. Thus $u$ and $v$ cannot have a common neighbor. A similar analysis can be done for a pair $(u,v)$ that is connected by a blue edge. Therefore $\tr(G)$ is indeed zero as claimed.
\end{proof}

\begin{claim} For every graph $G\in \mG_2$, $\tr(G)=4\tr$.
\end{claim}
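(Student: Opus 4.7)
The plan is to show that every triangle in $G$ must use a special (green) edge, and then enumerate such triangles carefully. The starting observation is that the edges of $G$ consist of the edges of $G_1$, minus the two purple non-edges $(a^*,b^*), (c^*,d^*)$, plus the two green edges $(a^*,c^*), (b^*,d^*)$. Since the previous claim establishes that $G_1$ has no triangles, any triangle in $G$ whose three edges all appear in $G_1$ would contradict that claim. Hence every triangle of $G$ contains at least one green edge. Moreover, the two green edges $(a^*,c^*)$ and $(b^*,d^*)$ share no endpoint (the four starred vertices are distinct), so no triangle can contain both green edges simultaneously.

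Next, I would count the triangles containing each green edge separately. Consider the edge $(a^*,c^*)$. I would first explicitly write down the neighbor sets of $a^*$ and $c^*$ in $G$, by taking their neighborhoods in $G_1$ and adjusting for the special edits:
\[
\Gamma_G(a^*) \;=\; \bigl((B\setminus B_{i_{a^*}})\setminus\{b^*\}\bigr) \;\cup\; D_{i_{a^*}} \;\cup\; \{c^*\},
\]
\[
\Gamma_G(c^*) \;=\; \bigl((D\setminus D_{i_{c^*}})\setminus\{d^*\}\bigr) \;\cup\; B_{i_{c^*}} \;\cup\; \{a^*\}.
\]
A triangle on $(a^*,c^*)$ corresponds to a vertex in $\Gamma_G(a^*)\cap\Gamma_G(c^*)$. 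Using that all four indices $i_{a^*}, i_{b^*}, i_{c^*}, i_{d^*}$ are distinct (hence $B_{i_{c^*}} \subseteq B\setminus B_{i_{a^*}}$ with $b^*\notin B_{i_{c^*}}$, and $D_{i_{a^*}} \subseteq D\setminus D_{i_{c^*}}$ with $d^*\notin D_{i_{a^*}}$), the intersection is exactly $B_{i_{c^*}}\cup D_{i_{a^*}}$, which has size $2t$. Thus the edge $(a^*,c^*)$ lies in exactly $2t$ triangles.

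By the symmetric roles of $(A,B,C,D)$ and of $(a^*,b^*,c^*,d^*)$, the same argument applied to $(b^*,d^*)$ yields exactly $2t$ triangles containing it, with common-neighbor set $A_{i_{b^*}}\cup C_{i_{b^*}}$ (or the analogous two blocks). Since the two green edges never co-occur in a triangle, inclusion-exclusion gives $\tr(G)=2t+2t=4t$, as claimed. The only potentially delicate step is the neighborhood intersection in the first case: one must track which of $b^*,d^*$ might be removed from the relevant blocks, which is precisely where the distinctness of the four block indices is used.
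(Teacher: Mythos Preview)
Your proposal is correct and follows essentially the same approach as the paper: both argue that every triangle must contain a green edge (since $G_1$ is triangle-free), then compute the common neighborhood of the endpoints of each green edge to get $2t$ triangles per green edge. Your version is slightly more explicit in writing out the neighborhoods and in noting that the two green edges are vertex-disjoint so no triangle contains both; the paper leaves the latter implicit. One small slip: for the edge $(b^*,d^*)$ the common-neighbor set is $A_{i_{d^*}}\cup C_{i_{b^*}}$, not $A_{i_{b^*}}\cup C_{i_{b^*}}$, though your parenthetical ``or the analogous two blocks'' shows you have the right idea.
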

\begin{proof} Since the only differences between 
\dchange{$G_1$ and graphs in} $\mG_2$ are the two added green edges and the two removed red edges, any triangle in $\mG_2$ must include a green edge. Therefore we can count all the triangles that the green edges form. Consider the green edge $(a^*,c^*)$ and recall that $a^*$ is in $A_{i_{a^*}}$ and $c^*$ is in $C_{i_{c^*}}$. The only common neighbors of $(a^*,c^*)$ are all the vertices in $B_{i_{c^*}}$ and all the vertices in $D_{i_{a^*}}$. A vertex $v$ such that $v \notin B_{i_{c^*}}$ and $v \notin D_{i_{a^*}}$ is either (1) in $A$ or in $D\setminus D_{i_{a^*}}$, in which case it is not a neighbor of $a^*$, or it is (2) in $C$ or in $B\setminus B_{i_{c^*}}$, in which case it is not a neighbor of $c^*$. Since both $B_{i_{c^*}}$ and $D_{i_{a^*}}$ are of size $\tr$, the edge $(a^*,c^*)$ participates in $2\tr$ triangles. Similarly the edge $(b^*,d^*)$ participate in $2\tr$ triangles, and together we get that $\tr(G)=4\tr$, as claimed.
\end{proof}

\dchange{\subsubsect{The processes $P_1$ and $P_2$}
The definition of the processes $P_1$ and $P_2$ is the same as in Subsection~\ref{processes all neighbors query}
 (using the modified definitions of $G_1$ and $\mG_2$).
}

\subsubsect{The auxiliary graph}
We define a switch for this case as well. Informally, a switch between a matched pair $(u^*,v^*)$
and an unmatched pair $(u,v)$ is ``unmatching'' $(u^*,v^*)$ and ``matching'' $(u,v)$ instead. Formally stating we define a switch as follows.

\begin{definition} A \textsf{switch} between a green pair $(a^*,c^*)$ and a pair $(a,c)$ such that $a\in A_i$, $c\in C_j$ and none of the indices $i,j,i_{b^*},i_{d^*}$ are equal, is the following two steps process. In the first step we ``unmatch'' $(a^*,c^*)$ by removing the green edge $(a^*,c^*)$ and adding the edges $(a^*,b^*)$ and $(c^*,d^*)$. In the second step we ``match'' $(a,c)$ by adding the green edge $(a,c)$ and removing the edges $(a,b^*)$ and $(c,d^*)$. A switch with the pair $(b^*,d^*)$ can be defined in a similar manner. \par
\end{definition}

\begin{figure}[h]
	\centering
\ifnum\nofigures=0
	\begin{tikzpicture}[yscale=0.7]
	\DrawFixSmallDelta
	\end{tikzpicture}
\fi
	\caption{An illustration of a switch between the pairs $(a^*,c^*)$ and $(a,c)$.}
	\label{fig:fix-for-small-Delta}
\end{figure}

Let $\tr < \sqrt m$ and let $Q=\frac{m}{600}$. For every $t\leq Q$, every query-answer history $\pi$ of length $t-1$ and every pair $(u,v)$ we define the following auxiliary graph. The witness nodes are graphs in which $(u,v)$ is one of the four special pairs. If the pair is a green matched pair then there is an edge in the auxiliary graph between a witness graph $W$ and a non-witness graph $\oW$, if $\oW$ can be obtained from $W$ by a single switch between $(u,v)$ and another unmatched pair.

\begin{lemma} \label{Auxilary-graph-degrees Delta leq sqrt m}
For $\tr < \frac{1}{4} \sqrt m$ let $Q=\frac{m}{600}$. For every $t\leq Q$, every query-answer history $\pi$ of length $t-1$ such that $\pi$ is consistent with $G_1$ and every pair $(u,v)$,
\[\frac{d_{nw}(\aux)}{d_{w}(\aux)} = \frac{8}{m}.\]
\end{lemma}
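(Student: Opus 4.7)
The plan is to prove the stated ratio by establishing $d_{nw}(\aux) \le 1$ and $d_w(\aux) \ge m/8$, and then dividing to obtain the claim. By the symmetry of the construction of $\mG_2$ and its four possible special-pair types (green $A\times C$, green $B\times D$, purple $A\times B$, purple $C\times D$), I focus on the representative case $u \in A$, $v \in C$ (a potential green pair); the other three cases are handled by essentially identical arguments.

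For the upper bound $d_{nw}(\aux) \le 1$, consider any non-witness graph $\oW \in \mG_2(\pi)$ in $\aux$. Its two green special edges are $(\oW.a^*, \oW.c^*) \in A \times C$ and $(\oW.b^*, \oW.d^*) \in B \times D$, and neither equals $(u,v)$ since $\oW$ is a non-witness. For a witness-neighbor $W$ of $\oW$ in $\aux$, the single switch producing $W$ from $\oW$ must transform one of these two green edges into $(u,v)$. Since $(u,v) \in A \times C$, only the $A \times C$ green edge of $\oW$ can be the one replaced, and the target of the replacement is forced to be $(u,v)$. Hence $\oW$ has at most one witness neighbor in $\aux$, which gives $d_{nw}(\aux)\le 1$.

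For the lower bound $d_w(\aux) \ge m/8$, fix any witness $W \in \aux$ with $(u,v) = (a^*, c^*)$ and accompanying special vertices $b^* \in B$, $d^* \in D$. A valid switch replaces $(u,v)$ with some pair $(a,c) \in A \times C$ satisfying: (i) the four class indices $i_a, i_{b^*}, i_c, i_{d^*}$ are pairwise distinct, and (ii) the resulting graph $\oW$ lies in $\mG_2(\pi)$. The number of pairs satisfying (i) alone equals $(\sqrt m - 2\tr)(\sqrt m - 3\tr) - 1$, and since $\tr < \sqrt m/4$, direct computation gives $(\sqrt m - 2\tr)(\sqrt m - 3\tr) = m - 5\tr\sqrt m + 6\tr^2 > m/8$. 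For constraint (ii), since $W \in \mG_2(\pi)$ already guarantees no observation in $\pi$ on the set $\{(u,v), (u,b^*), (v,d^*), (b^*,d^*)\}$, the only additional constraints are that $(a,c)$ has not been queried as a pair and that neither $(a,b^*)$ nor $(c,d^*)$ has been observed as an edge in $\pi$. A careful accounting of the at most $Q = m/600$ queries (and the fact that all-neighbors queries on $b^*$ or $d^*$ are automatically ruled out for a witness $W$ because they would reveal the forbidden edge $(u,b^*)$ or $(v,d^*)$) shows that these disqualifications are a lower-order loss compared to the slack in $(\sqrt m - 2\tr)(\sqrt m - 3\tr) - m/8$, yielding $d_w(\aux) \ge m/8$.

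The main obstacle lies in the accounting of constraint (ii): a single all-neighbors query on a vertex $a_0$ adjacent to $b^*$ reveals the edge $(a_0, b^*)$ and thereby disqualifies all pairs $(a_0, c)$ at once, which in principle could remove $\Theta(\sqrt m)$ candidate pairs per query. The saving observation is that the witness definition itself forces $b^*$ and $d^*$ to avoid vertices whose neighborhoods have been fully revealed (since such a revelation would necessarily expose $(u,b^*)$ or $(v,d^*)$), and that the total observation budget $Q = m/600$ is small enough to keep the cumulative disqualification strictly below the excess of $(\sqrt m - 2\tr)(\sqrt m - 3\tr)$ over $m/8$. Combining the two bounds gives $d_{nw}(\aux)/d_w(\aux) \le 1/(m/8) = 8/m$, completing the proof.
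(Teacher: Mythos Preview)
Your overall strategy matches the paper's: establish $d_{nw}(\aux)\le 1$ and $d_w(\aux)\ge m/8$ and divide. Your argument for $d_{nw}\le 1$ is correct and coincides with the paper's. You also correctly identify the consistency constraints on a switch target $(a,c)$ as ``none of $(a,c)$, $(a,b^*)$, $(c,d^*)$ has been observed''; this is in fact sharper than the paper's text, which (apparently by a slip) writes $(a^*,b^*)$ and $(c^*,d^*)$ in place of $(a,b^*)$ and $(c,d^*)$ and thereby hides the real difficulty.

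However, your lower bound on $d_w$ has a genuine gap, and it is not the all-neighbors issue you single out. Consider a history $\pi$ consisting solely of the $\sqrt{m}-1$ pair queries $(a_1,b_0),\dots,(a_{\sqrt{m}-1},b_0)$ for one fixed $b_0\in B$ and all $a_i\in A\setminus\{u\}$; this is within the budget whenever $\sqrt{m}-1\le Q=m/600$, i.e.\ for all but tiny $m$. Each answer is ``edge'' and $\pi$ is consistent with $G_1$. Now take any witness $W$ with $b^*=b_0$ (such $W$ exist, since $(u,b_0)$ was never queried). Every switch target $(a,c)$ must have $(a,b_0)$ unobserved, which forces $a=u$; hence this $W$ has at most $\sqrt{m}$ neighbors in $\aux$, so $d_w(\aux)\le\sqrt{m}<m/8$. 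Your ``saving observation'' only rules out all-neighbors queries \emph{on} $b^*$ and $d^*$; it does nothing to bound the number of ordinary pair/neighbor observations incident to $b^*$, and those alone already break the claimed bound. (The paper's proof, under its literal constraint list, sidesteps this by omitting the $(a,b^*)$ and $(c,d^*)$ conditions entirely; what the downstream application actually needs is only a bound on the ratio of witness to non-witness \emph{counts}, and an averaging argument---few witnesses have $b^*$ equal to any heavily-queried vertex---would deliver that. But the minimum-degree statement, as you and the paper both try to prove it, does not go through with the accounting given.)
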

\begin{proof}
We analyze the case where the pair $(u,v)$ is such that $u\in A$ and $v \in C$, as the proof for the other cases is almost identical. We first prove that $d_{w}(\aux) \geq \frac{1}{8}m$. A witness graph $W$ is a graph in which $(a,c)$ is a special pair. That is $(u,v) = (a^*,c^*)$. Potentially, for every pair $(a',c')$ such that $a'\in A_i$, $c'\in C_j$ and none of the indices $i,j,i_{b^*},i_{d^*}$ are equal, the graph resulting from a switch between $(a^*,c^*)$ and $(a',c')$ is a non-witness graph. There are $\sqrt m-2\tr$ vertices $a'$ in $A\setminus (A_{i_{b^*}}\cup A_{i_{d^*}})$ and for each such $a'$ there are $\sqrt m-3\tr$ vertices $c'$ in $C\setminus (C_{i_{b^*}}\cup C_{i_{d^*}} \cup C_{i_{a'}}))$. Since $\tr< \frac{1}{4}\sqrt m$, there are at least $(\sqrt m-2\tr)\cdot(\sqrt m-3\tr) = m-6\tr^2 \geq \frac{1}{4}m$ potential pairs $(a',c')$ that $(a^*,c^*)$ could be switched with. For the resulting graph to be consistent, that is, to be in $\mG_2(\pi)$, the pair $(a',c')$ must be such that the pairs $(a',c')$, $(a^*,b^*)$ and $(c^*,d^*)$ have not been observed yet by the algorithm. Since the number of queries is at most $\frac{1}{600}m$, at least $\frac{1}{4}m - \frac{1}{125}m \geq \frac{1}{8}m$ of the potential pairs $(a',c')$ can be switched with $(a^*,c^*)$ such that the resulting graph is consistent with $\mG_2(\pi)$. Therefore, $d_{w}(\aux)\geq \frac{1}{8}m$. \par
Now consider a non-witness graph $\oW$. There is only one possibility to turn $\oW$ into a witness graph, which is to switch the pair $(u,v)$ with the green pair $(a^*,c^*)$. Therefore, the maximal degree of every non-witness graph, $d_{nw}(\aux)$, is $1$. \par
Together we get that
\[\frac{d_{nw}(\aux)}{d_{w}(\aux)} \leq \frac{8}{m},\]
and the proof is complete.
\end{proof}
\dchange{\subsubsect{Statistical distance}}
A similar proof to the ones of Lemma~\ref{Probability_hitting_red_blue_low_k leq sqrt m} and Lemma~\ref{bound dist all-neighbors query} using Lemma~\ref{Auxilary-graph-degrees Delta leq sqrt m} gives the following lemmas for the case that $1 \leq \tr < \frac{1}{4}\sqrt m$.

\begin{lemma} \label{bound dist all-neighbors query delta leq sqrt m}
Let $1 \leq \tr < \frac{1}{4}\sqrt m$ and $Q=\frac{m}{600}$. For every $t\leq Q$, every query-answer history $\pi$ of length $t-1$ such that $\pi$ is consistent with $G_1$ and for every all-neighbors query $q_t$,
\[\Pr_{\tchange{P_2}}[a_t \mbox{ is a witness answer }\,|\,\pi,q_t] \leq \frac{16}{m}\;.\]
\end{lemma}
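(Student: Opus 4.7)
The strategy will mirror the proof of Lemma~\ref{bound dist all-neighbors query}, substituting Lemma~\ref{Auxilary-graph-degrees Delta leq sqrt m} in place of Lemma~\ref{Auxilary-graph-degrees k leq sqrt m}. First I would reduce to a single sub-case by symmetry: assume without loss of generality that $u\in A$ and let $i$ be the index with $u\in A_i$. Observing the construction of $\mathcal{G}_2$, in every $G\in\mathcal{G}_2$ the neighborhood of $u$ coincides with $\Gamma_{G_1}(u)=(B\setminus B_i)\cup D_i$ unless $u$ is the chosen special vertex $a^*$, in which case $\Gamma_G(u)=\big((B\setminus B_i)\setminus\{b^*\}\big)\cup D_i\cup\{c^*\}$. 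Therefore the all-neighbors answer is a witness answer precisely when the random graph selected by $P_2$ satisfies $u=a^*$, which happens iff there is some $b^*\in B\setminus B_i$ for which $(u,b^*)$ is a purple non-edge.

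Next, I would decompose the witness event over the identity of $b^*$. Since for different $b$'s the corresponding sub-events are disjoint (each graph picks a unique $b^*$),
\begin{align*}
\Pr_{P_2}[a_t\text{ is a witness answer}\mid\pi,q_t]
\;=\;\sum_{b}\frac{\bigl|\mathcal{G}_2\bigl(\pi\circ((u,b),0)\bigr)\bigr|}{\bigl|\mathcal{G}_2(\pi)\bigr|},
\end{align*}
where the sum runs over $b\in B$ for which $(u,b)$ being a non-edge is the witness value with respect to that pair. For each such $b$, the auxiliary-graph ratio bound established in Lemma~\ref{Auxilary-graph-degrees Delta leq sqrt m} applied to the pair $(u,b)$ gives
\begin{align*}
\frac{\bigl|\mathcal{G}_2\bigl(\pi\circ((u,b),0)\bigr)\bigr|}{\bigl|\mathcal{G}_2(\pi)\bigr|}
\;\leq\;\frac{\bigl|\mathcal{G}_2\bigl(\pi\circ((u,b),0)\bigr)\bigr|}{\bigl|\mathcal{G}_2\bigl(\pi\circ((u,b),1)\bigr)\bigr|}
\;\leq\;\frac{d_{nw}(\mathcal{A}_{\pi,(u,b)})}{d_w(\mathcal{A}_{\pi,(u,b)})}
\;\leq\;\frac{8}{m}.
\end{align*}
Summing this per-pair bound over the relevant values of $b$ and collecting constants will yield the stated inequality $16/m$.

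The step that I expect to be the main obstacle is verifying that the auxiliary-graph inequality of Lemma~\ref{Auxilary-graph-degrees Delta leq sqrt m} applies with the correct witness orientation: namely, that when we treat the purple event (``$(u,b)$ is a non-edge'') as the witness side of the pair, the switch construction still gives $d_w\geq m/8$ and $d_{nw}\leq 1$. This follows from the observation that a switch simultaneously turns a purple non-edge into an ordinary non-edge \emph{and} an ordinary edge into a green edge (and vice versa), so the switch is an involution between ``$(u,b)$ is a special pair'' and ``$(u,b)$ is not a special pair'' regardless of which of the two witness flavors (purple vs.\ green) one fixes; the counting argument of Lemma~\ref{Auxilary-graph-degrees Delta leq sqrt m} then applies unchanged. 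With this in hand, the remaining work is just the bookkeeping over the index constraints $i_{a^*},i_{b^*},i_{c^*},i_{d^*}$ pairwise distinct, which only affect lower-order terms since $\tr<\tfrac14\sqrt m$.
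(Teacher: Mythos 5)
Your overall route is the one the paper intends (it only sketches this proof as ``similar to Lemma~\ref{bound dist all-neighbors query} using Lemma~\ref{Auxilary-graph-degrees Delta leq sqrt m}''), and your reduction of the witness event to ``$u=a^*$'' together with the disjoint decomposition over the identity of the special partner is correct. The gap is in the last step, ``summing this per-pair bound over the relevant values of $b$ \dots will yield $16/m$'': the sum has $\Theta(\sqrt m)$ admissible terms (every unobserved $b\in B\setminus B_i$ can serve as $b^*$), each bounded only by $O(1/m)$, so your argument yields $O(1/\sqrt m)$, not $16/m$. Moreover this loss is not an artifact of the bookkeeping: for a typical consistent $\pi$ the event $u=a^*$ really has probability $\Theta(1/\sqrt m)$ (about $m^{3/2}$ graphs of $\mG_2$ have $a^*=u$, out of $\Theta(m^2)$ graphs in total), so no proof can give $16/m$; the statement should be read as the exact analogue of Lemma~\ref{bound dist all-neighbors query}, namely $\sqrt m$ times the per-pair ratio, i.e.\ $16/\sqrt m$. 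That weaker bound is all that is needed downstream, since in Lemma~\ref{dist for Delta leq sqrt m} it is multiplied by the number of all-neighbors queries, which is at most $4Q/\sqrt m = O(\sqrt m)$ for $Q=m/600$, still giving a contribution at most $1/6$. Your proof should either prove the $O(1/\sqrt m)$ bound or explicitly note the discrepancy, rather than asserting that constants absorb a factor of $\sqrt m$.

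A secondary issue is the point you yourself flagged and then dismissed: the claim that the counting of Lemma~\ref{Auxilary-graph-degrees Delta leq sqrt m} ``applies unchanged'' when the witness side of the pair $(u,b)\in A\times B$ is the purple non-edge. It does not: with single switches, a non-witness graph in which $b^*=b$ but $a^*\neq u$ can be turned into a witness graph in $\Theta(\sqrt m)$ ways (switch the green pair $(a^*,c^*)$ with $(u,c)$ for any admissible $c$), so the bound $d_{nw}(\mathcal{A}_{\pi,(u,b)})\leq 1$ fails for this orientation and the single-switch auxiliary graph only gives $O(1/\sqrt m)$ per pair. The clean fix, and the one that actually mirrors the proof of Lemma~\ref{bound dist all-neighbors query} (which sums over the \emph{added} matching edge, not the removed one), is to decompose over the green partner instead: write the witness probability as $\sum_{c\in C}\left|\mG_2\bigl(\pi\circ((u,c),1)\bigr)\right|/\left|\mG_2(\pi)\right|$, and bound each term by $8/m$ using Lemma~\ref{Auxilary-graph-degrees Delta leq sqrt m} exactly in the case it proves ($u\in A$, $c\in C$); summing over the $\sqrt m$ choices of $c$ then gives the $O(1/\sqrt m)$ bound discussed above.
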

\begin{lemma} \label{Probability_hitting_red_blue_low delta leq sqrt m}
Let $1 \leq \tr < \frac{1}{4}\sqrt m$ and $Q = \frac{m}{600}$. For every $t\leq Q$, every query-answer history $\pi$ of length $t-1$ such that $\pi$ is consistent with $G_1$ and for every pair or random new-neighbors query $q_t$,
\[\sum\limits_{a \in Ans(\pi,q_t)} \Big|\Pr_{\tP_1}[a \,|\,\pi,q_t] - \Pr_{\tP_2}[a \,|\,\pi,q_t ]\Big| = \frac{96}{m}\;.\]
\end{lemma}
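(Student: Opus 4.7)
The plan is to follow the case-analysis template used in the proofs of \Lem{Probability_hitting_red_blue_low} and \Lem{Probability_hitting_red_blue_low_k leq sqrt m}, with the new auxiliary-graph ratio bound $d_{nw}(\aux)/d_{w}(\aux) \leq 8/m$ from \Lem{Auxilary-graph-degrees Delta leq sqrt m} playing the role of the earlier ratios $2/\sqrt m$ and $2r/\sqrt m$. The central observation is that for any pair $(u,v)$ and any query-answer history $\pi$ consistent with $G_1$, the $\tP_2$-probability that $(u,v)$ is one of the four special pairs (green or purple) of the randomly sampled graph in $\mG_2(\pi)$ is bounded by $|\mG_2(\pi\circ((u,v),\cdot))|/|\mG_2(\pi)| \leq d_{nw}(\aux)/d_{w}(\aux) \leq 8/m$; the $t \leq Q = m/600$ bound ensures that $\pi$ has not yet ruled out enough witnesses for this ratio to be invalidated, and in particular $d^{kn}_\pi(u) \leq m/600$ so $\sqrt m - d^{kn}_\pi(u) \geq \sqrt m/2$ for any $u$.

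For a pair query $q_t=(u,v)$, a dichotomy suffices. If $(u,v)$ cannot play the role of any of the four special pairs in any $G \in \mG_2(\pi)$, then both $\tP_1$ and $\tP_2$ return the same answer (the one prescribed by $G_1$) with probability $1$, contributing $0$ to the sum. Otherwise $(u,v)$ is either an edge of $G_1$ that could be a purple pair $(a^*,b^*)$ or $(c^*,d^*)$, or a non-edge of $G_1$ that could be a green pair $(a^*,c^*)$ or $(b^*,d^*)$; in both cases $\tP_1$ answers the $G_1$-consistent value deterministically, while $\tP_2$ answers the opposite value with probability at most $8/m$ by the auxiliary-graph bound, so the sum of absolute differences is at most $2\cdot (8/m) = 16/m \leq 96/m$.

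For a random new-neighbor query $q_t = u$, assume without loss of generality $u \in A_i$; the set of $G_1$-neighbors of $u$ is $(B\setminus B_i)\cup D_i$, on which $\tP_1$ is uniform with mass $1/(\sqrt m - d^{kn}_\pi(u))$ per vertex. Under $\tP_2$ the support can expand: an answer $v \notin N_{G_1}(u)$ can only arise when $u$ plays the role of $a^*$ and $v$ the role of $c^*$ in the sampled graph, so $\Pr_{\tP_2}[v\mid \pi,q_t] \leq (8/m)\cdot 1/(\sqrt m - d^{kn}_\pi(u))$, and there are at most $|C| = 2\sqrt m$ such vertices, contributing at most $2\sqrt m\cdot (8/m)\cdot(2/\sqrt m) = 32/m$ to the sum. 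For $v \in (B\setminus B_i)\cup D_i$, the absolute difference between $\tP_1$ and $\tP_2$ is at most the probability that $v$ is lost as a neighbor (the event $v = b^*$ with $u = a^*$), again bounded per-vertex by $(8/m)/(\sqrt m-d^{kn}_\pi(u))$; summing over at most $\sqrt m$ such vertices yields an additional $\leq 16/m$. Combining, the total is bounded by $48/m \leq 96/m$, and symmetric arguments for $u \in B, C, D$ complete the proof.

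The main obstacle will be the careful bookkeeping in the new-neighbor case: one must enumerate, for each role $u$ can play in $\mG_2(\pi)$, exactly which vertices appear or disappear as neighbors, respecting the constraint that the four block-indices $i_{a^*},i_{b^*},i_{c^*},i_{d^*}$ are pairwise distinct, and verify that each differential contribution is controlled by the $8/m$ ratio rather than a larger polynomial in $1/\sqrt m$. The stated constant $96$ is slack enough to absorb the constant factors from this accounting.
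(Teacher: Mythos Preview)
Your approach is correct and is precisely what the paper does: it states that the proof is ``similar to the ones of \Lem{Probability_hitting_red_blue_low_k leq sqrt m}'' using the auxiliary-graph ratio $8/m$ from \Lem{Auxilary-graph-degrees Delta leq sqrt m} in place of the earlier ratios. Two small corrections: first, $|C|=\sqrt{m}$, not $2\sqrt{m}$ (each bipartite component has $2\sqrt{m}$ vertices total, $\sqrt{m}$ per side); second, the reason $\sqrt{m}-d^{kn}_\pi(u)\geq \sqrt{m}/2$ is not that $Q=m/600$ bounds the degree (indeed $m/600$ can exceed $\sqrt{m}$), but that by the definition of the modified processes in Subsection~\ref{processes all neighbors query}, a query with $d^{kn}_\pi(u)\geq \sqrt{m}/2$ is an all-neighbors query, not a random new-neighbor query---neither slip affects your final bound.
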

The next lemma is proven in a similar way to 1.3.4 based on the above two lemma.
\begin{lemma} \label{dist for Delta leq sqrt m} Let $1 \leq \tr < \frac{1}{4}\sqrt m$. For every algorithm ALG that asks at most $Q=\frac{m}{600}$, the statistical distance between $\mathcal{D}_{1}^{\text{ALG}}$ and $\mathcal{D}_{2}^{\text{ALG}}$ is at most $\frac{1}{3}$.
\end{lemma}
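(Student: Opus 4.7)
The plan is to follow exactly the same template as the proof of Lemma~\ref{dist for k leq sqrt m}, but with the bounds from the two preceding lemmas (Lemma~\ref{bound dist all-neighbors query delta leq sqrt m} and Lemma~\ref{Probability_hitting_red_blue_low delta leq sqrt m}) substituted in. The key observation is that the structural difficulty is identical: an all-neighbors query, if answered with a witness, reveals the entire special-vertex identity in one shot, so we must handle these queries separately from the per-step hybrid argument that handles pair and random new-neighbor queries.

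First I would split $\Pi^Q$ into $E^Q$ (histories in which some all-neighbors query receives a witness answer) and $\overline{E}^Q$. Since $\Pr_{P_1,\text{ALG}}[E^Q] = 0$, the statistical distance can be written as
\[ d(\mathcal{D}_1^{\text{ALG}}, \mathcal{D}_2^{\text{ALG}}) = \frac{1}{2}\Pr_{P_2,\text{ALG}}[E^Q] + \frac{1}{2}\sum_{\pi \in \overline{E}^Q} \Big| \Pr_{P_1,\text{ALG}}[\pi] - \Pr_{P_2,\text{ALG}}[\pi] \Big|. \]
By Lemma~\ref{bound dist all-neighbors query delta leq sqrt m}, each all-neighbors query contributes at most $16/m$ probability of a witness, and since any algorithm performing $Q \leq m/600$ queries can make at most $4Q/\sqrt{m}$ all-neighbors queries (each such query reveals $\sqrt{m}$ edges), a union bound gives $\Pr_{P_2,\text{ALG}}[E^Q] \leq (4Q/\sqrt{m}) \cdot (16/m) \leq 1/3$. (In fact, the much cruder bound $Q \cdot 16/m \leq 1/3$ already suffices.)

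Next I would bound the sum over $\overline{E}^Q$. Introduce the conditional processes $\tP_1$ and $\tP_2$ (the processes $P_1$ and $P_2$ conditioned on never returning a witness answer to an all-neighbors query) and, exactly as in Lemma~\ref{dist for k leq sqrt m}, the hybrid distributions $\tmD_{1,t}^{\text{ALG}}$. Writing $\alpha = \Pr_{P_2,\text{ALG}}[E^Q]$, we get
\[ \sum_{\pi \in \overline{E}^Q} \Big| \Pr_{P_1,\text{ALG}}[\pi] - \Pr_{P_2,\text{ALG}}[\pi] \Big| \leq \sum_{\pi \in \overline{E}^Q} \Big| \Pr_{\tP_1,\text{ALG}}[\pi] - \Pr_{\tP_2,\text{ALG}}[\pi] \Big| + \alpha, \]
and then by the triangle inequality over the hybrids,
\[ \sum_{\pi \in \overline{E}^Q} \Big| \Pr_{\tP_1,\text{ALG}}[\pi] - \Pr_{\tP_2,\text{ALG}}[\pi] \Big| \leq \sum_{t=0}^{Q-1} \sum_{\pi \in \overline{E}^Q} \Big| \Pr_{\tmD_{1,t+1}^{\text{ALG}}}[\pi] - \Pr_{\tmD_{1,t}^{\text{ALG}}}[\pi] \Big|. \]
A single step of this hybrid telescopes (just as in Equation~\eqref{eq:sum-pi-t-t1-k-leq-sqrtm}) to an expression bounded by the total variation distance between the single-step answer distributions of $\tP_1$ and $\tP_2$, averaged over history and query. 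For pair and random new-neighbor queries this is at most $96/m$ by Lemma~\ref{Probability_hitting_red_blue_low delta leq sqrt m}, and for all-neighbors queries the conditioning on $\overline{E}$ forces both $\tP_1$ and $\tP_2$ to return the unique non-witness answer, contributing $0$.

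Finally, combining these bounds with $Q = m/600$, each hybrid step contributes at most $96/m$, yielding a total of at most $Q \cdot 96/m \leq 1/6$ for the hybrid sum. Together with $\alpha \leq 1/6$ from the all-neighbors analysis and the $\frac{1}{2} \Pr_{P_2,\text{ALG}}[E^Q] \leq 1/6$ term, we obtain
\[ d(\mathcal{D}_1^{\text{ALG}}, \mathcal{D}_2^{\text{ALG}}) \leq \tfrac{1}{2}(\tfrac{1}{3} + \tfrac{1}{6} + \tfrac{1}{6}) \leq \tfrac{1}{3}, \]
which is the claim. The only step requiring any care is the first one: verifying that the ``at most $4Q/\sqrt{m}$ all-neighbors queries'' bound (since each reveals $\sqrt{m}$ neighbors and the algorithm is assumed not to repeat deducible queries) together with Lemma~\ref{bound dist all-neighbors query delta leq sqrt m} gives a sufficiently small bound on $\Pr_{P_2,\text{ALG}}[E^Q]$; everything else is mechanical substitution into the template of Lemma~\ref{dist for k leq sqrt m}.
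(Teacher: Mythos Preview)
Your proposal is correct and follows exactly the approach the paper intends: it explicitly states that this lemma ``is proven in a similar way to [Lemma~\ref{dist for k leq sqrt m}] based on the above two lemmas,'' and you have carried out precisely that substitution. One small presentational inconsistency: you bound $\Pr_{P_2,\text{ALG}}[E^Q]$ by $1/3$ in one place and then use $\alpha \leq 1/6$ for the same quantity later --- both are valid (indeed your crude bound $Q\cdot 16/m = 16/600$ already gives $\leq 1/6$), so you should simply use $1/6$ throughout to match the paper's final computation $\tfrac{1}{2}(\tfrac{1}{6}+\tfrac{1}{6}+\tfrac{1}{6})\leq \tfrac{1}{3}$.
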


\subsect{Wrapping things up}
Theorem \ref{m 3/2 over delta} follows from Lemmas~\ref{dist for k=sqrt m}, \ref{dist for k = r sqrt m}, \ref{dist for k leq sqrt m} and \ref{dist for Delta leq sqrt m}, and the next corollary is proved using Theorems \ref{m 3/2 over delta} and \ref{n over delta third}.

\begin{corollary}
\label{n over delta third cor} Any multiplicative-approximation algorithm for the number of triangles in a graph must perform $\Omega\left(\frac{n}{\tr(G)^{1/3}} + \min\left\{m, \frac{m^{3/2}}{\tr(G)}\right\}\right)$ queries, where the allowed queries are degree queries, pair queries and neighbor queries.
\end{corollary}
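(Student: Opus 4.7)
The plan is to derive \Cor{n over delta third cor} by combining \Thm{n over delta third} and \Thm{m 3/2 over delta}, and then reducing the setting of random new-neighbor queries to that of standard (indexed) neighbor queries. The $\Omega(n/\tr(G)^{1/3})$ part is immediate: in the construction used in the proof of \Thm{n over delta third}, the graph $G_1$ is empty, so every neighbor query (whether random new-neighbor or indexed) returns the special symbol $\dagger$; an algorithm that does not sample a vertex in the planted clique therefore receives the same distribution of answers from $G_1$ and from a uniformly chosen graph in $\mG_2$. Hitting a clique vertex via uniform vertex sampling requires $\Omega(n/\tr^{1/3})$ samples, independently of which neighbor-query model is used.

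For the $\Omega(\min\{m, m^{3/2}/\tr(G)\})$ part, the plan is to show that any algorithm $\text{ALG}$ using at most $Q$ standard neighbor queries (plus degree and pair queries) can be turned into an algorithm $\text{ALG}'$ using at most $Q$ random new-neighbor queries that simulates $\text{ALG}$ exactly on the families constructed in the proof of \Thm{m 3/2 over delta}. The key step is to augment each graph in $\{G_1\} \cup \mG_2$ with an independently chosen, uniformly random permutation $\sigma_v$ of $\Gamma(v)$ for each vertex $v$; this extra randomness is part of the (now even more randomized) adversary's choice and is independent of the labels of the graphs. Under this augmentation, for any vertex $v$ and any index $i$ that has not been used in a previous neighbor query on $v$, the answer to the indexed query $(v,i)$ is distributed as a uniformly random unseen neighbor of $v$---that is, identically to a random new-neighbor answer on $v$. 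The simulation of $\text{ALG}$ by $\text{ALG}'$ is then: whenever $\text{ALG}$ issues $(v,i)$ with $i$ already used, $\text{ALG}'$ reads the answer from history at no cost; otherwise, $\text{ALG}'$ issues a random new-neighbor query on $v$ and feeds the result to $\text{ALG}$ as the answer to $(v,i)$. The joint distribution of $\text{ALG}$'s view is preserved, so $\text{ALG}$ can distinguish $G_1$ from a random graph in $\mG_2$ with probability $\ge 2/3$ only if $\text{ALG}'$ can; but by \Thm{m 3/2 over delta} this requires $Q = \Omega(\min\{m, m^{3/2}/\tr\})$.

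Finally, the two lower bounds are combined in the usual way: any multiplicative-approximation algorithm must succeed on both lower-bound instances (those with $\tr(G) \approx \tr$ that appear in the respective constructions and those with $\tr(G)=0$), so its worst-case query complexity is at least the maximum of $\Omega(n/\tr^{1/3})$ and $\Omega(\min\{m, m^{3/2}/\tr\})$, which is $\Omega(n/\tr^{1/3} + \min\{m, m^{3/2}/\tr\})$ up to constants.

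The main obstacle I expect is verifying the simulation carefully: in particular, one must check that randomizing the neighbor orderings does not disturb the structural properties exploited in the proofs of \Thm{m 3/2 over delta} (it does not, since the proofs only use the combinatorial structure of the edge sets and the statistical-distance bounds are proved from the processes $P_1, P_2$, which can be restated verbatim with the permutations $\{\sigma_v\}$ included as independent additional randomness), and that no information about unused indices leaks through pair or degree queries (it does not, since degrees are identical across $G_1$ and $\mG_2$ and pair queries depend only on edge sets). Once this bookkeeping is in place, the corollary follows.
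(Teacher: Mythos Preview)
Your proposal is correct and follows essentially the same route as the paper: reduce the standard (indexed) neighbor-query model to the random new-neighbor model by randomizing the adjacency-list orderings, then invoke Theorems~\ref{n over delta third} and~\ref{m 3/2 over delta}. The paper frames the simulation in the same direction (given $\text{ALG}'$ with indexed neighbor queries, build $\text{ALG}$ with random new-neighbor queries) and also assigns random unused labels $i,j$ to each edge discovered via a positive pair query; you should add this piece of bookkeeping, since under your random-permutation model an indexed query $(v,i)$ with a fresh index may return a vertex already known to be adjacent to $v$ from an earlier pair query, whereas a random new-neighbor query on $v$ never will --- this is easily patched (have $\text{ALG}'$ return such a known-but-unindexed neighbor directly with the appropriate probability instead of issuing a query), and with it your argument coincides with the paper's.
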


\begin{proof}
Assume towards a contradiction that there exists an algorithm ALG' for which the following holds:
\begin{enumerate}
\item ALG' is allowed to ask neighbor queries as well as degree queries and pair queries.
\item ALG' asks $Q'$ queries.
\item ALG' outputs a $(1\pm\eps)$-approximation to the number of triangles of any graph $G$ with probability greater than $2/3$.
\end{enumerate}
Using ALG' we can define an algorithm ALG that is allowed random new-neighbor queries, performs at most $Q=3Q'$ queries and answers correctly with the same probability as ALG' does. ALG runs ALG' and whenever ALG' performs a query $q'_t$, ALG does as follows:
\begin{itemize}
\item If $q'_t$ is a degree query, ALG performs the same query and sets $a'_t=a_t$.
\item If $q'_t$ is a pair query $(u,v)$, then ALG performs the same query $q=q'$. Let $a_t$ be the corresponding answer.
\begin{itemize}
\item If $a_t=0$, then ALG sets $a'_t=a_t$.
\item If $a_t=1$, then ALG sets $a'_t=(a_t,i,j)$, such that $i$ and $j$ are randomly chosen labels that have not been previously used for neighbors of $u$ and $v$, and are within the ranges $\left[1..d(u)\right]$ and $\left[1..d_v\right]$ respectively.
\end{itemize}
\item If $q'_t$ is a neighbor query $(u,i)$, ALG performs a random new-neighbor query $q_t=u$, and returns the same answer $a'_t=a_t$.
\end{itemize}
We note that the above requires the algorithm ALG to store for every vertex $v$, all the labels used for its neighbors in the previous steps. Once ALG' outputs an answer, ALG outputs the same answer. It follows that ALG performs at most $3Q$ queries to the graph $G$. By the third assumption above, ALG outputs a $(1\pm\eps)$-approximation to the number of triangles of any graph $G$ with probability greater than $2/3$. If $Q' \notin \Omega\left(\frac{n}{\tr(G)^{1/3}} + \min\left\{m, \frac{m^{3/2}}{\tr(G)}\right\}\right)$ then $Q \notin \Omega\left(\frac{n}{\tr(G)^{1/3}} + \min\left\{m, \frac{m^{3/2}}{\tr(G)}\right\}\right)$ which is a contradiction to Theorem~\ref{n over delta third} and Theorem~\ref{m 3/2 over delta}.
\end{proof}

\fi

		\bibliographystyle{alpha}
		\bibliography{triangles_bib}
		
	\end{document}